\newmdenv[linecolor=black, linewidth=1pt]{factbox}
\newcommand{\lambdaMax}[1]{\lambda_{\mathrm{max}}\mleft(#1\mright)}
\newcommand{\knote}[1]{{\color{brown} (Kunal: #1)}}
\newcommand{\jnote}[1]{{\color{teal} (James: #1)}}
\newcommand{\anote}[1]{{\color{orange} (Anuj: #1)}} 
\newcommand{\enote}[1]{{\color{blue} (Eunou: #1)}}  
\newcommand{\ratioUB}{0.839512}
\newcommand{\onote}[1]{{\color{magenta} (Ojas: #1)}}
\newcommand{\lnote}[1]{{\color{purple} (Lennart: #1)}}
\let\oldnl\nl
\newcommand{\nonl}{\renewcommand{\nl}{\let\nl\oldnl}}
\newcommand{\td}[1]{\widetilde{#1}}
\newcommand{\defeq}{\stackrel{\mathrm{\scriptscriptstyle def}}{=}}
\newcommand{\of}[1]{\left( #1 \right)}
\newcommand{\setFunct}[2]{\left\{ #1 \, \middle| \, #2 \right\}}
\newcommand{\ofc}[1]{\left\{ #1 \right\}}
\newcommand{\ofb}[1]{{\left[#1\right]}}
\newcommand{\imagUnit}{\mathbf{i}}
\newcommand{\StoqMA}{\class{StoqMA}}
\newtheorem{definition}{Definition}
\newtheorem{claim}{Claim}
\newtheorem{lemma}{Lemma}
\newtheorem{theorem}{Theorem}
\newtheorem{corollary}{Corollary}
\begin{document}
\title{A 0.8395-approximation algorithm for the EPR problem}
\author{ 
  Anuj Apte\thanks{Global Technology Applied Research, JPMorganChase} \quad
  Eunou Lee\thanks{Korea Institute for Advanced Study} \quad
  Kunal Marwaha\thanks{University of Chicago}\footnotemark[3] \\
  Ojas Parekh\thanks{Sandia National Laboratories} \quad
  Lennart Sinjorgo\thanks{CentER, Department of Econometrics and OR, Tilburg University} 
    \quad
  James Sud\footnotemark[3]
  \thanks{\texttt{jsud@uchicago.edu}}
}

\date{\today}
\maketitle

\begin{abstract}
    We give an efficient 0.8395-approximation algorithm for the EPR Hamiltonian. Our improvement comes from a new nonlinear monogamy-of-entanglement bound on star graphs and a refined parameterization of a shallow quantum circuit from previous works. We also prove limitations showing that current methods cannot achieve substantially better approximation ratios, indicating that further progress will require fundamentally new techniques.
\end{abstract}

\section{Introduction}\label{sec:intro}

Consider a graph $G=(V,E,w)$ on vertices $V \defeq \ofb{n}$, with edges $E\subseteq V\times V$ and edge weights $w \in \mathbb{R}_{> 0}^E$. The \emph{EPR problem}, as introduced by \cite{king2023}, is to find the maximum eigenvalue of the following Hamiltonian:
\begin{equation}
\label{eqn_eprHamiltDef}    
\begin{aligned}
     H(G) &\defeq \sum_{(i,j) \in E} w_{ij} h_{ij}\,, \\
     h_{ij} &\defeq \frac{1}{2} (I_iI_j +X_iX_j - Y_iY_j +Z_iZ_j)\,.
\end{aligned}
\end{equation}
Note that the off-diagonal elements of $h_{ij}$ are all positive in the computational basis, so the decision version of the EPR problem on a graph with positive weights is in $\StoqMA$ \cite{bravyi2006}. We do not know whether it is in $\P$. 
We study $\alpha$-approximation algorithms for the EPR problem. Such an algorithm $\mathcal{A}$ inputs a graph $G$, and outputs a value $\mathcal{A}(G)$ such that $\alpha \cdot \lambda_{\max}(H(G)) \le \mathcal{A}(G) \le \lambda_{\max}(H(G))$. An efficient $\alpha$-approximation algorithm was first shown in \cite{king2023} for $\alpha > 0.7071$. This was improved to $\alpha = 0.72$ in \cite{jorquera2024}, and later to $\alpha > 0.8090$ in \cite{apte2025, ju2025}.  Our main result is a further improvement of $\alpha$ to $> 0.8395$:
\begin{theorem}
\label{thm:apx_ratio}
    There is an efficient $\alpha$-approximation algorithm for the EPR problem for $\alpha > 0.8395$.
\end{theorem}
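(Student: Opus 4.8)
The plan is to follow the standard rounding-based paradigm for these Hamiltonian problems, but with two new ingredients: a sharper monogamy-of-entanglement bound and a more carefully parameterized shallow circuit. First I would set up a relaxation of $\lambda_{\max}(H(G))$ that is efficiently computable — most naturally, either the basic SDP relaxation (as in the Gharibian--Parekh style analyses) or the ``star-based'' relaxation that assigns to each vertex a local reduced state consistent with pairwise marginals. The key structural fact to exploit is that $h_{ij}$ is a rank-one projector onto an EPR-like state on the edge $(i,j)$, so no single qubit can be maximally entangled with all of its neighbors simultaneously; quantifying this tradeoff is the monogamy bound. I would prove a \emph{nonlinear} monogamy inequality on star graphs: if a central vertex shares edges with leaves, the sum of the $h$-values on those edges is bounded by some concave function of the individual per-edge contributions, strictly improving on the naive linear bound that previous works used.

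The second ingredient is the algorithmic upper side: I would take the shallow (constant-depth) quantum circuit ansatz from \cite{apte2025,ju2025} — typically a product of two-qubit gates guided by the relaxation solution, possibly preceded by a layer that prepares approximate EPR pairs on a matching — and re-parameterize the rotation angles, optimizing them against the worst case permitted by the monogamy bound. Concretely, the analysis becomes a small low-dimensional optimization problem: for each edge, the expected energy the circuit extracts is a function of the local relaxation values and the circuit parameters, and one must show that the minimum over adversarial inputs (subject to the star monogamy constraint) of the ratio (circuit energy)/(relaxation value) is at least $0.8395$. I would carry this out by: (1) stating and proving the new star-graph monogamy lemma; (2) describing the circuit and its per-edge energy formula; (3) reducing the approximation guarantee to a finite-dimensional min-max; and (4) solving that optimization, choosing the circuit parameters (this is where the specific constant $0.8395$, presumably $\approx \ratioUB$, comes out).

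The main obstacle I expect is step (1) combined with step (3): proving a monogamy bound that is simultaneously \emph{tight enough} to beat $0.8090$ and \emph{simple enough} (e.g. depending only on a one- or two-parameter summary of each edge's local state) that the downstream min-max remains tractable. A fully general monogamy statement over arbitrary reduced density matrices is unwieldy; the trick will be identifying the right scalar parameters — plausibly something like the single-qubit Bloch-vector length at the central vertex, or the ``fraction of an EPR pair'' each edge carries — such that (a) the star constraint is a clean inequality in those parameters, and (b) the circuit's extracted energy is a nice function of them. A secondary subtlety is making sure the circuit that is optimal for the star-reduced problem can be assembled consistently on the whole graph (the gates on overlapping stars must not conflict), which is typically handled by the product/matching structure of the ansatz but needs to be checked. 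Finally, I would complement the algorithm with the claimed \emph{limitations}: exhibiting instances (small stars or specific gadgets) on which both the relaxation value and any circuit of this restricted form leave a gap, certifying that $\approx 0.8395$ is essentially the ceiling for this family of techniques.
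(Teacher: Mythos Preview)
Your high-level plan matches the paper's approach closely: level-$2$ moment-SoS relaxation, a new nonlinear monogamy bound on stars, the same depth-$1$ circuit as \cite{king2023,apte2025}, and a worst-case-edge min--max. The star monogamy lemma you anticipate is exactly \cref{lem:degree_moe}. The ansatz gates $\exp\bigl(\tfrac{\imagUnit\theta_{ij}}{4}(X_i-Y_i)(X_j-Y_j)\bigr)$ all commute, so your worry about assembling the circuit consistently across overlapping stars (and the speculation about a matching layer) is unnecessary.

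The genuine gap is in how you bridge steps (1) and (3), which you yourself flag as the main obstacle. The per-edge energy from \cref{lem:king_simplification} involves $A_{ij}=\prod_{k\in N(i)\setminus\{j\}}\cos\nu(g_{ik})$, a product over an \emph{unbounded} number of neighbors, so the min--max is not a priori finite-dimensional. The paper's resolution is not to find a cleverer scalar summary of the local state (your Bloch-vector-length guess goes in the wrong direction) but to restrict the \emph{parameterization} $\nu$ itself: write $\nu(x)=\arcsin\sqrt{\Theta(x^+)}$ for $\Theta$ in the class $\mathcal{C}$ of \cref{def:theta_set}, whose defining property $\prod_i(1-\Theta(x_i))\ge 1-\Theta(\sum_i x_i)$ converts the product $A_{ij}$ into a bound depending only on $\sum_{k}g_{ik}^+$ --- precisely the quantity the star MoE controls. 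This collapses the adversary to one or two scalar variables per edge and makes the min--max in \cref{claim:all_cases} a genuinely low-dimensional problem. Without this ingredient (or an equivalent), your step (3) does not obviously go through, and this is exactly where prior work stalled at $0.809$.

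A smaller point: the paper keeps $\theta_{ij}$ a function of $g_{ij}$ \emph{alone}, and \cref{lem:limit/analysis} shows that under worst-case-edge analysis this is already essentially optimal at $0.8395$; so enriching the per-edge parameter, as you contemplate, cannot help within this analysis framework.
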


\subsection{Algorithm}
Our algorithm uses the same quantum circuit structure (ansatz) as previous approximation algorithms for the EPR problem~\cite{anshu2020,king2023,apte2025}. This ansatz does the following:
\begin{enumerate}
    \item Solve an efficient relaxation of the EPR problem to obtain values $g_{ij}$ for each edge $(i,j)$ in $G$. 
    \item Apply a depth-$1$ quantum circuit parameterized by a set of angles $\{\theta_{ij}\}_{(i,j) \in E}$. The angles are chosen as a deterministic function of $\{g_{ij}\}_{(i,j) \in E}$.
\end{enumerate}
For convenience, we assume that $\theta_{ij}$ depends \emph{only} on $g_{ij}$; i.e. $\theta_{ij} = \nu(g_{ij})$ for some function $\nu$. We choose values $\ofc{g_{ij}}_{(i,j)\in E}$ by solving the quantum moment sum-of-squares (moment-SoS) hierarchy defined in \cref{sec:sos/def}. We state this ansatz formally as \Cref{alg:epr}:

\begin{algorithm}[ht]
\caption{EPR approximation ansatz}\label{alg:epr}
    \vspace{1ex}
    \nonl \textbf{Input:} weighted graph  $G(V, E, w)$ and function $\nu: \ofb{-1,1} \rightarrow \ofb{0,1}\,$\vspace{1ex}\\
    Solve the level-$2$ quantum moment-SoS hierarchy (\cref{eq:epr_sdp}) to obtain $\ofc{g_{ij}}_{(i,j)\in E}\,$. \\
    Output the state
    \begin{align}\label{eq:epr_chi}
        \ket{\psi_G} \defeq \prod_{(i,j) \in E} \exp\of{\frac{\imagUnit \,\theta_{ij}}{4} (X_i - Y_i)\otimes(X_j - Y_j)} \ket{0}^{\otimes n}\,,
    \end{align}
    where $\imagUnit$ is the imaginary unit and $\theta_{ij} = \nu(g_{ij})$.
\vspace{.5em}
\end{algorithm}

\Cref{alg:epr} comes with a classically computable lower bound on its average energy:
\begin{lemma}[{\cite[Lemma 9]{king2023}}]
\label{lem:king_simplification}
\Cref{alg:epr} prepares a state with energy at least
\begin{align}\label{eq:single_edge_energy}
    \bra{\psi_G} H(G)\ket{\psi_G} \ge \ell(G) \defeq \sum_{(i,j) \in E} w_{ij} \cdot \frac{1 + A_{ij} A_{ji} + (A_{ij} + A_{ji}) \sin \nu(g_{ij})}{2}\,,
\end{align}
where 
$
A_{ij} \defeq \prod_{k \in N(i)\setminus\{j\}} \cos \nu(g_{ik})\,
$ and $N(i)$ is the set of neighbors of $i$ in $G$. Furthermore, the inequality is an equality for all triangle-free graphs.
\end{lemma}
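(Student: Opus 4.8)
The plan is to evaluate $\bra{\psi_G} h_{ij}\ket{\psi_G}$ edge by edge and sum against the weights $w_{ij}$, using two structural features of the ansatz. First, put $P \defeq \tfrac{1}{\sqrt2}(X-Y)$, so that $P^2 = I$ and $P\ket{0}$ equals a unit scalar times $\ket{1}$; then $(X_i-Y_i)\otimes(X_j-Y_j) = 2 P_i P_j$, and each gate is $R_{ij} \defeq \exp\!\bigl(\tfrac{\imagUnit\theta_{ij}}{2} P_i P_j\bigr) = \cos\tfrac{\theta_{ij}}{2}\, I + \imagUnit\sin\tfrac{\theta_{ij}}{2}\, P_i P_j$. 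Since $\bigl[P_aP_b, P_bP_c\bigr] = 0$ (using $P_b^2 = I$) and gates on disjoint qubits commute, all the gates pairwise commute, so the order of the product defining $\ket{\psi_G}$ is immaterial. Second is a locality reduction: conjugating $h_{ij}$ by a gate $R_{ik}$ incident to $i$ produces only $I_k$ or $P_k$ on qubit $k$, because writing each of $X_i,Y_i,Z_i$ as (its part commuting with $P_iP_k$) $+$ (its part anticommuting with $P_iP_k$), the conjugation fixes the former and multiplies the latter by $P_k$ (times some single-qubit operator on $i$). Hence, after conjugating $h_{ij}$ by every gate touching $i$ or $j$, the result acts on each qubit $m\in\bigl(N(i)\cup N(j)\bigr)\setminus\{i,j\}$ only through $I_m$ or $P_m$; every remaining gate then commutes with it and cancels against $\ket{0}^{\otimes n}$ and its dual. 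Therefore $\bra{\psi_G} h_{ij}\ket{\psi_G} = \bra{0}^{\otimes n} U_{ij}^\dagger\, h_{ij}\, U_{ij}\ket{0}^{\otimes n}$, where $U_{ij}$ is the product of exactly the gates incident to $i$ or $j$.

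Next I would compute this reduced expectation using $h_{ij} = 2\ket{\Phi^+}\bra{\Phi^+}$ on qubits $i,j$, with $\ket{\Phi^+} = \tfrac{1}{\sqrt2}(\ket{00}+\ket{11})$, so that the energy equals $2\sum_e \abs{\braket{\Phi^+ | w_e}}^2$, where $\ket{w_e}$ is the unnormalized state of qubits $i,j$ conditioned on the remaining qubits of $N(i)\cup N(j)$ being in computational basis state $e$. Expanding each gate into its $\cos$ branch (``do nothing'') and its $\sin$ branch (``flip both endpoints''): a gate $R_{ik}$ with $k\in N(i)\setminus\{j\}$ taking its $\sin$ branch flips qubit $i$ and places qubit $k$ in $\ket{1}$, so starting from the $\ket{00}$ (resp.\ $\ket{11}$) produced by the $\cos$ (resp.\ $\sin$) branch of $R_{ij}$, the $\ket{00}/\ket{11}$ content that survives on qubits $i,j$ is nonzero only when the number of $\sin$-branch gates at $i$ and the number at $j$ have the same parity, in which case an odd count contributes $\tfrac12(1-\sin\theta_{ij})$ in place of $\tfrac12(1+\sin\theta_{ij})$. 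Collecting terms over the subset $S\subseteq N(i)\setminus\{j\}$ and $T\subseteq N(j)\setminus\{i\}$ of flipped neighbors, and using $\sum_S \prod_{k\in S}\sin^2\tfrac{\theta_{ik}}{2}\prod_{k\notin S}\cos^2\tfrac{\theta_{ik}}{2} = 1$ together with $\sum_S (-1)^{|S|}\prod_{k\in S}\sin^2\tfrac{\theta_{ik}}{2}\prod_{k\notin S}\cos^2\tfrac{\theta_{ik}}{2} = \prod_k \cos\theta_{ik} = A_{ij}$, the entire sum collapses to $\tfrac12\bigl(1 + A_{ij}A_{ji} + (A_{ij}+A_{ji})\sin\theta_{ij}\bigr)$. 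For triangle-free $G$ the sets $N(i)\setminus\{j\}$ and $N(j)\setminus\{i\}$ are disjoint, $U_{ij}$ is a genuine ``double star'' on the edge $(i,j)$, and this is the exact value, giving the claimed equality.

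In a general graph the only new ingredient is a common neighbor $m$ of $i$ and $j$, i.e.\ a triangle through $(i,j)$: now $R_{im}$ and $R_{jm}$ both act on qubit $m$, ``flipping $m$ twice'' returns it to $\ket{0}$, and environment states that were distinct in the triangle-free count become identified, so each $\abs{\braket{\Phi^+ | w_e}}^2$ picks up cross terms. I would finish by checking that these cross terms never pull the expectation below $\tfrac12\bigl(1 + A_{ij}A_{ji} + (A_{ij}+A_{ji})\sin\theta_{ij}\bigr)$ — for a single shared neighbor they turn out to be purely imaginary and drop out, and in general one shows the extra contributions assemble into a nonnegative remainder — after which summing $w_{ij}$ times this bound over $(i,j)\in E$ yields $\ell(G)$. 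I expect this last step, controlling the interference created by triangles through $(i,j)$, to be the main obstacle; everything else is routine Pauli bookkeeping. (One may instead simply invoke the computation of \cite[Lemma 9]{king2023}, which the above reproduces.)
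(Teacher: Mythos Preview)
The paper does not supply its own proof of this lemma; it is stated as a direct quotation of \cite[Lemma~9]{king2023}. So there is no ``paper's approach'' to compare against --- your proposal is effectively a re-derivation of King's computation.

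Your triangle-free argument is correct and complete. The commutativity of the gates via $P^2=I$, the reduction of the expectation to only the gates $U_{ij}$ incident to $i$ or $j$, the identification $h_{ij}=2\ket{\Phi^+}\!\bra{\Phi^+}$, and the parity-based branch expansion all check out; the identities $\sum_S p_S=1$ and $\sum_S(-1)^{|S|}p_S=A_{ij}$ then collapse the double sum to exactly $\tfrac{1}{2}\bigl(1+A_{ij}A_{ji}+(A_{ij}+A_{ji})\sin\theta_{ij}\bigr)$, as claimed.

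For the inequality in the presence of triangles, your single-shared-neighbor claim is correct --- on $K_3$ the two interfering amplitudes at $m=0$ are $c_1c_2(c+s)$ and $is_1s_2(c-s)$, orthogonal in $\mathbb{C}$, and the energy equals the triangle-free formula exactly. However, ``in general one shows the extra contributions assemble into a nonnegative remainder'' is an assertion, not an argument: with several common neighbors the interference pattern among the $(a_r,b_r)$ branch choices is genuinely more intricate, and establishing nonnegativity of the residual is precisely the content of King's proof. Since you explicitly flag this as the main obstacle and offer the citation as a fallback, the proposal is acceptable, but the inequality half remains a sketch rather than a proof.
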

To fully specify \Cref{alg:epr}, we must choose a function $\nu$. We provide our specific choice of $\nu$ in \cref{thm:apx_ratio_formal}. This function was obtained via numerical search, and we prove the correctness of the approximation ratio analytically in \cref{thm:apx_ratio_formal}. We show in \Cref{sec:limits} that our choice of $\nu$ is essentially optimal.

\subsection{Techniques}

Our improved approximation ratio relies on two crucial insights. First, we prove a new \emph{monogamy of entanglement} (MoE) statement, generalizing the bound of \cite[Lemma 3]{lee2024} from a pair of edges to a star. As with other MoE bounds, we show this statement holds for the EPR problem \textit{and} for its level-$k$ semidefinite relaxation whenever $k \ge 2$. We describe these relaxations in \Cref{sec:sos} and prove the lemma in \cref{apx:omitted_proofs/degree_moe}. 

\begin{lemma}[Nonlinear monogamy of entanglement on a star]\label{lem:degree_moe}
Fix any graph $G=(V,E,w)$ and state $\ket{\psi_G}$. For each edge $(i,j) \in E$, let $g_{ij} = \bra{\psi_G} h_{ij} \ket{\psi_G} - 1$. Then for any vertex $i\in V$ with degree $d_i \geq 2$ and $j \in N(i)$,
\begin{equation}
\label{eq:degree_moe}
\begin{aligned}
    \sum_{k \in N(i) \setminus \ofc{j}} g_{ik} &\le \begin{cases}
        1, \quad &\text{ if } \, -1 \le g_{ij} < -\frac{1}{d_i}, \\
        \frac{1}{2}\of{2-d_i-g_{ij} + \sqrt{\of{d_i^2-1}\of{1-g_{ij}^2}}}, \quad &\text{ if } \,-\frac{1}{d_i} \leq g_{ij} \le 1 \,,
    \end{cases}
\end{aligned}
\end{equation}
where $N(i)$ is the set of neighbors of $i$. Moreover, \cref{eq:degree_moe} holds for any $g$ that is the solution to the level-$k$ semidefinite relaxation of $H(G)$ (defined in \Cref{sec:sos}), for any $k \ge 2$.
\end{lemma}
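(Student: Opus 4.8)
The plan is to prove the stronger statement for the relaxation; the bound for an actual state $\ket{\psi_G}$ then follows, since a true state induces a feasible point of the level-$2$ relaxation and the relaxations are nested, so a bound valid at level $2$ is valid at every level $k\ge 2$. So let $g$ be feasible at level $2$: it assigns values $\widetilde{\mathbb{E}}[P]$ to all Pauli monomials $P$ of degree $\le 4$ consistently with positivity and the Pauli relations, and in particular $\widetilde{\mathbb{E}}[Q^{\dagger}Q]\ge 0$ for all $Q$ of degree $\le 2$. Because \cref{eq:degree_moe} involves only the two-local moments on edges at $i$, I would first normalize the local Hamiltonian: conjugating by $\prod_{k\in N(i)}Y_{k}$ sends each Pauli monomial to $\pm$(one of the same degree), hence preserves feasibility, while replacing every $h_{ik}$ by the antiferromagnetic Heisenberg term $\tfrac12(I-\vec\sigma_{i}\cdot\vec\sigma_{k})$ without changing any $g_{ik}$. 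Setting $t_{k}\defeq\widetilde{\mathbb{E}}[\vec\sigma_{i}\cdot\vec\sigma_{k}]\in[-3,1]$ (so $g_{ik}=-\tfrac12(1+t_{k})$), the target \cref{eq:degree_moe} turns, after elementary algebra, into an equivalent \emph{lower} bound on $\sum_{k\in N(i)\setminus\{j\}}t_{k}$ in terms of $t_{j}$ and $d_{i}$.

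On the flat branch $g_{ij}<-1/d_{i}$ the claim reduces to the unconditional star bound $\sum_{k\in N(i)\setminus\{j\}}g_{ik}\le 1$, applied to the sub-star on $\{i\}\cup(N(i)\setminus\{j\})$; this comes from the fact that a Heisenberg (equivalently EPR) star with $m$ leaves has largest eigenvalue $m+1$ --- via the identity $\vec\sigma_{i}\cdot\vec S = J(J+1)-\tfrac34-S(S+1)$ after decomposing into total-spin sectors --- together with a sum-of-Hermitian-squares certificate of the corresponding operator inequality of the degree captured by the level-$2$ moment matrix. For the curved branch $g_{ij}\ge-1/d_{i}$ I would, for each weight $a>0$, consider the reweighted Heisenberg Hamiltonian of the sub-star
\[
 M_{a}\;\defeq\;a\,\vec\sigma_{i}\cdot\vec\sigma_{j}+\sum_{k\in N(i)\setminus\{j\}}\vec\sigma_{i}\cdot\vec\sigma_{k},
\]
and use $\widetilde{\mathbb{E}}[M_{a}]\ge\lambda_{\min}(M_{a})$ --- again valid for the relaxation once $\lambda_{\min}(M_{a})I-M_{a}$ is written as a sum of Hermitian squares of low-degree Paulis --- to deduce $\sum_{k\ne j}t_{k}\ge\lambda_{\min}(M_{a})-a\,t_{j}$. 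Taking the best such affine bound over $a$ should reproduce the right-hand side of \cref{eq:degree_moe}: the square-root term is the concave conjugate of $a\mapsto\lambda_{\min}(M_{a})$, matching $\sqrt{PQ}=\min_{s>0}\tfrac12(sP+Q/s)$ with $P=d_{i}^{2}-1$ and $Q=1-g_{ij}^{2}$.

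Everything then hinges on computing $\lambda_{\min}(M_{a})$ exactly. The $d_{i}-1$ equally weighted leaves couple to $i$ only through their total spin $\vec S$ (with $S\le\tfrac{d_{i}-1}{2}$), so $M_{a}=a\,\vec\sigma_{i}\cdot\vec\sigma_{j}+2\,\vec\sigma_{i}\cdot\vec S$; block-diagonalizing by $S$ and then by the total spin of $\{i,j\}$ together with the leaf cloud, the Wigner--Eckart theorem collapses $M_{a}$ within each sector to a scalar multiple of $\vec s_{i}\cdot\vec T$ for a composite spin $\vec T$, whose extreme eigenvalues are elementary. I expect the minimum over sectors to be attained at the maximal leaf-cloud spin $S=\tfrac{d_{i}-1}{2}$, which is exactly what produces the factor $\sqrt{d_{i}^{2}-1}$. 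The main obstacle is twofold. First, the leaf--leaf correlations $\widetilde{\mathbb{E}}[\vec\sigma_{k}\cdot\vec\sigma_{l}]$ do not appear in \cref{eq:degree_moe}, so one must argue that the extremal feasible point aligns the leaves maximally; this is where convexity of the feasible set and permutation symmetry among $N(i)\setminus\{j\}$ --- which let one assume all $g_{ik}$ with $k\ne j$ are equal --- come in. Second, and more delicately, a spectral bound on $M_{a}$ does not by itself transfer to the SDP: one must exhibit, for every relevant $a$ and for the flat-branch operator, an explicit sum-of-Hermitian-squares decomposition of $\lambda_{\min}(M_{a})I-M_{a}$ using only Pauli monomials of degree $\le 2$, so that it is certified by every level-$k\ge 2$ relaxation. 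Finally one checks that the two branches meet at $g_{ij}=-1/d_{i}$ with common value $1$ and common derivative $0$, so the envelope is exactly the piecewise formula claimed.
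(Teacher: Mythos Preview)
Your approach differs substantially from the paper's. You propose to obtain the bound as the upper envelope of the family of affine inequalities $L(M_a)\ge\lambda_{\min}(M_a)$ over weighted star Hamiltonians $M_a$, with the spectrum computed by angular-momentum addition. The paper never computes the spectrum of any Hamiltonian. It works directly with the moment matrix: after passing to QMC via \cref{lem:bipartite_equivalence}, it forms the submatrix $\widetilde M$ indexed by $\{I,\Pi_{ij}\}\cup\{\Pi_{ik}:k\in N(i)\setminus\{j\}\}$ (singlet projectors), normalized to have unit diagonal, and invokes the entrywise bound $|\widetilde M(\Pi_{ik},\Pi_{i\ell})|\le 1/2$ from \cite[Lemma~4.3]{takahashi2023}. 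Writing $\widetilde M$ in block form with the $\{I,\Pi_{ij}\}$ block singled out, positivity of the Schur complement yields a quadratic inequality in $\alpha\defeq\big(\sum_{k\ne j}(1+g_{ik})/2\big)^{1/2}$; bounding the leaf--leaf block by Gershgorin ($\lambda_{\max}\le d_i/2$) and the $\Pi_{ij}$--leaf row entrywise ($\lVert u\rVert_2^2\le(d_i-1)/4$), then maximizing over the one remaining free parameter, produces the stated formula --- a pure matrix-analysis argument that never leaves the level-$2$ moment picture.

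The genuine gap in your proposal is exactly the one you flag but do not close: for the spectral inequality to hold at level $2$ you need $M_a-\lambda_{\min}(M_a)I$ to be a sum of Hermitian squares of degree-$\le 2$ Paulis, and you offer no construction. For $d_i=2$ this is the content of \cite{lee2024}, but for larger $d_i$ and generic $a$ it is not at all clear, and producing a family of certificates parametrized by $a$ would itself be a result comparable in difficulty to the lemma. Your Wigner--Eckart step is also too quick: $M_a=4\,\vec s_i\cdot(a\,\vec s_j+\vec S)$ is not a scalar within a fixed total spin $T$ of $\vec s_j+\vec S$, since $a\,\vec s_j+\vec S$ has nonzero matrix elements between the $T=S\pm\tfrac12$ sectors; one must diagonalize a $2\times2$ block in each total-$J$ sector and then still verify that the envelope over $a$ reproduces the claimed closed form. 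The paper's Schur-complement route sidesteps all of this, with the only external input being the $1/2$ bound on projector correlations, which itself has a short level-$2$ proof.
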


Our second insight follows from \cite{gribling2025}. We choose the parameterization function $\nu$ from a set $\mathcal{C}$ that dramatically simplifies the approximation ratio analysis:
\begin{definition}\label{def:theta_set}
    Let $\mathcal{C}$ be the set of monotonically increasing functions $\Theta: [0,1] \to [0,1]$ where $\Theta(0) = 0$, and for all $x_1,x_2,\ldots,x_p \in \ofb{0,1}$ satisfying $\sum_{i=1}^p x_i \le 1$, we have
        \begin{align}
        \label{eqn:theta_condition}
            \prod_{i=1}^p \of{1-\Theta(x_i)} \ge 1- \Theta\of{\sum_{i=1}^p x_i}.
        \end{align}
\end{definition}
We sketch how the set $\mathcal{C}$ simplifies our analysis. Suppose $\nu(x) = \arcsin \sqrt{ \Theta(x^+)}$ for some $\Theta \in \mathcal{C}$, where we use the notation $x^+\defeq \max\{x,0\}$. Then, the energy from \cref{lem:king_simplification} becomes
\begin{align*}
A_{ij} = \prod_{k \in N(i) \setminus \{j\}} \sqrt{1 - \Theta(g_{ik}^+)} \ge \sqrt{1 - \Theta\of{\sum_{k \in N(i) \setminus \{j\}} g_{ik}^+}}\,.
\end{align*}
Through our parameterization of $\nu$, we have converted a product of trigonometric functions to a sum over $g^+$ values on neighboring edges. We may then directly apply MoE bounds on these $g$ values to lower bound $A_{ij}$. 

Finally, in \cref{sec:limits}, we discuss natural limitations of our ansatz and analysis. For example, one step of our analysis considers the approximation ratio on the \emph{worst-case} edge of a graph. We show that under this worst-case edge analysis, our choice of $\nu$ is essentially optimal. To obtain better approximation ratios, each angle $\theta_{ij}$ in \cref{alg:epr} must depend on more than just $g_{ij}$, or the analysis must avoid reducing to the worst-case edge.

\section{Semidefinite relaxation of the EPR problem}\label{sec:sos}
To prove the approximation ratio, we find two numbers $l(G), u(G)$ that depend on the input graph, such that $u(G) \ge \lambda_{\max}(H(G))\ge \langle \psi _G|H(G)|\psi_G\rangle \ge l(G) \ge 0$. We show $\ell(G)\ge \alpha\cdot u(G)$, which in turn gives $\langle \psi _G |H(G)|\psi_G\rangle \ge \alpha \cdot\lambda_{\max}\of{H(G)}$, proving the approximation ratio. 
For the lower bound, we use \cref{lem:king_simplification}. We construct $u$ by upper-bounding a relaxation of the EPR problem given by the \textit{quantum moment-SoS hierarchy} \cite{navascues2008}, which we briefly introduce here. For more detailed descriptions, see \cite{gharibian2019, king2023, marwaha2025}. The idea of using a semidefinite program to upper bound an objective function on graphs goes back to the work of Goemans and Williamson on MaxCut problem \cite{goemans1995}.

\subsection{Defining the relaxation}\label{sec:sos/def}
Consider the Pauli monomials on $n$ qubits with at most $k$ non-identity terms:
\begin{align}
\label{eqn_mathcalPK}
    \mathcal{P}_k \defeq \Big\{ \sigma_{i_1}^{\alpha_1} \cdots \sigma_{i_t}^{\alpha_t} \,\big|\, t \leq k, \, \alpha_j \in \{X, Y, Z\}, \,  1 \leq i_1 < \dots < i_t \leq n \Big\}\,,
\end{align}
as well as their span with respect to real coefficients:
\begin{align*}
    \mathcal{O}_k \defeq \mathrm{span}_{\mathbb{R}}\Big\{ \mathcal{P}_k \Big\}\,.
\end{align*}
The $k$th level of the quantum moment-SoS hierarchy is defined with respect to $\mathcal{M}_k$, which is the set of real symmetric \emph{moment matrices}, $\Gamma \in \mathbb{R}^{\mathcal{P}_k \times \mathcal{P}_k}$ (the notation $\mathbb{R}^S$ refers to a real vector indexed by elements of set $S$) satisfying:
\begin{align}
& \Gamma \succeq 0, \label{eq:epr_psd}\\
& \Gamma(A,B) = \Gamma(A',B') &&\forall\, A,B,A',B' \in \mathcal{P}_k: AB = A'B', \label{qwqwpero}\\
& \Gamma(A,B) = -\Gamma(A',B') &&\forall\, A,B,A',B' \in \mathcal{P}_k : AB = -A'B', \label{qwqwpero2}\\
& \Gamma(A,B) = 0 &&\forall\,A,B \in \mathcal{P}_k: AB \text{ not Hermitian}, \nonumber\\
& \Gamma(A,A) = 1 &&\forall\, A \in \mathcal{P}_k. \nonumber
\end{align}
For convenience, we define the real linear functional 
$L : \mathcal{O}_{2k} \to \mathbb{R}$ that satisfies  $L(C) = \Gamma(A,B)$ whenever $C = AB$  with $A, B \in \mathcal{P}_k$. The equality constraints of $\mathcal{M}_k$ ensure that $L(C)$ is well defined, and \cref{eq:epr_psd} is equivalent to $L(A^2) \geq 0$ for all $A \in \mathcal{O}_k$ (see \cite[Lemma 1.44]{burgdorf2016optimization}). The value $L(C)$ is also called the \emph{pseudo-expectation} of the operator $C \in \mathcal{O}_{2k}$.

The $k$th level of the quantum moment-SoS hierarchy is then given by the following semidefinite program (SDP)
\begin{equation}
\label{eq:epr_sdp}
\begin{aligned}
\max \quad & L(H(G)),  \\
\text{s.t.} \quad 
& \Gamma \in \mathcal{M}_k. 
\end{aligned}
\end{equation}
From the output of this SDP, we obtain the values
\begin{align}
    g_{ij} \defeq \frac{-1+L(X_iX_j)-L(Y_iY_j)+L(Z_iZ_j)}{2}, \quad g_{ij}^+ \defeq \max\{g_{ij},0\}\,, \label{eq:sos_g_values}
\end{align}
where $1 + g_{ij}$ is the relaxed objective value on edge $(i,j)$. It then holds that 
\begin{align}
    u(G) \defeq \sum_{(i,j) \in E} w_{ij} (1 + g_{ij}), \label{eq:upper_bound}
\end{align}
is an upper bound for $\lambdaMax{H(G)}$.

\subsection{Monogamy of entanglement}\label{sec:sos/moe}

A key element for designing approximation algorithms for the EPR problem is \emph{monogamy of entanglement} (MoE) \cite{anshu2020, parekh2021a, king2023}. These statements bound the sum of energies (from either the original problem or its relaxation) on neighboring edges. Most research concerns MoE for the Quantum MaxCut (QMC) problem \cite{anshu2020, parekh2021a, parekh2022}, which is distinct from the EPR problem. Specifically, given an edge weighted graph $G= (V,E,w)$, the QMC problem is to find the maximum eigenvalue of the Hamiltonian
\newcommand{\qmcHamilt}{H^{\mathrm{QMC}}(G)}
\newcommand{\qmcLocHamilt}{h_{ij}^{\mathrm{QMC}}}
\begin{align*}
    \qmcHamilt \defeq \sum_{(i,j)\in E} w_{ij} \qmcLocHamilt{}, \text{ for } \qmcLocHamilt \defeq \frac{1}{2}\left( I_i I_j - X_i X_j - Y_i Y_j - Z_i Z_j \right),
\end{align*}
similar to \cref{eqn_eprHamiltDef}. For bipartite graphs, the QMC and EPR problems are equivalent \cite{king2023}. Here, we show that on bipartite graphs, the SDP relaxations of QMC and EPR are also equivalent. To show this, note that
\begin{align}
\label{iqjwper}
    q_{ij} \defeq \frac{-1-L(X_iX_j)-L(Y_iY_j)-L(Z_iZ_j)}{2}
\end{align}
is such that $1+q_{ij}$ is the relaxed objective value on edge $(i,j)$ for the QMC problem. That is, $q_{ij}$ is the QMC analogue of $g_{ij}$, see \cref{eq:sos_g_values}.

\begin{lemma}\label{lem:bipartite_equivalence}
   Let $E$ be the edge set of a bipartite graph and let $k \in \mathbb{N}$. For $g$ and $q$ as in \cref{eq:sos_g_values} and \cref{iqjwper} respectively, define $g(\Gamma)$ and $q(\Gamma)$ as the $g$ and $q$ values induced by some $\Gamma \in \mathcal{M}_k$. We have that
   \newcommand{\setEqGQ}[1]{\setFunct{ \left\{ {#1}(\Gamma)_{ij} \right\}_{(i,j) \in E} }{ \Gamma \in \mathcal{M}_k}}
   \begin{align}
        \setEqGQ{g} = \setEqGQ{q}. 
   \end{align}
\end{lemma}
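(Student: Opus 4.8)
The plan is to exhibit an explicit sign-flipping automorphism of the Pauli group that conjugates each EPR local term into the corresponding QMC local term, and to check that this automorphism descends to a bijection (in fact an involution) of the feasible set $\mathcal{M}_k$. Concretely, fix a bipartition $V = V_0 \sqcup V_1$ of the bipartite graph underlying $E$, and let $U \defeq \prod_{j \in V_1} Y_j$. Consider conjugation $\phi(P) \defeq U P U^\dagger$; it sends $X_j \mapsto -X_j$ and $Z_j \mapsto -Z_j$ for $j \in V_1$, fixes $Y_j$ for $j \in V_1$, and fixes every Pauli on $V_0$. Since $\phi$ permutes $\mathcal{P}_k$ up to global signs, respects products and Hermiticity, fixes $I$, and satisfies $\phi^2 = \mathrm{id}$ (because $U^2 = I$), it extends to an involutive linear automorphism of $\mathcal{O}_{2k}$.

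The first step is to verify that precomposition by $\phi$ is a bijection of the feasible set. Working with the linear functional $L$ rather than the matrix $\Gamma$ directly, one sets $L' \defeq L \circ \phi$. Because the constraints defining $\mathcal{M}_k$ have the equivalent functional form $L(A^2) \ge 0$ for all $A \in \mathcal{O}_k$, $L(AB)$ depends only on the product $AB$ (and vanishes when $AB$ is not Hermitian), and $L(I) = 1$, each of these is preserved under precomposition with an algebra automorphism that preserves Hermiticity and the identity; hence $L'$ is again a valid pseudo-expectation, i.e. it corresponds to some $\Gamma' \in \mathcal{M}_k$. The sign bookkeeping (tracking the global $\pm 1$ from $\phi(A) = \pm A''$ with $A'' \in \mathcal{P}_k$, and its interaction with \cref{qwqwpero2}) is the only delicate point, and it is cleanest to carry it out at the level of $L$ on $\mathcal{O}_{2k}$. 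Since $\phi$ is an involution, so is $\Gamma \mapsto \Gamma'$, so this map is a bijection of $\mathcal{M}_k$ onto itself.

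The second step is to compute the effect on the edge data. For any edge $(i,j) \in E$, bipartiteness forces exactly one endpoint, say $j$, into $V_1$, so $\phi(X_iX_j) = -X_iX_j$, $\phi(Z_iZ_j) = -Z_iZ_j$, and $\phi(Y_iY_j) = Y_iY_j$. Therefore $L'(X_iX_j) = -L(X_iX_j)$, $L'(Z_iZ_j) = -L(Z_iZ_j)$, and $L'(Y_iY_j) = L(Y_iY_j)$, and substituting into \cref{eq:sos_g_values} gives $g_{ij}(\Gamma') = \tfrac12\of{-1 - L(X_iX_j) - L(Y_iY_j) - L(Z_iZ_j)} = q_{ij}(\Gamma)$, exactly \cref{iqjwper}. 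Hence $\ofc{q(\Gamma)_{ij}}_{(i,j)\in E} = \ofc{g(\Gamma')_{ij}}_{(i,j)\in E}$ for every $\Gamma \in \mathcal{M}_k$; letting $\Gamma$ range over $\mathcal{M}_k$ and using that $\Gamma \mapsto \Gamma'$ is a bijection of $\mathcal{M}_k$ shows $\setFunct{\ofc{q(\Gamma)_{ij}}_{(i,j)\in E}}{\Gamma \in \mathcal{M}_k} \subseteq \setFunct{\ofc{g(\Gamma)_{ij}}_{(i,j)\in E}}{\Gamma \in \mathcal{M}_k}$, and the reverse inclusion is the same computation (or follows immediately from $\phi^2 = \mathrm{id}$).

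\textbf{Main obstacle.} There is no substantive obstacle here; the content is that a bipartition supplies a local change of basis under which the EPR and QMC local Hamiltonians agree, and that such local basis changes act on the moment-SoS hierarchy as symmetries. The only real care needed is the sign bookkeeping in step one — ensuring the transformed moment matrix is still symmetric and still satisfies the well-definedness required for $L'$ — which is why I would phrase the entire argument in terms of the functional $L$ on $\mathcal{O}_{2k}$, where $\phi$ acts linearly and transparently.
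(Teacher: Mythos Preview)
Your proposal is correct and is essentially the same as the paper's proof: the paper defines $\tilde{\Gamma}(A,B) = (-1)^{|\mathrm{supp}_{x,z}(A) \cap \mathcal{V}|}(-1)^{|\mathrm{supp}_{x,z}(B) \cap \mathcal{V}|}\,\Gamma(A,B)$ for a bipartition side $\mathcal{V}$, which is precisely the moment matrix of your $L' = L\circ\phi$ since conjugation by $\prod_{j\in V_1} Y_j$ multiplies each $A\in\mathcal{P}_k$ by $(-1)^{|\mathrm{supp}_{x,z}(A)\cap V_1|}$. The only presentational difference is that the paper verifies $\tilde{\Gamma}\succeq 0$ by observing it is the Hadamard product of $\Gamma$ with a rank-one PSD sign matrix, whereas you argue via the functional $L$ and the automorphism $\phi$; both routes are straightforward.
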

The proof is deferred to \cref{apx:omitted_proofs/bipartite_equivalence}. \cref{lem:bipartite_equivalence} allows us to take existing MoE bounds for the QMC problem and apply them to the EPR problem. 
For instance, \cref{lem:degree_moe} provides a new MoE bound for the quantum moment-SoS hierarchy for the EPR problem. We show the following simpler corollary of it:
\begin{corollary}\label{cor:moe}
    Fix any graph $G = (V,E,w)$. Then the output $\ofc{g_{ij}}_{(i,j)\in E}\,$ in \cref{eq:sos_g_values} from the $2^{\text{nd}}$ level of the quantum moment-SoS hierarchy 
obeys the following bound for all edges $(i,j) \in E$:
    \begin{align}
    \label{eqn_nonlinearMOE}
        \sum_{k \in N(i) \setminus \ofc{j}} g^+_{ik} &\le Q(g_{ij}^+)\,,
    \end{align}
    where    
    \begin{align}\label{eq:q_def}
        Q(x) &\defeq \begin{cases}
            1-x, \quad &\text{ if } \; 0\le x \le 1/2, \vspace{1ex}\\
            \frac{1}{2}\of{\sqrt{3(1-x^2)}-x}, \quad &\text{ if } \; 1/2 < x \le \sqrt{3}/2, \vspace{1ex} \\
            0,  &\text{ if } \; \sqrt{3}/2 < x \le 1\,.
        \end{cases}
    \end{align}
\end{corollary}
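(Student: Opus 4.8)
The plan is to derive \cref{cor:moe} from the star bound \cref{lem:degree_moe} by removing two features of the latter: it sums $g$ over the \emph{whole} punctured neighborhood of $i$ (whereas \cref{eqn_nonlinearMOE} sums only the positive $g$-values), and it depends on the degree $d_i$ (whereas $Q$ does not). \emph{Step 1 (reduce to the positive neighbors).} Fix $(i,j)\in E$ and set $S\defeq\ofc{k\in N(i)\setminus\{j\}:g_{ik}>0}$, so $\sum_{k\in N(i)\setminus\{j\}}g_{ik}^+=\sum_{k\in S}g_{ik}$. If $S=\varnothing$ the left side of \cref{eqn_nonlinearMOE} is $0\le Q(g_{ij}^+)$ and we are done; so assume $m\defeq|S|\ge1$. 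Since positive semidefiniteness and the linear equality constraints defining a level-$2$ moment matrix all pass to principal submatrices, restricting the (given) level-$2$ moment matrix to the Pauli monomials supported on $T\defeq\{i,j\}\cup S$ yields a feasible level-$2$ moment matrix for the induced subgraph $G[T]$ that induces the same values $g_{ik}$ for $i,k\in T$. In $G[T]$ the neighborhood of $i$ is exactly $\{j\}\cup S$, so $i$ has degree $m+1\ge2$ there.

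\emph{Step 2 (apply the star bound and reduce to scalar inequalities).} Apply \cref{lem:degree_moe} in $G[T]$ to this moment matrix, the vertex $i$, and the edge $(i,j)$, with $d_i=m+1$. Writing $x\defeq g_{ij}$ and $f(m,x)\defeq\tfrac12\of{1-m-x+\sqrt{m(m+2)(1-x^2)}}$ (using $(m+1)^2-1=m(m+2)$), this gives $\sum_{k\in S}g_{ik}\le1$ when $-1\le x<-\tfrac{1}{m+1}$ and $\sum_{k\in S}g_{ik}\le f(m,x)$ when $-\tfrac{1}{m+1}\le x\le1$. Comparing with \cref{eq:q_def}, it remains to check, for every integer $m\ge1$: (a) $f(m,x)\le1$ on $\ofb{-\tfrac{1}{m+1},\,0}$; (b) $f(m,x)\le1-x$ on $\ofb{0,\,\tfrac12}$; (c) $f(m,x)\le f(1,x)=\tfrac12\of{\sqrt{3(1-x^2)}-x}$ on $\ofb{\tfrac12,\,1}$; (d) $f(1,x)<0$ for $x>\tfrac{\sqrt3}{2}$. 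Granting these: for $g_{ij}\le0$, the star bound with (a) gives $\sum_{k\in S}g_{ik}\le1=Q(0)=Q(g_{ij}^+)$; for $0<g_{ij}\le\tfrac12$, (b) gives $\le1-g_{ij}=Q(g_{ij}^+)$; for $\tfrac12<g_{ij}\le\tfrac{\sqrt3}{2}$, (c) gives $\le f(1,g_{ij})=Q(g_{ij}^+)$; and for $g_{ij}>\tfrac{\sqrt3}{2}$, (c)--(d) would force $\sum_{k\in S}g_{ik}\le f(m,g_{ij})\le f(1,g_{ij})<0$, impossible since $\sum_{k\in S}g_{ik}$ is a sum of $m\ge1$ strictly positive terms --- hence $S=\varnothing$ in this last regime and the left side of \cref{eqn_nonlinearMOE} is $0=Q(g_{ij}^+)$. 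This proves \cref{eqn_nonlinearMOE}.

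\emph{Step 3 (the scalar inequalities).} Each reduces, after an elementary rearrangement, to the nonnegativity of a perfect square. For (b), rewrite it as $\sqrt{m(m+2)(1-x^2)}\le1+m-x$ (the right side is $\ge m\ge1>0$), square, and simplify the difference of the two sides to $\of{1-(m+1)x}^2\ge0$. For (a), rewrite as $\sqrt{m(m+2)(1-x^2)}\le1+m+x$ (right side positive when $x\ge-\tfrac{1}{m+1}$), square, and obtain the difference $\of{1+(m+1)x}^2\ge0$. For (c), rewrite as $\sqrt{1-x^2}\,\of{\sqrt{m(m+2)}-\sqrt3}\le m-1$: this is an equality at $m=1$, and for $m\ge2$ bound $\sqrt{1-x^2}\le\tfrac{\sqrt3}{2}$ (valid since $x\ge\tfrac12$) and check $\tfrac{\sqrt3}{2}\of{\sqrt{m(m+2)}-\sqrt3}\le m-1$, which after squaring becomes $\tfrac14(m-1)^2\ge0$. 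Finally (d) is just $\sqrt{3(1-x^2)}<x\iff3(1-x^2)<x^2\iff x^2>\tfrac34$.

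The only genuinely non-mechanical parts are Step 1 --- recognizing that \cref{eqn_nonlinearMOE} must be controlled by the star bound on the subgraph of positively-weighted incident edges, which reduces the effective degree from $d_i$ to $|S|+1$ --- and matching the three pieces of $Q$ to the right ranges of $g_{ij}$. In particular, among integer degrees the $m=1$ star bound $\tfrac12\of{\sqrt{3(1-x^2)}-x}$ is binding only for $g_{ij}\ge\tfrac12$; for smaller $g_{ij}$ it lies strictly below $1-x$, and one must instead use the bound $f(m,x)\le1-x$, which holds uniformly in $m$. Everything else is routine algebra.
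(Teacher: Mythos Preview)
Your proof is correct and follows the same overall strategy as the paper: restrict to the positive neighbors $S$, apply \cref{lem:degree_moe} on the resulting star with degree $|S|+1$, and then remove the degree dependence. The execution differs in two places. First, for $g_{ij}^+\le\tfrac12$ the paper simply cites the linear star bound of Parekh--Thompson together with \cref{lem:bipartite_equivalence}, whereas you derive the needed inequality $f(m,x)\le 1-x$ directly from \cref{lem:degree_moe} via the perfect square $(1-(m+1)x)^2\ge0$; this makes your argument self-contained. Second, for $g_{ij}^+>\tfrac12$ the paper treats $d$ as a continuous variable and shows $\partial P/\partial d\le0$ to conclude the maximum is at $d=2$, while you instead prove the discrete inequality $f(m,x)\le f(1,x)$ by bounding $\sqrt{1-x^2}\le\tfrac{\sqrt3}{2}$ and reducing to $\tfrac14(m-1)^2\ge0$. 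Both routes are short; yours avoids both the external citation and the calculus, at the cost of checking four scalar inequalities rather than one derivative.
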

\begin{proof}

If $g_{ij}^+ \le \frac{1}{2}$, \cref{eqn_nonlinearMOE} follows from \cite{parekh2021a} (see also \cite[Lemma 1]{lee2024}) together with \cref{lem:bipartite_equivalence}. Thus, we assume that $g_{ij}^+ > \frac{1}{2}$.

Let $P$ be the RHS of \cref{eq:degree_moe}, i.e. $ \sum_{k \in N(i) \setminus \ofc{j}} g_{ik} \leq P(g_{ij},d_i)$. We use $N^+$ to describe the subset of $N(i)\setminus \{j\}$ with positive values of $g$; i.e. $N^+ \defeq \setFunct{k \in N(i) \setminus \{j\}}{ g_{ik} > 0}$. 
\begin{itemize}
    \item Suppose $|N^+| = 0$. Then all values of $g$ are non-positive. Since $Q$ is non-negative,
    $$
    \sum_{k \in N(i) \setminus \ofc{j}} g_{ik} \le \sum_{k \in N(i) \setminus \ofc{j}} g^+_{ik} = 0 \leq Q(g_{ij}^+)\,.
    $$
    \item Otherwise, $|N^+| \ge 1$. We apply \Cref{lem:degree_moe} to a star graph centered at $i$ that is a subgraph of $G$, where $i$ is adjacent to $j$ and to all $k \in N^+$:
    $$
    \sum_{k \in N(i) \setminus \ofc{j}} g^+_{ik} 
    = \sum_{k \in N^+} g_{ik} 
    \leq 
    P(g_{ij}, |N^+| + 1)
    \leq 
    \max_{d \in \mathbb{N}, d \geq 2} P(g_{ij}, d)\,.
    $$
   Recall that we assumed $g_{ij}^+ \ge \frac{1}{2}$. When $d \ge 2$ and $x \ge \frac{1}{2}$, 
        $$
        \frac{\partial}{\partial d} P(x,d) = \frac{1}{2} \left(-1 + \frac{d\sqrt{1-x^2}}{\sqrt{d^2-1}} \right) \leq \frac{1}{2}\left(-1 + \frac{\sqrt{3}}{2} \cdot \frac{d}{\sqrt{d^2-1}} \right)\,,
        $$
        which is $\le 0$ when $d \ge 2$. So in this case $\max_{d \in \mathbb{N}, d \geq 2} P(g_{ij}, d) = P(g_{ij}, 2) \le Q(g_{ij}^+)$.
    \qedhere
\end{itemize}
\end{proof}

Furthermore, due to the equivalence of the optimal values of the moment-SoS relaxations for the EPR problem and Quantum MaxCut (QMC) on bipartite graphs, our bounds also apply when $g$ instead refers to the SDP edge value for the moment-SoS relaxation of QMC.

\section{Analysis}\label{sec:analysis}
We use this section to prove \Cref{claim:all_cases}, which lower-bounds the approximation ratio $\alpha$ of \Cref{alg:epr} depending on some parameters. We then choose explicit parameters in \Cref{thm:apx_ratio_formal} that gives $\alpha > 0.8395$, and prove this in \Cref{apx:proof_main_result}. For convenience, we provide a list of important variables and functions, and their uses at the end of this document in \Cref{tab:notation}.

Consider $\nu$ of the following form,
given $\beta > \frac{1}{2}$, function $\Theta \in \mathcal{C}$, and function $\Lambda: [0,1] \to [0,1]$:
\begin{align}
\label{eq:parameterization}
    \nu(x) \defeq   \arcsin \sqrt{ \tilde{\nu}(x)}\,, \quad \quad \tilde{\nu}(x) \defeq \begin{cases}
        \Theta(x^+), & \text{if}\; x \le \beta\,, \\
       \Lambda(x^+), & \text{if} \;x > \beta\,.
    \end{cases}
\end{align}
We analyze the approximation ratio achieved by \Cref{alg:epr} for this choice of $\nu$. As described in the introduction, we will show the algorithm is an $\alpha$-approximation by proving  
$$
\bra{\psi_G} H(G) \ket{\psi_G} \ge \ell(G) \ge \alpha \cdot u(G) \ge \alpha \cdot \lambdaMax{H(G)}\,.
$$
We use $\ell$ from \Cref{lem:king_simplification} and $u$ from \cref{eq:upper_bound}. Expanding these expressions, we get
\begin{align}\label{eq:alpha_ratio}
    \alpha \defeq \min_{G} \frac{\ell(G)}{u(G)}
    &= 
    \min_G \frac{
    \sum_{(i,j) \in E} \frac{w_{ij}}{2}  \big(1 + A_{ij} A_{ji} + (A_{ij} + A_{ji}) \sin \nu(g_{ij})\big)
    }
    {
    \sum_{(i,j) \in E} w_{ij} \, (1 + g_{ij})
    }
    \\
    &\ge 
    \min_G \min_{\substack{(i,j) \in E\\ 1+g_{ij}>0}} 
    \frac{1 + A_{ij} A_{ji} + (A_{ij} + A_{ji}) \sin \nu(g_{ij})}{2(1 + g_{ij})}\,. \label{eq:alphaRatio2}
\end{align}
The right-hand side expression only depends on the values $\{g_{ij}\}_{(i,j) \in E}$, which obey \Cref{cor:moe}. In fact, the expression only depends on $g_{k\ell}$ incident to $i$ or $j$. We thus minimize this expression over values $\{g_{ij}\} \cup \{g_{ik}\}_{k \in K_i} \cup \{g_{kj}\}_{k \in K_j}$ obeying \Cref{cor:moe}, given nodes $i$ and $j$, and arbitrary-size sets of ``other'' neighbors $K_i \defeq N(i) \setminus \{j\}$ and $K_j \defeq N(j) \setminus \{i\}$.

Using our parameterization of $\nu$ in \cref{eq:parameterization}, we can simplify some of the above expressions:
\begin{align*}
    \sin{\nu(g_{ij})} &= \sqrt{\tilde{\nu}(g_{ij}^+)}\,, 
    \quad\quad A_{ij}  = \prod_{k \in K_i} \sqrt{1-\tilde{\nu}(g_{ik}^+)}\,.
\end{align*}
The reason to use the set $\mathcal{C}$ from \Cref{def:theta_set} is demonstrated by the following two lemmas:

\begin{lemma}\label{lem:all_small_a_bound}
    Suppose $g_{ik} \le \beta$ for all $k \in K_i$. Then $A_{ij} \ge \sqrt{1-\Theta\of{Q(g_{ij}^+)}}$.
\end{lemma}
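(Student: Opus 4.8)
The goal is to lower-bound $A_{ij} = \prod_{k \in K_i} \sqrt{1-\tilde\nu(g_{ik}^+)}$ under the hypothesis that every $g_{ik} \le \beta$ for $k \in K_i$. By definition of $\tilde\nu$ in \cref{eq:parameterization}, the hypothesis means $\tilde\nu(g_{ik}^+) = \Theta(g_{ik}^+)$ for each such $k$, so $A_{ij} = \prod_{k \in K_i}\sqrt{1-\Theta(g_{ik}^+)}$. First I would dispose of the case $|K_i| = 0$: then $A_{ij} = 1$ (empty product), and since $\Theta$ maps into $[0,1]$ we have $\sqrt{1-\Theta(Q(g_{ij}^+))} \le 1 = A_{ij}$, so the bound holds. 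Henceforth assume $|K_i| \ge 1$.

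\textbf{Key steps.} The heart of the argument is to combine the defining inequality of the set $\mathcal{C}$ (\cref{eqn:theta_condition}) with the monogamy bound of \Cref{cor:moe}. Write $S \defeq \sum_{k \in K_i} g_{ik}^+$. Apply \cref{eqn:theta_condition} with $p = |K_i|$ and $x_k = g_{ik}^+$: this is legitimate provided $\sum_{k} x_k \le 1$, which I must verify. By \Cref{cor:moe}, $\sum_{k \in K_i} g_{ik}^+ \le Q(g_{ij}^+)$, and inspecting \cref{eq:q_def} shows $Q(x) \le 1$ for all $x \in [0,1]$ (indeed $Q(0)=1$ is the maximum, and $Q$ is non-increasing on each piece); hence $S \le Q(g_{ij}^+) \le 1$, so \cref{eqn:theta_condition} applies and gives
\begin{align*}
    \prod_{k \in K_i}\of{1-\Theta(g_{ik}^+)} \ge 1 - \Theta(S)\,.
\end{align*}
Taking square roots (everything is non-negative since $\Theta$ maps into $[0,1]$), $A_{ij} \ge \sqrt{1-\Theta(S)}$. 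Finally, since $\Theta$ is monotonically increasing and $S \le Q(g_{ij}^+)$, we get $\Theta(S) \le \Theta(Q(g_{ij}^+))$, hence $1 - \Theta(S) \ge 1 - \Theta(Q(g_{ij}^+))$, and taking square roots again yields $A_{ij} \ge \sqrt{1-\Theta(Q(g_{ij}^+))}$, as claimed.

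\textbf{Main obstacle.} There is no serious obstacle here — the lemma is essentially the motivating calculation sketched in the introduction, made rigorous. The only points requiring care are the two monotonicity/range facts that make each inequality chain valid: (i) that $Q(g_{ij}^+) \le 1$, so the hypothesis $\sum x_k \le 1$ of \Cref{def:theta_set} is met, and (ii) that $\Theta$ being monotone non-decreasing lets us pass from $S$ to the upper bound $Q(g_{ij}^+)$ inside $\Theta$. Both are immediate from the stated definitions, so the proof is short. One should also make sure to invoke \Cref{cor:moe} for the correct vertex $i$ (its neighbors other than $j$), which matches the index set $K_i$ exactly.
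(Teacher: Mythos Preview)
Your proof is correct and follows the same approach as the paper: apply the defining inequality of $\mathcal{C}$ to collapse the product into $1-\Theta(S)$, then use \Cref{cor:moe} and monotonicity of $\Theta$ to pass from $S$ to $Q(g_{ij}^+)$. You are in fact slightly more careful than the paper, explicitly handling the empty-product case and verifying the hypothesis $\sum_k g_{ik}^+ \le 1$ needed to invoke \cref{eqn:theta_condition}.
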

\begin{proof}
    Observe that
    \begin{align*}
        A_{ij} &= \prod_{k \in K_i} \sqrt{1-\tilde{\nu}(g_{ik}^+)} = \prod_{k \in K_i} \sqrt{1-\Theta(g_{ik}^+)} \ge \sqrt{1-\Theta\Big(\sum_{k \in K_i}g_{ik}^+}\Big) \ge \sqrt{1-\Theta\of{Q(g_{ij}^+)}}\,.
    \end{align*}
    The equalities hold by \cref{eq:parameterization}, the first inequality holds by \cref{def:theta_set}, and the last inequality holds by \cref{cor:moe}.
\end{proof}

\begin{lemma}\label{lem:one_big_a_bound}
    Suppose $g_{ik'} > \beta$ for some $k' \in K_i$. Then $A_{ij} \ge f(g_{ij}, g_{ik'})$, where $f: [-1,1] \times [-1,1] \to \mathbb{R}$ is the function
    $$
    f(x,y) \defeq \sqrt{\big(1-\Lambda(y^+)\big) \big(1-\Theta\left( Q(y^+)-x^+\right)\big)}\,.
    $$
\end{lemma}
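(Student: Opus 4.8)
The goal is to lower-bound $A_{ij} = \prod_{k \in K_i} \sqrt{1-\tilde\nu(g_{ik}^+)}$ when at least one neighbor $k'$ has $g_{ik'} > \beta$. First I would split the product over $K_i$ into the single factor from $k'$ and the product over the remaining neighbors $K_i \setminus \{k'\}$. The $k'$ factor is exactly $\sqrt{1-\Lambda(g_{ik'}^+)}$ by the definition of $\tilde\nu$ in \cref{eq:parameterization}, since $g_{ik'} > \beta$; note $g_{ik'}^+ = g_{ik'}$ here because $\beta > \tfrac12 > 0$. For the remaining product I want to invoke \Cref{lem:all_small_a_bound}-style reasoning, but I cannot directly, since some of the other neighbors might also exceed $\beta$. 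The cleanest route is: for \emph{every} $k \in K_i \setminus \{k'\}$ we still have $\tilde\nu(g_{ik}^+) \le \Theta(g_{ik}^+)$ is \emph{not} generally true if $g_{ik} > \beta$ — so instead I would argue more carefully, or simply assume (as the worst-case analysis in \Cref{sec:analysis} permits) that we are reducing to the configuration where all \emph{other} neighbors are $\le \beta$; I expect the paper sets things up so that at most one neighbor exceeds $\beta$, or handles the general case by noting $1-\Lambda(\cdot)$ and $1-\Theta(\cdot)$ can both be bounded below by applying the $\mathcal{C}$-inequality to the union.

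Assuming the relevant neighbors other than $k'$ satisfy $g_{ik} \le \beta$, each such factor is $\sqrt{1-\Theta(g_{ik}^+)}$, so
$$
\prod_{k \in K_i \setminus \{k'\}} \sqrt{1-\Theta(g_{ik}^+)} \ge \sqrt{1 - \Theta\Big(\sum_{k \in K_i \setminus \{k'\}} g_{ik}^+\Big)}
$$
by the multiplicativity property \cref{eqn:theta_condition} of $\Theta \in \mathcal{C}$ (valid since the $g_{ik}^+$ are nonnegative and, by \Cref{cor:moe}, their total over $K_i$ is at most $Q(g_{ij}^+) \le 1$). Then I would use $\sum_{k \in K_i \setminus \{k'\}} g_{ik}^+ = \big(\sum_{k \in K_i} g_{ik}^+\big) - g_{ik'}^+ \le Q(g_{ij}^+) - g_{ik'}^+$, again by \Cref{cor:moe}. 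Since $\Theta$ is monotonically increasing, $\Theta$ applied to this smaller argument is no larger, giving $\Theta\big(\sum_{k\ne k'} g_{ik}^+\big) \le \Theta\big(Q(g_{ij}^+) - g_{ik'}^+\big)$, hence $1 - \Theta(\cdots) \ge 1 - \Theta(Q(g_{ij}^+) - g_{ik'}^+)$.

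Multiplying the two pieces together yields
$$
A_{ij} \ge \sqrt{1-\Lambda(g_{ik'}^+)} \cdot \sqrt{1-\Theta\big(Q(g_{ij}^+) - g_{ik'}^+\big)} = f(g_{ij}, g_{ik'}),
$$
matching the claimed $f(x,y) = \sqrt{(1-\Lambda(y^+))(1-\Theta(Q(y^+)-x^+))}$ after substituting $x = g_{ij}$, $y = g_{ik'}$ and using $g_{ik'}^+ = g_{ik'}$, $g_{ij}^+ = g_{ij}^+$. One should also check the argument $Q(g_{ij}^+) - g_{ik'}^+$ lies in a range where $\Theta$ is defined and the bound is meaningful: it is $\le 1$ trivially, and if it is negative we may replace it by $0$ since $\Theta(0)=0$ gives the trivial bound $1-\Theta(\cdot)\le 1$, which is consistent (here one uses that $\Theta$ extended by $\Theta(x)=0$ for $x<0$, or that $f$ is interpreted with the argument clamped — worth a sentence). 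The main obstacle is the bookkeeping around neighbors other than $k'$ that might also exceed $\beta$: the argument as written needs either the structural assumption that only one neighbor is ``big,'' or a short lemma that $1 - \tilde\nu(z) \ge 1 - \Theta(z^+)$ fails in general, so one instead groups all big neighbors' contributions into the $\Lambda$ factor and all small ones into the $\Theta$ factor — I would state explicitly which convention \Cref{sec:analysis} adopts and lean on it, since the surrounding text reduces to a worst-case edge with a controlled neighborhood structure.
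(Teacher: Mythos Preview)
Your overall shape is right, but there is a genuine gap: you apply \Cref{cor:moe} at vertex $i$ with respect to the edge $(i,j)$, when the proof needs it applied with respect to the big edge $(i,k')$. This single switch fixes both of your difficulties at once.

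First, the ``can other neighbors also exceed $\beta$?'' worry is not a structural assumption you have to impose---it is forced by MoE. Applying \Cref{cor:moe} centered at $(i,k')$ gives $\sum_{k\in N(i)\setminus\{k'\}} g_{ik}^+ \le Q(g_{ik'}^+)$, and since $g_{ik'}>\beta>\tfrac12$ and $Q$ is nonincreasing, every individual $g_{ik}^+ \le Q(g_{ik'}^+) \le 1-\beta < \beta$. So each $k\in K_i^- \defeq K_i\setminus\{k'\}$ is automatically in the $\Theta$ regime, and the paper simply records this before applying \cref{eqn:theta_condition} to $\prod_{k\in K_i^-}\sqrt{1-\Theta(g_{ik}^+)}$.

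Second---and this is the concrete error---the bound you derived does not match $f$. You obtained the factor $1-\Theta\big(Q(g_{ij}^+)-g_{ik'}^+\big)$, but with $x=g_{ij}$ and $y=g_{ik'}$ the definition of $f$ requires $Q(y^+)-x^+ = Q(g_{ik'}^+)-g_{ij}^+$; you have $g_{ij}$ and $g_{ik'}$ swapped inside the argument of $\Theta$. Centering \Cref{cor:moe} at $(i,k')$ instead gives the correct inequality: since $j\in N(i)\setminus\{k'\}$,
\[
g_{ij}^+ + \sum_{k\in K_i^-} g_{ik}^+ \;\le\; Q(g_{ik'}^+),\qquad\text{so}\qquad \sum_{k\in K_i^-} g_{ik}^+ \;\le\; Q(g_{ik'}^+)-g_{ij}^+,
\]
and monotonicity of $\Theta$ then yields exactly the second factor of $f(g_{ij},g_{ik'})$. (This also makes the domain check trivial: the argument is $\ge 0$ since $g_{ij}^+ \le Q(g_{ik'}^+)$.) With this recentering, the rest of your outline---split off the $k'$ factor as $\sqrt{1-\Lambda(g_{ik'}^+)}$, apply \cref{eqn:theta_condition} to the remaining product, use monotonicity of $\Theta$---coincides with the paper's proof.
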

\begin{proof}
    Let $K_i^- \defeq K_i \setminus \ofc{k'}$. Since $g_{ik'} > \beta$, \Cref{cor:moe} implies $g_{ik} \le (1-\beta)$ for every $k \in K_i^-$. Since $\beta > \frac{1}{2}$, $1-\beta < \beta$, and so $g_{ik} < \beta$ for all $k \in K_i^-$. We then observe that
    \begin{align*}
        A_{ij} = \prod_{k \in K_i} \sqrt{1-\tilde{\nu}(g_{ik}^+)} 
        = \sqrt{1-\Lambda(g_{ik'}^+)} \  \cdot \prod_{k \in K_i^-} \sqrt{1-\Theta(g_{ik}^+)} 
        \ge \sqrt{1-\Lambda(g_{ik'}^+)} \cdot \sqrt{1-\Theta\Big(\sum_{k \in K_i^-} g_{ik}^+\Big)} \,.
    \end{align*}
    The equalities hold by \cref{eq:parameterization} and the  inequality holds by \Cref{def:theta_set}. 
    By \Cref{cor:moe}, the sum $\sum_{k \in K_i^-} g_{ik}^+\le  Q(g_{ik'}^+) -g_{ik'}^+$. By \Cref{def:theta_set}, $\Theta$ is monotonically increasing.
\end{proof}
We use \Cref{lem:all_small_a_bound,lem:one_big_a_bound} in the following case-wise analysis of the ratio in \cref{eq:alpha_ratio}:
\paragraph{Case 1:}\!\! Suppose $g_{ij} \le \beta$ on edge $(i,j)$ and on all neighboring edges.
In this case, we may apply \cref{lem:all_small_a_bound} on both $A_{ij}$ and $A_{ji}$, to derive that the ratio \cref{eq:alpha_ratio} is lower bounded by
\begin{align}
\label{eqn:case1}
    r_1(g_{ij}), \text{ for } r_1(g) \defeq \frac{2-\Theta\big(Q(g^+)\big)+2\sqrt{\Theta(g^+) \of{1-\Theta\big(Q(g^+)\big)}}}{2\of{1+g}}\,.
\end{align}
When $g_{ij}\le 0$, the numerator of $r_1(g)$ is constant, but the denominator increases with $g$. Therefore, it follows that
\begin{align}
\label{iojpappp}
    r_1(g_{ij}) \geq \min_{-1 < g \leq \beta} r_1(g) = \min_{0 \leq g \leq \beta} r_1(g).
\end{align}

\paragraph{Case 2:}\!\! Suppose $g_{ij} > \beta$.
In this case, by \cref{cor:moe}, we have that $g_{ik}\le \beta$ for all $k \in K_i$ and $g_{kj} \le \beta$ for all $k \in K_j$. Thus, we may again apply \cref{lem:all_small_a_bound} to both $A_{ij}$ and $A_{ji}$. The only difference is that because $g_{ij} \ge \beta$, we have $\tilde{\nu}(g_{ij}^+) = \Lambda(g_{ij}^+)$ by \cref{eq:parameterization}. So the ratio \cref{eq:alpha_ratio} is lower bounded by
\begin{align}
\label{eqn:case2}
r_2(g_{ij}), \text{ for } r_2(g) \defeq
    \frac{2-\Theta\big(Q(g^+)\big)+2\sqrt{\Lambda(g^+) \of{1-\Theta\big(Q(g^+)\big)}}}{2\of{1+g}}\,.
\end{align}
Since $g_{ij} \in [\beta,1]$, we have that $r_2(g_{ij}) \geq \min_{\beta \leq g \leq 1} r_2(g)$.

\paragraph{Case 3:}\!\! 
Suppose $g_{ik'} > \beta$ for some $k' \in K_i$. In this case, all other edges incident to $i$ must have $g_{ij} \le 1-\beta\le \beta$ by \cref{cor:moe}. We can then apply \cref{lem:one_big_a_bound} to $A_{ij}$ to obtain the following lower bound on \cref{eq:alpha_ratio}:
\begin{align}
\label{eqn:case3}
    \frac{1+ f(g_{ij}, g_{ik'}) \cdot A_{ji} + \sqrt{\Theta(g_{ij}^+)}\of{f(g_{ij}, g_{ik'}) + A_{ji}}}{2\of{1+g_{ij}}}\,.
\end{align}
There may or may not be some $\ell \in K_j$ with $g_{\ell j} > \beta$.
As such, we split into two subcases:

\textit{Case 3a:} Exactly one $g_{\ell' j} > \beta$. Then by \cref{lem:one_big_a_bound},  $A_{ji} \ge f(g_{ij}, g_{\ell' j})$.

\textit{Case 3b:} All $g_{\ell j} \le \beta$. Then by \cref{lem:all_small_a_bound}, $A_{ji} \geq \sqrt{1-\Theta\big(Q(g_{ij}^+)\big)}$.

So far, \cref{eqn:case3} for Case 3a depends on three variables $g_{ij}$, $g_{ik'}$ and $g_{\ell' j}$. For Case 3b, \cref{eqn:case3} depends on two variables $g_{ij}$ and $g_{ik'}$. We can relate the variables using the 
monogamy of entanglement claim from \cite{lee2024}:
\begin{claim}[\cite{lee2024}]
\label{claim:moe_lp24}
Fix any graph $G = (V,E,w)$. Then the output $\ofc{g_{ij}}_{(i,j)\in E}\,$ in \cref{eq:sos_g_values} from the $2$nd level (and higher levels) of the quantum moment-SoS hierarchy obeys $g_{ij} \le R(g_{ik})$ and $g_{ik} \le R(g_{ij})$ for all pairs of neighboring edges $\{(i,j), (i,k) \} \subseteq E$, where
\begin{align}\label{eq:r_def}
R(x) \defeq \frac{1}{2} \left(\sqrt{3(1-x^2)} - x \right)\,.
\end{align}
\end{claim}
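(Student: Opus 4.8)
\emph{Overall strategy.} It suffices to prove the bound at level $k=2$: for any $\Gamma\in\mathcal M_k$, the restriction of $\Gamma$ to $\mathcal P_2\times\mathcal P_2$ lies in $\mathcal M_2$, and $g_{ij},g_{ik}$ depend only on this restriction, so a level-$2$ bound propagates to all $k\ge2$. Moreover only the three qubits $i,j,k$ matter: restricting a moment matrix to the Pauli monomials supported on $\{i,j,k\}$ (the path $j-i-k$) is again a valid level-$2$ moment matrix and leaves $g_{ij},g_{ik}$ untouched. So it is enough to show that every $\Gamma\in\mathcal M_2$ on three qubits satisfies $g_{ij}\le R(g_{ik})$ and $g_{ik}\le R(g_{ij})$. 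Half of this is already in hand: when $g_{ij}\ge-\tfrac12$, the bound $g_{ik}\le R(g_{ij})$ is exactly \Cref{lem:degree_moe} specialized to $d_i=2$ (and symmetrically when $j,k$ are swapped), so the real work is in the regime $g_{ij}<-\tfrac12$ (or $g_{ik}<-\tfrac12$), where that specialization degenerates to the vacuous $g_{ik}\le1$.

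\emph{Symmetrization.} Since $j-i-k$ is bipartite, \Cref{lem:bipartite_equivalence} lets me replace the EPR pair $(g_{ij},g_{ik})$ by the corresponding Quantum MaxCut pair $(q_{ij},q_{ik})$: the achievable regions coincide, so it suffices to characterize the $q$-region. The QMC local terms are invariant under simultaneous conjugation of both endpoints by any $W\in SU(2)$, so averaging $\Gamma$ over the diagonal action $W\mapsto W^{\otimes3}$ keeps it in $\mathcal M_2$ and fixes every $q$-value; hence I may assume $\Gamma$ is $SU(2)$-invariant. By Schur's lemma such a $\Gamma$ is determined by just four scalars: all one-body pseudo-moments and all anisotropic two-body pseudo-moments vanish; the isotropic correlators $L(X_aX_b)=L(Y_aY_b)=L(Z_aZ_b)=:c_{ab}$ survive (each affine in $q_{ab}$); and the only surviving three-body pseudo-moment is $L(\sigma_i^\mu\sigma_j^\nu\sigma_k^\rho)=e\,\epsilon_{\mu\nu\rho}$. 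The parameters are thus $c_{ij},c_{ik}$ (equivalently $q_{ij},q_{ik}$) together with the auxiliaries $c_{jk}$ and $e$.

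\emph{A small feasibility SDP.} For $SU(2)$-invariant $\Gamma$, the positivity requirement $L(A^2)\ge0$ for $A\in\mathcal O_2$ factors, via the Clebsch--Gordan decomposition of the relevant pieces of $\mathcal O_2$, into a short explicit list of scalar and $2\times2$ PSD inequalities in $(c_{ij},c_{ik},c_{jk},e)$. Fixing $(q_{ij},q_{ik})$, the claim becomes: there exist $c_{jk}$ and $e$ satisfying all of these. This is a tiny semidefinite feasibility problem; I would solve it in closed form --- complete the square to eliminate $e$ (which enters quadratically in only a couple of blocks), then read off the admissible interval of $c_{jk}$. The projection onto $(q_{ij},q_{ik})$ comes out to be exactly the lens $\{q_{ij}\le R(q_{ik}),\ q_{ik}\le R(q_{ij})\}$, and since the $g$- and $q$-regions coincide, the same describes $(g_{ij},g_{ik})$ --- the claim.

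\emph{Main obstacle.} The hard part is this last step: carrying out the representation-theoretic block decomposition of the three-qubit moment matrix correctly, and then solving the eliminated feasibility problem so that its boundary emerges precisely as $R(x)=\tfrac12\big(\sqrt{3(1-x^2)}-x\big)$ and not some weaker curve --- in particular, confirming that the plain level-$2$ PSD conditions, with no appeal to higher levels or to the true quantum value, already force the nontrivial inequality for $x<-\tfrac12$. The reductions, the $SU(2)$-twirl, and the affine bookkeeping relating the $c$-, $q$- and $g$-variables are all routine.
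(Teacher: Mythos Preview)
The paper does not prove \Cref{claim:moe_lp24}; it is imported wholesale from \cite{lee2024} (see the attribution in the claim header and the remark in the introduction that \Cref{lem:degree_moe} generalizes \cite[Lemma~3]{lee2024}). So there is no ``paper's own proof'' to compare against directly. The closest object is the paper's proof of \Cref{lem:degree_moe} in \Cref{apx:omitted_proofs/degree_moe}, which for $d_i=2$ recovers only the half of the claim with $g_{ij}\ge -\tfrac12$; your observation that the case split there leaves the regime $g_{ij}<-\tfrac12$ genuinely uncovered is correct (e.g.\ \Cref{lem:degree_moe} does not rule out $(g_{ij},g_{ik})=(-1,1)$, but \Cref{claim:moe_lp24} does), and that regime is actually used downstream in the paper's analysis: Case~3 needs $g_{ik'}\le R(g_{ij})$ for $g_{ij}$ ranging over all of $(-1,R(\beta)]$.

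Your proposed route --- restrict to the three-qubit path, pass to QMC values via \Cref{lem:bipartite_equivalence}, twirl under the diagonal $SU(2)$ action, and analyze the resulting four-parameter family $(c_{ij},c_{ik},c_{jk},e)$ --- is a sound strategy and is close in spirit to standard symmetry-reduced SDP arguments in the QMC literature. It differs from the paper's technique for \Cref{lem:degree_moe}, which instead works with the smaller moment matrix over singlet projectors $\{I,\Pi_{ij},\Pi_{ik}\}$, takes a Schur complement, and bounds the residual block via Gershgorin; that machinery scales to stars of all degrees at once but, as you noted, loses precisely the $g_{ij}<-\tfrac12$ portion of the two-edge statement. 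Your twirl-and-block-diagonalize approach is tailored to three qubits and, if carried through, would yield the exact level-2 feasible region and hence the full lens.

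The genuine gap is that you have not executed the decisive step. You correctly flag the block decomposition and the elimination of $(c_{jk},e)$ as the ``main obstacle,'' but everything prior to that is routine bookkeeping; the proof \emph{is} this computation. Until you either write down the irreducible blocks of the twirled moment matrix and solve the resulting feasibility conditions to obtain the boundary $x^2+xy+y^2=\tfrac34$, or equivalently exhibit an explicit degree-$2$ SOS certificate for $R(g_{ij})-g_{ik}\ge0$, the argument does not close. As written, this is a correct plan but not yet a proof.
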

Note that $R$ is monotonically decreasing for $x \ge \frac{1}{2}$. Together with \Cref{claim:moe_lp24}, we observe
    $$
g_{ij} \le \min\left\{R(g_{ik'}), R(g_{\ell' j}) \right\} \le R(\beta)\,, 
\quad \quad
\max\{g_{ik'},g_{\ell' j}\} \le R(g_{ij})\,.
$$
We now assume a property of $\Lambda$ to remove $g_{ik'}$ and $g_{\ell' j}$ from the minimization problem altogether. 
\begin{definition}\label{def:lambda_set}
Fix some $1/2 < \beta \le 1$ and $\Theta \in \mathcal{C}$. Let $\mathcal{D}_{\Theta, \beta}$ be the set of functions \begin{align*}
        \Lambda\,:\, \ofb{0,1} \rightarrow \ofb{0,1},
    \end{align*}
    where  $f$ from \cref{lem:one_big_a_bound} satisfies $f(x,y) \ge f^*(x)$ for all $\beta \le y \le R(x)$ and $-1 \le x \le R(\beta)$, and
    \begin{align}
f^*(x) \defeq f(x,R(x)) =  \sqrt{\bigl(1-\Lambda(R(x)^+)\bigr)\cdot
          \bigl(1-\Theta\left(Q(R(x)^+)-x^+
          \right)\bigr)}\,.
\end{align}
\end{definition}
If $\Lambda \in \mathcal{D}_{\Theta, \beta}$, then in Case 3 of \Cref{lem:one_big_a_bound}, we can lower-bound both $f(g_{ij}, g_{ik'})$ and $f(g_{ij}, g_{\ell'j})$ with $f(g_{ij}, R(g_{ij}))$, since $g_{ik'}, g_{\ell' j} \le R(g_{ij})$. Thus, if $\Lambda \in \mathcal{D}_{\Theta, \beta}$, 
we can without loss of generality minimize \cref{eqn:case3} over $-1 \le g_{ij} \le R(\beta)$ with fixed $g_{ik'}=R(g_{ij})$ and fixed $g_{j\ell'}=R(g_{ij})$. Thus, in Case 3a, \cref{eqn:case3} is lower bounded by
\begin{align}
\label{eqn_r3Def}
    r_3(g_{ij}), \text{ for } r_3(g) \defeq \frac{\,1 + {f^*(g)}^2 \;+\; 2\sqrt{\Theta(g^+)}f^*(g)}{2(1+g)},
\end{align}
and $r_3(g_{ij}) \geq \min_{-1 < g \leq R(\beta)} r_3(g)$. In Case 3b, we find that \cref{eqn:case3} is lower bounded by
\begin{align}
    &\frac{\,1 + f^*(g_{ij})\,\sqrt{1-\Theta(Q(g_{ij}^+))} \;+\; \sqrt{\Theta(g_{ij}^+)}\left(f^*(g_{ij})+\sqrt{1-\Theta(Q(g_{ij}^+))}\, \right)}{2(1+g_{ij})} \nonumber
    \\ &\geq \min\{ r_1(g_{ij}), r_3(g_{ij}) \} \geq  \min\left\{ \min_{0 \leq g \leq R(\beta)} r_1(g),  \min_{-1 < g \leq R(\beta) }r_3(g) \right\}. \label{qwerjipqwp}
\end{align}
The first inequality in \cref{qwerjipqwp} is due to the fact that $$f^*(g), \sqrt{1-\Theta(Q(g^+))} \geq \min \left\{ f^*(g),\sqrt{1-\Theta(Q(g^+))} \right\}  \quad \forall g \in \mathbb{R}.$$ The second inequality in \cref{qwerjipqwp} is due to \cref{iojpappp}. Note also that $R(\beta) \leq \beta$, so that \linebreak $\min_{0 \leq g \leq R(\beta)} r_1(g) \geq \min_{0 \leq g \leq \beta} r_1(g)$.

We lower bound the approximation ratio by combining all cases (\cref{eqn:case1,eqn:case2,eqn:case3}):
\begin{claim}
\label{claim:all_cases}
    Suppose $\nu$ has the form \cref{eq:parameterization} for some $\beta > \frac{1}{2}$, function $\Theta \in \mathcal{C}$, and function $\Lambda \in \mathcal{D}_{\Theta, \beta}$. Then \Cref{alg:epr} with this choice of $\nu$ has approximation ratio at least 
\begin{align} 
    \alpha \;\ge\; \min\Bigg\{ 
    &\underset{0\le g\le \beta}{\min}\; 
    r_1(g)\,,\;
    \underset{\beta\le g\le 1}{\min}\;
    r_2(g)\,,\;
    \underset{-1  < g\le R(\beta)}{\min}\;
    r_3(g)\,
    \Bigg\}\label{eq:minimization_problem}\,
\end{align}
for functions $r_1$, $r_2$ and $r_3$ as defined in \cref{eqn:case1,eqn:case2,eqn_r3Def} respectively. 
\end{claim}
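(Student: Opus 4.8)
The statement is essentially a bookkeeping summary of the case analysis that precedes it, so the plan is to verify that Cases 1, 2, 3a, and 3b together exhaust all possibilities for an edge $(i,j)$ with $1 + g_{ij} > 0$, and that in each case the ratio in \cref{eq:alphaRatio2} is bounded below by one of the three quantities in the claimed minimum. First I would fix an arbitrary graph $G$, an arbitrary $\Gamma \in \mathcal{M}_2$ inducing values $\{g_{k\ell}\}$, and an edge $(i,j)$ achieving the minimum in \cref{eq:alphaRatio2}; by symmetry of the ratio under swapping $i$ and $j$, I may assume that if any edge incident to $i$ or $j$ (other than $(i,j)$ itself) has $g$-value exceeding $\beta$, then some edge incident to $i$ does.

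Next I would split on the value of $g_{ij}$ and on the neighboring $g$-values. If $g_{ij} > \beta$, then \cref{cor:moe} (applied at $i$ and at $j$, using that $Q(g_{ij}^+) \le Q(\beta) \le 1/2 < \beta$) forces every other incident edge to have $g$-value at most $\beta$; this is Case 2, and the ratio is at least $r_2(g_{ij}) \ge \min_{\beta \le g \le 1} r_2(g)$, matching the second term. If $g_{ij} \le \beta$, I consider whether any edge incident to $i$ or $j$ has $g$-value $> \beta$. If not, that is Case 1: both $A_{ij}$ and $A_{ji}$ are bounded via \cref{lem:all_small_a_bound}, the ratio is at least $r_1(g_{ij})$, and by \cref{iojpappp} this is at least $\min_{0 \le g \le \beta} r_1(g)$, the first term. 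If some incident edge does exceed $\beta$, then (by the symmetry reduction) some $g_{ik'} > \beta$ with $k' \in K_i$; by \cref{claim:moe_lp24} and monotonicity of $R$ on $[1/2,1]$ this forces $g_{ij} \le R(\beta)$, so we are in the regime where \cref{lem:one_big_a_bound} and \cref{def:lambda_set} apply. Splitting on whether some $\ell' \in K_j$ has $g_{\ell' j} > \beta$ gives Cases 3a and 3b. In Case 3a, using $\Lambda \in \mathcal{D}_{\Theta,\beta}$ to replace $f(g_{ij},g_{ik'})$ and $f(g_{ij},g_{\ell'j})$ by $f^*(g_{ij})$ (legitimate since $g_{ik'}, g_{\ell'j} \le R(g_{ij})$ and $\beta \le g_{ik'}, g_{\ell'j}$, and since $\Theta$ is increasing so \cref{eqn:case3} is monotone in the two $A$ values), the ratio is at least $r_3(g_{ij}) \ge \min_{-1 < g \le R(\beta)} r_3(g)$. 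In Case 3b, I bound $A_{ij} \ge f^*(g_{ij})$ and $A_{ji} \ge \sqrt{1-\Theta(Q(g_{ij}^+))}$, replace each of these two quantities by their minimum in the (monotone, by $\Theta$ increasing) expression \cref{eqn:case3}, and obtain the bound $\min\{r_1(g_{ij}), r_3(g_{ij})\}$ displayed in \cref{qwerjipqwp}, which by \cref{iojpappp} and $R(\beta) \le \beta$ is at least $\min\{\min_{0\le g\le\beta} r_1(g), \min_{-1<g\le R(\beta)} r_3(g)\}$. Assembling the four cases, every edge's local ratio is at least the three-way minimum in \cref{eq:minimization_problem}, and since \cref{eq:alpha_ratio}--\cref{eq:alphaRatio2} already reduced $\alpha$ to a minimum of such local ratios over all graphs, the claim follows.

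The main obstacle is making the case split genuinely exhaustive and disjoint: one must be careful that ``$g_{ij} \le \beta$ and some incident edge exceeds $\beta$'' really does force $g_{ij} \le R(\beta)$ and really does put us in the hypotheses of \cref{lem:one_big_a_bound} and \cref{def:lambda_set} (in particular that at most one edge at $i$, and at most one at $j$, can exceed $\beta$, which again is \cref{cor:moe} since $Q(g_{ik'}^+) \le 1-\beta < \beta$ when $g_{ik'}^+ > \beta$ and $\beta > 1/2$). A second, more mechanical point requiring care is the monotonicity arguments used to pass from the two-variable expression \cref{eqn:case3} to the one-variable functions $r_3$ and to $\min\{r_1, r_3\}$: one needs that \cref{eqn:case3} is nondecreasing in each of $A_{ij}$ and $A_{ji}$ separately (clear, since $\Theta(g_{ij}^+) \ge 0$ so the coefficients of $A_{ij}$ and $A_{ji}$ and of their product are nonnegative), and that the substitutions of lower bounds for $A_{ij}, A_{ji}$ are therefore valid. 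Everything else is routine substitution into the definitions of $r_1$, $r_2$, $r_3$, $f$, $f^*$, and $Q$.
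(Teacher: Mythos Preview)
Your proposal is correct and follows essentially the same route as the paper: the claim is a summary of the preceding case analysis, and your write-up reproduces exactly that analysis (Cases 1, 2, 3a, 3b), including the reduction of Case 3b to $\min\{r_1,r_3\}$ via the monotonicity of \cref{eqn:case3} in $A_{ij}$ and $A_{ji}$. The only cosmetic addition is your explicit use of the $i\leftrightarrow j$ symmetry to place the ``large'' neighboring edge on the $i$ side, which the paper leaves implicit; otherwise the arguments coincide.
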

It remains to choose $\beta$, $\Theta \in \mathcal{C}$, and $\Lambda \in \mathcal{D}_{\Theta, \beta}$ that obtain a large value of \cref{eq:minimization_problem}.
We provide such a choice in the following theorem. 
\begin{theorem}[formal restatement of \cref{thm:apx_ratio}]\label{thm:apx_ratio_formal}
The value of the minimization problem \cref{eq:minimization_problem} is at least $\alpha \ge \alpha' \defeq 0.839511$ with the following choice of parameters:
\begin{itemize}
    \item $\nu$ is the function \vspace{-2ex}
    \begin{align*}
            \nu(x) \defeq   \arcsin \sqrt{ \tilde{\nu}(x)}\,, \quad \quad \tilde{\nu}(x) \defeq \begin{cases}
        \Theta(x^+), & \text{if}\; x \le \beta\,, \\
       \Lambda(x^+), & \text{if} \;x > \beta\,,
       \end{cases}
       \vspace{-2ex}
    \end{align*}
    \item $\beta=0.67, \;  \gamma = 0.049$,
    \item $\Theta$ is the piecewise linear function defined by the points
    \begin{align}\label{eq:theta_points}
        \ofc{\of{0,0}, \of{Q(\beta),\gamma}, \of{\beta,  \frac{(\gamma/2 + \alpha'(1+\beta)-1)^2 }{1-\gamma}\approx 0.1913}, \of{1, 2(1-\alpha')\approx 0.3210}}.
    \end{align}
    \item $\Lambda$ is the function \begin{align}\label{eq:lambda_choice}
        \Lambda(x) = \frac{\of{\frac{1}{2}\Theta\of{Q(x)}+\alpha'(1+x)-1}^2}{1-\Theta\of{Q(x)}}.
        \end{align}
\end{itemize}
As a consequence, \Cref{alg:epr} with this choice of $\nu$ has approximation ratio $\alpha \ge 0.839511$.
\end{theorem}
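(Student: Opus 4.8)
The plan is to read the theorem as a verification that the explicit parameters $(\beta,\gamma,\Theta,\Lambda)$ in the statement satisfy the hypotheses of \Cref{claim:all_cases} and that the resulting bound \cref{eq:minimization_problem} evaluates to at least $\alpha'=0.839511$. Since $\beta=0.67>\tfrac12$ is immediate, this splits into: (i) $\Theta\in\mathcal{C}$ (\Cref{def:theta_set}); (ii) $\Lambda\in\mathcal{D}_{\Theta,\beta}$ (\Cref{def:lambda_set}); and (iii) $\min_{0\le g\le\beta}r_1(g)\ge\alpha'$, $\min_{\beta\le g\le 1}r_2(g)\ge\alpha'$, and $\min_{-1<g\le R(\beta)}r_3(g)\ge\alpha'$. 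I will use two identities forced by the construction: $Q(\beta)=R(\beta)$ (both equal $\tfrac12(\sqrt{3(1-\beta^2)}-\beta)\approx 0.3079$) and $\Theta(\beta)=\Lambda(\beta)\approx 0.1913$ (so $\nu$ is continuous at $\beta$), each immediate from \cref{eq:q_def}, \cref{eq:r_def}, \cref{eq:theta_points}, \cref{eq:lambda_choice}.

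For (i): monotonicity of $\Theta$ is clear since the four points of \cref{eq:theta_points} have increasing $y$-coordinates. For \cref{eqn:theta_condition} I would use the sufficient criterion that $\varphi(x)\defeq-\ln(1-\Theta(x))$ is finite and convex on $[0,1]$ with $\varphi(0)=0$: then for any $x_1,\dots,x_p$ with $S\defeq\sum_i x_i\in(0,1]$, convexity and $\varphi(0)=0$ give $\varphi(x_i)=\varphi(\tfrac{x_i}{S}S)\le\tfrac{x_i}{S}\varphi(S)$, so $\sum_i\varphi(x_i)\le\varphi(S)$, and exponentiating yields $\prod_i(1-\Theta(x_i))\ge 1-\Theta(S)$. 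For piecewise-linear $\Theta$, $\varphi$ is finite iff $\Theta<1$ on $[0,1]$ and convex iff the slopes of $\Theta$ are non-decreasing (within a piece $\varphi'=\Theta'/(1-\Theta)$ already increases, and across a breakpoint it jumps up exactly when $\Theta$'s slope does). So (i) reduces to checking by direct computation that the three slopes of $\Theta$ are non-decreasing and that $\Theta(1)=2(1-\alpha')<1$.

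For (ii): $\Lambda\ge 0$ is automatic from \cref{eq:lambda_choice}, and $\Lambda(x)\le 1$ is the single-variable inequality $(\tfrac12\Theta(Q(x))+\alpha'(1+x)-1)^2\le 1-\Theta(Q(x))$ on $[0,1]$, checked piecewise. The substantive part is $f(x,y)\ge f^*(x)$ for $\beta\le y\le R(x)$ and $-1\le x\le R(\beta)$ (vacuous where $R(x)<\beta$, e.g. near $x=-1$). Since $\Theta$ is increasing and $Q$ decreasing, the factor $1-\Theta(Q(y^+)-x^+)$ of $f(x,y)^2$ is non-decreasing in $y$, so it suffices that $f(x,\cdot)^2$ be minimized over $[\beta,R(x)]$ at the right endpoint $y=R(x)$; I expect this from $\Lambda$ being non-decreasing on $[\beta,1]$, with its variation dominating that of the (nearly constant) other factor there. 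Making this rigorous — via a sign computation for $\Lambda'$ in terms of $\Theta',Q',\alpha'$ (using $\Theta'\le 0.4$ and $\abs{Q'}\le 2$ on $[\beta,1]$), or by substituting the closed form of $\Lambda$ and reducing $f(x,y)\ge f^*(x)$ to an algebraic inequality in $y$ — is the most delicate calculation.

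For (iii): the choice \cref{eq:lambda_choice} is designed so that $r_2\equiv\alpha'$ — substituting it into \cref{eqn:case2} and using $\tfrac12\Theta(Q(g))+\alpha'(1+g)-1\ge 0$ on $[\beta,1]$ (immediate at $g=\beta$, and increasing there since $\tfrac12\Theta'\abs{Q'}\le\alpha'$), the numerator telescopes to $2\alpha'(1+g)$. For $\min_{[0,\beta]}r_1\ge\alpha'$ (with $r_1$ from \cref{eqn:case1}), I would partition $[0,\beta]$ at the breakpoints of $g\mapsto\Theta(g)$, $g\mapsto\Theta(Q(g))$, $g\mapsto Q(g)$ — i.e. $g\in\{Q(\beta),1-\beta,\tfrac12\}$ — and on each subinterval, where $r_1$ is explicit and algebraic, verify $r_1\ge\alpha'$ from the sign of $r_1'$ and the endpoint values; the binding point is $g=0$, where $r_1(0)=(2-\Theta(1))/2=\alpha'$ exactly by $\Theta(1)=2(1-\alpha')$. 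For $\min_{(-1,R(\beta)]}r_3\ge\alpha'$ I would split at $g=0$: for $g\le 0$ one has $g^+=0$, hence $\Theta(g^+)=0$ and $\Theta(Q(R(g)^+)-g^+)=\Theta(Q(R(g)))$, so $f^*(g)^2=(1-\Theta(Q(R(g))))-(\tfrac12\Theta(Q(R(g)))+\alpha'(1+R(g))-1)^2$ and $r_3(g)=\bigl[\,2-\Theta(Q(R(g)))-(\tfrac12\Theta(Q(R(g)))+\alpha'(1+R(g))-1)^2\bigr]/(2(1+g))$, which is $\ge\alpha'$ (and $\to+\infty$ as $g\to-1^+$) by the explicit forms of $R,Q$; for $0<g\le R(\beta)$ one expands $f^*$ via \cref{eq:lambda_choice} and reduces $r_3(g)\ge\alpha'$ to an algebraic inequality checked piecewise, with $r_3(R(\beta))=r_1(Q(\beta))$ at the shared endpoint as a consistency check. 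Beyond \Cref{claim:all_cases} there is essentially no conceptual content — the theorem verifies a numerically found certificate — so the real work, and the main obstacle, is the volume of careful case analysis in (ii) and in the $r_1,r_3$ minimizations, whereas $r_2\equiv\alpha'$ and $\Theta\in\mathcal{C}$ are short.
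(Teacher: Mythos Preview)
Your plan is correct and matches the paper's approach closely: $\Theta\in\mathcal{C}$ via convexity of $-\ln(1-\Theta)$, $r_2\equiv\alpha'$ by design of $\Lambda$, and piecewise analysis of $r_1,r_3$ at the breakpoints of $\Theta$ and $Q$. Two execution points where the paper differs from what you sketch: for (ii), rather than your factorization $(1-\Lambda(y))\cdot(1-\Theta(Q(y)-x^+))$ with one factor increasing and one decreasing (forcing a domination argument), the paper rewrites $f(x,y)^2=\bigl(1-\Theta(Q(y))-N(y)\bigr)\cdot\frac{1-\Theta(Q(y)-x^+)}{1-\Theta(Q(y))}$ with $N(y)=(\tfrac12\Theta(Q(y))+\alpha'(1+y)-1)^2$, and shows \emph{both} factors are decreasing in $y$ (the second via concavity of $\log(1-\Theta)$); and for the hardest subintervals $r_1$ on $[\tfrac12,\beta]$ and $r_3$ on $[0,R(\beta)]$, the paper does not carry out a closed-form sign analysis of the derivative but instead bounds $|r'|$ crudely and certifies $r\ge\alpha'$ by grid evaluation plus Taylor's remainder. (Your aside that $r_3(R(\beta))=r_1(Q(\beta))$ is not quite right: $Q(Q(\beta))=1-Q(\beta)\neq\beta$, so the two do not coincide.)
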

We provide some intuition for the choice of parameters. $\Lambda$ is chosen such that $r_2(g)$ is exactly $\alpha'$ in the entire domain. $\Theta$ is chosen such that $r_1(g)$ is exactly $\alpha'$ at points $0$ and $\beta$. We then choose $\beta$ and $\gamma$ such that $\Lambda \in \mathcal{D}$. We plot functions $r_1$ through $r_3$ in \cref{fig:r_functions}. We also provide an interactive online plot 
\href{https://www.desmos.com/calculator/d0bken19do}{here}.
\begin{figure}
    \centering
    \includegraphics[width=.75\linewidth]{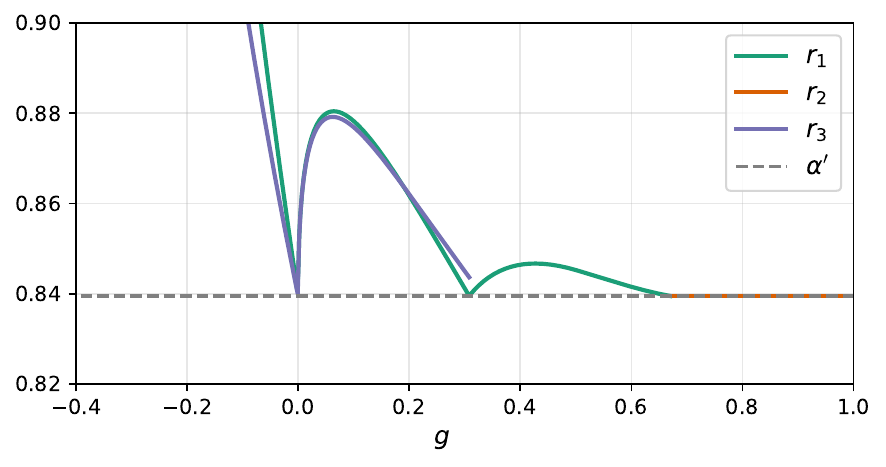}
    \caption{Values of the functions $r_1$, $r_2$ and $r_3$ as defined in \cref{eqn:case1,eqn:case2,eqn_r3Def} respectively. }
    \label{fig:r_functions}
\end{figure}
We defer the proof of \Cref{thm:apx_ratio_formal} to \cref{apx:proof_main_result}.

\section{Limits on approximability}\label{sec:limits}

We now provide some upper bounds on the approximation that can be achieved by the methods outlined in our work.

\subsection{Limitations from analysis}

Our first limitation comes from our assumptions in the analysis of \Cref{alg:epr}. Specifically, we assume that $\theta_{ij}$ depends \textit{only} on $g_{ij}$, and that we take the \emph{worst-case} ratio over edges. With mild assumptions, \textit{any} such analysis obtains an $\alpha$-approximation where $\alpha < \ratioUB{}$. (Recall that \Cref{thm:apx_ratio_formal} shows the existence of $\alpha \ge 0.839511$.)
\begin{lemma}\label{lem:limit/analysis}
    Using a worst-case edge analysis, \cref{alg:epr} with any choice of $\nu$ where $\nu(0) = 0$ is at most a $\ratioUB{}$-approximation for the EPR problem.
\end{lemma}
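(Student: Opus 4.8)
The plan is to exhibit a single small graph on which the worst-case per-edge ratio is pinned below $\ratioUB{}$ for every admissible $\nu$. A worst-case edge analysis certifies an approximation ratio no larger than $\inf\rho$, where
\[
    \rho \;\defeq\; \frac{1 + A_{ij}A_{ji} + (A_{ij}+A_{ji})\sin\nu(g_{ij})}{2(1+g_{ij})}
\]
is the per-edge quantity of \Cref{eq:alphaRatio2} and the infimum ranges over all local $g$-configurations (a central edge $(i,j)$ together with the edges incident to $i$ or $j$) that can arise from an SDP solution; any valid such analysis must range over at least these. It therefore suffices to produce one SDP-attainable local configuration whose ratio $\rho$ is at most $\ratioUB{}$ no matter how $\nu$ is chosen, subject only to $\nu(0)=0$.

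I would take the four-vertex path $k - i - j - l$ (the central edge $(i,j)$ with a single pendant neighbour on each endpoint) and set $g_{ij}=0$ and $g_{ik}=g_{jl}=\sqrt{3}/2$. Since $R(0)=\sqrt{3}/2$, this saturates the bound $g_{ik}\le R(g_{ij})$ of \Cref{claim:moe_lp24}, and one checks it satisfies \Cref{cor:moe}; in fact it is the level-$2$ moment-SoS optimum on $P_4$, since maximizing $3 + g_{ij} + g_{ik} + g_{jl} = 3 + \sqrt{3(1-g_{ij}^{2})}$ subject to $g_{ik},g_{jl}\le R(g_{ij})$ forces $g_{ij}=0$ and $g_{ik}=g_{jl}=\sqrt{3}/2$. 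Because $\nu(0)=0$, on this configuration $A_{ij}=A_{ji}=\cos\nu(\sqrt{3}/2)$, all of $A_{ik},A_{ki},A_{jl},A_{lj}$ equal $1$, and $\sin\nu(g_{ij})=0$. Hence the ratio on the central edge is $\tfrac12\bigl(1+\cos^{2}\nu(\sqrt{3}/2)\bigr)=1-\tfrac12\sin^{2}\nu(\sqrt{3}/2)$, and the ratio on each outer edge, say $(i,k)$, is $\dfrac{2\bigl(1+\sin\nu(\sqrt{3}/2)\bigr)}{2+\sqrt{3}}$.

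Writing $s\defeq\sin\nu(\sqrt{3}/2)\in[0,1]$ — the sole degree of freedom $\nu$ retains on this gadget — the worst-case edge analysis value on $P_4$ is at most $\min\bigl(1-\tfrac12 s^{2},\ \tfrac{2(1+s)}{2+\sqrt{3}}\bigr)$. The first term decreases and the second increases in $s$, so this minimum is maximized where the two expressions agree, i.e.\ at the root of $(2+\sqrt{3})s^{2}+4s-2\sqrt{3}=0$. Substituting that relation into either expression yields a common value $v^{\star}$ — the relevant root of $(7+4\sqrt{3})v^{2}-4\sqrt{3}\,v-4=0$ — and a short computation gives $v^{\star}\approx 0.83951 < \ratioUB{}$. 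Therefore, for every $\nu$ with $\nu(0)=0$, the worst-case edge analysis certifies a ratio of at most $v^{\star}\le \ratioUB{}$, which is the claim. (Note $v^{\star}$ essentially coincides with the value $0.839511$ shown achievable in \Cref{thm:apx_ratio_formal}; this is why the choice of $\nu$ there is ``essentially optimal''.)

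The step I expect to need the most care is verifying that the configuration $g_{ij}=0,\ g_{ik}=g_{jl}=\sqrt{3}/2$ is genuinely realized by (or is a limit of) level-$2$ moment-SoS solutions — equivalently, that the monogamy bound $R$ is tight here — since this is what guarantees that \emph{every} valid worst-case edge analysis, not merely one phrased via \Cref{cor:moe,claim:moe_lp24}, must include this configuration in its minimization. This is also where the lemma's ``mild assumptions'' enter. Everything else reduces to the elementary one-variable optimization above.
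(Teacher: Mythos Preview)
Your proposal is correct and follows essentially the same route as the paper: both use the path $P_4$ with SDP values $(g_{ik},g_{ij},g_{jl})=(\sqrt{3}/2,0,\sqrt{3}/2)$, reduce to the same pair of per-edge ratios $1-\tfrac12 s^2$ and $\tfrac{2(1+s)}{2+\sqrt{3}}$ with $s=\sin\nu(\sqrt{3}/2)$, and solve the resulting one-variable min--max. The only cosmetic difference is that the paper cites \cite{takahashi2023} for the fact that this configuration is the actual level-$2$ SDP optimum on $P_4$, whereas you sketch it via the $R$-bound; your caveat in the last paragraph correctly flags this as the one point needing external input.
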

\begin{proof}
Consider the four-node path graph $P_4$. 
It is known that the SDP \cref{eq:epr_sdp} returns
$(g_{12}, g_{23}, g_{34}) = (\sqrt{3}/2, 0, \sqrt{3}/2)$~\cite[Section 4.2.2]{takahashi2023}. We analyze the approximation ratio of \Cref{alg:epr} on this graph using \Cref{lem:king_simplification} (which is tight on triangle-free graphs) and a worst-case edge analysis. Hence, the approximation ratio $\alpha$ satisfies
\begin{align*}
    \alpha &\le \min\left\{  \frac{\bra{\psi_G} h_{23}  \ket{\psi_G} }{1+g_{23}}, \frac{ \bra{\psi_G} h_{12}  \ket{\psi_G} }{1+g_{12}}\right\} \nonumber \\
    &\leq \min \left\{
    \frac{1 + \cos^2\theta_s + 2 \cos\theta_s \sin\theta_m }{2},
    \frac{1 + \cos \theta_m + (1 + \cos \theta_m) \sin\theta_s }{2 + \sqrt{3}}
    \right\}\,,
\end{align*}
where $\theta_s = \nu(\sqrt{3}/2)$ and $\theta_m = \nu(0)$. If we assume $\theta_m = \nu(0) = 0$, then 
\begin{align*}
    \alpha \le \max_{-1 \leq x \leq 1 } \min\left\{ \frac{2-x^2}{2}, \frac{2+2x}{2+\sqrt{3}}\right\} = \max_{0 \leq x \leq 1 } \min\left\{ \frac{2-x^2}{2}, \frac{2+2x}{2+\sqrt{3}}\right\}\,,
\end{align*}  
where $x = \sin{\theta_s}$. Since $ \frac{2-x^2}{2}$ is a concave parabola and $\frac{2+2x}{2+\sqrt{3}}$ is a positively sloped line which intersect exactly once in the interval $[0,1]$ we deduce that $\alpha$ is given by the common value at the point of intersection. Solving the corresponding quadratic equation 
\begin{align*}
    \frac{2-x^2}{2}\!=\!\frac{2+2x}{2+\sqrt{3}},~ 0\leq x\leq 1 \!\implies\!\! x\!=\!\frac{-2 + \sqrt{10 + 4 \sqrt{3}}}{2 + \sqrt{3}}\!,~\alpha\!=\!\frac{2\bigl(\sqrt{3}+\sqrt{10+4\sqrt{3}}\bigr)}{(2+\sqrt{3})^{2}}\!\simeq\! 0.8395111. \quad\quad\qedhere
\end{align*}
\end{proof}
We do not view $\nu(0) = 0$ as restrictive. In fact, if $\nu(\epsilon) \nrightarrow 0$ as $\epsilon \to 0^+$, then \Cref{alg:epr} only achieves an $0.5$-approximation on the complete bipartite graph $K_{a,a}$ for large values of $a$ ~\cite{takahashi2023}.

\subsection{Limitations from ansatz}\label{sec:limits/ansatz}
The ansatz we use in this work (and used in the initial algorithm of \cite{king2023}) has a natural upper bound of $0.873$. This was suggested by \cite{tao2025}; here we give a formal proof. 
\begin{lemma}\label{lem:limit/ansatz}
     \Cref{alg:epr} achieves at most a $\frac{3+\sqrt{5}}{6}\approx 0.8727$-approximation on the EPR problem, even if each angle $\theta_{ij}$ could depend on the entire SDP output $\{g_{ij}\}_{(i,j) \in E}$.
\end{lemma}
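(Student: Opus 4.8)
The plan is to exhibit a single family of graphs on which every instance of \Cref{alg:epr} — even one where each $\theta_{ij}$ is allowed to depend on the entire SDP output — must lose at least the claimed factor. The natural candidate is a star graph $K_{1,d}$ (one center, $d$ leaves), taken in the limit $d \to \infty$, since this is the setting in which the monogamy-of-entanglement constraint of \Cref{cor:moe} (or \Cref{lem:degree_moe}) is most binding: the $g$-values on the $d$ edges must sum to at most $1$, so on average each edge carries vanishing SDP value, yet the upper bound $u(G) = \sum_{(i,j)} w_{ij}(1+g_{ij})$ is close to $d$ (each term at least $1$). First I would compute $u(K_{1,d})$ exactly using the known SDP optimum on the star — by symmetry one can take $g_{ij} = 1/d$ on every edge, giving $u = d+1$ (and one should check, via \Cref{cor:moe} or a direct SDP argument, that this is indeed optimal, or at least that $u \ge d$, which is all we need for an upper bound on the ratio).

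Next I would upper-bound the energy $\bra{\psi_G}H(K_{1,d})\ket{\psi_G}$ produced by the ansatz. Because $K_{1,d}$ is triangle-free, \Cref{lem:king_simplification} is an equality, so the energy is exactly $\sum_{j} \tfrac12(1 + A_{cj}A_{jc} + (A_{cj}+A_{jc})\sin\theta_{cj})$ where $c$ is the center. For a leaf $j$, $A_{jc} = 1$ (no other neighbors), and $A_{cj} = \prod_{k \ne j}\cos\theta_{ck}$. Writing $a_j \defeq \cos\theta_{cj} \in [0,1]$ and $P \defeq \prod_k a_k$, each edge contributes $\tfrac12(1 + (P/a_j) + (P/a_j + 1)\sqrt{1-a_j^2})$, and the total energy is $\tfrac12\big(d + P\sum_j a_j^{-1} + \sum_j \sqrt{1-a_j^2} + P\sum_j a_j^{-1}\sqrt{1-a_j^2}\big)$. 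The approximation ratio on this graph is then at most this quantity divided by $d+1$ (or by $d$), and I would maximize the resulting expression over the choice of angles $\{a_j\}$ in the $d \to \infty$ limit. I expect the optimum to be attained at a symmetric choice $a_j = a$ for all $j$ — this should follow from concavity/AM-GM type arguments — reducing the problem to a one-variable optimization: maximize $\lim_{d\to\infty} \tfrac{1}{d}\big(d + a^d\cdot d/a + d\sqrt{1-a^2} + a^d \cdot (d/a)\sqrt{1-a^2}\big) = 1 + \sqrt{1-a^2} + \lim_{d\to\infty} a^{d-1}(1+\sqrt{1-a^2})$. If $a < 1$ is fixed the last term vanishes; to keep it alive one scales $a = 1 - c/d$ so that $a^{d} \to e^{-c}$, and then also $\sqrt{1-a^2} \to 0$, giving ratio $\to 1$... so the real tradeoff is more delicate and I would instead parameterize $1 - a^2 = t/d$ or similar and carefully match the two competing effects (the $\sqrt{1-a^2}$ bonus per edge versus the decay of $P$). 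Carrying this optimization through should land on $\tfrac{3+\sqrt5}{6}$.

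The main obstacle I anticipate is twofold. First, getting the scaling regime right: the maximizing angle sequence is neither $O(1)$ away from $0$ nor uniformly bounded away, and one must identify the correct window (plausibly $\theta_{cj} = \Theta(1/\sqrt d)$, so that $\sum_j \sin^2\theta_{cj} = \Theta(1)$ while $\prod_j\cos\theta_{cj}$ stays bounded below) and show no other regime does better — this requires a genuine argument, not just plugging in. Second, justifying that the symmetric configuration is optimal among all angle assignments (since the lemma permits arbitrary dependence on the SDP output, the adversary is the algorithm designer and we must beat every assignment); I would handle asymmetry by a smoothing/convexity argument on $\sum_j \sqrt{1-a_j^2}$ subject to fixed $\prod_j a_j$, or by noting that the per-edge contribution, after fixing $P$, is maximized edgewise. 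A cleaner alternative, if the star computation proves messy, is to use the complete bipartite graph $K_{a,a}$ (already invoked at the end of the previous subsection and in \cite{takahashi2023}), where similar monogamy constraints apply on both sides and the symmetry is even more forced; I would try the star first and fall back to $K_{a,a}$ if needed. Either way the final step is a routine one-variable calculus optimization whose critical-point equation factors to give $\tfrac{3+\sqrt5}{6}$.
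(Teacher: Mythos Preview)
Your approach has a fatal gap: the star $K_{1,d}$ is not a limiting example for this ansatz. With every angle set to zero, \Cref{alg:epr} outputs $\ket{0}^{\otimes(d+1)}$, on which each $h_{cj}$ has expectation $\tfrac12(1+0-0+1)=1$, so the ansatz energy equals $d$. But $\lambdaMax{H(K_{1,d})}=d$ as well (conjugate $Y_c\mapsto -Y_c$ to reduce $H$ to $\tfrac{d}{2}+\tfrac12\,\vec\sigma_c\cdot\vec S$ with $\vec S=\sum_j\vec\sigma_j$, whose top eigenvalue is $d$). Hence the ansatz achieves ratio exactly $1$ on every star, and no limit in $d$ can give anything below $1$. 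Your own calculation already signals this: when you scale $a=1-c/d$ and find ``ratio $\to 1$,'' that is not a delicate tradeoff to be refined---it is the ansatz \emph{succeeding}, the opposite of what you need. The intuition that monogamy is tightest on stars is correct, but it binds the true optimum just as hard as the relaxation, so there is nothing to separate. The fallback $K_{a,a}$ with $a\to\infty$ fails for the same reason: $\lambdaMax{H(K_{a,a})}=a(a+1)$ while the all-zeros state already gives energy $a^2$, so the ratio is at least $a/(a+1)\to 1$. (A secondary issue: you compare to the SDP value $u(G)$ rather than to $\lambdaMax{H(G)}$; since $u(G)\ge\lambdaMax{H(G)}$, a small ratio against $u(G)$ does not imply a small ratio against $\lambdaMax{H(G)}$.)

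The paper's proof instead uses the \emph{small} graph $C_4=K_{2,2}$, where $\lambdaMax{H(C_4)}=6$. The ansatz energy is bounded over \emph{all} angle assignments via Cauchy--Schwarz: expanding \Cref{lem:king_simplification} (an equality on triangle-free graphs) in the four angles, each cross term $\sin u\cos v$ is rewritten as $(x^{-1/2}\sin u)(x^{1/2}\cos v)$ for a free parameter $x>0$; the resulting bilinear form becomes an inner product of two vectors of equal norm, and choosing $x=(\sqrt5-1)/2$ yields $\braket{\chi|H(C_4)|\chi}\le 3+\sqrt5$. This is tight at the symmetric angle choice, giving exactly $(3+\sqrt5)/6$.
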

\begin{proof}
Consider the unweighted cycle $C_4$ defined by
    \begin{align*}   
        V = \{1,2,3,4\}, \quad E = \{(1,2),(2,3),(3,4),(4,1)\}\,.
    \end{align*}
    Let $\{a,b,c,d\}=\{\theta_{12},\theta_{23},\theta_{34},\theta_{41}\}$ denote the four angles in \Cref{alg:epr}.
    Using \Cref{lem:king_simplification}, we have 
    \begin{align*}
        2\braket{\chi|  H(C_4) | \chi} = \;&1+\cos{d}\cos{b}+\left(\cos{d}+\cos{b}\right) \sin{a}+1+\cos{a}\cos{c}+\left(\cos{a}+\cos{c}\right)\sin{b} \\
        +&1+\cos{b}\cos{d}+\left(\cos{b}+\cos{d}\right) \sin{c}
        +1+\cos{c}\cos{a}+\left(\cos{c}+\cos{a}\right) \sin{d},
    \end{align*}
    where each set of three terms corresponds to a single edge.
    Since $C_4$ has no triangles, note that the equation in \Cref{lem:king_simplification} is an equality.  
    To upper bound $\braket{\chi|  H(C_4) | \chi}$, we rewrite each term of the form $\sin u \cos v$ as
    \begin{align*}
        \sin u\,\cos v
        = 
        \left(\frac1{\sqrt{x}}\sin u\right)\left(\sqrt{x}\cos v\right),
    \end{align*}
    for some arbitrary scalar $x>0$. Then, defining the vectors 
    \begin{align*}
    \mathbf{a} &= 
    \left(\cos{d}, \cos{a}, \cos{b}, \cos{c}\right) || 
    \frac{1}{\sqrt{x}}\left(\sin{a},\sin{b},\sin{c},\sin{d}\right) || \sqrt{x}\left(\cos{b}, \cos{c}, \cos{d}, \cos{a}\right), \\
    \mathbf{b} &= \left(\cos{b}, \cos{c}, \cos{d}, \cos{a}\right) || \sqrt{x}\left(\cos{d}, \cos{a}, \cos{b}, \cos{c}\right) || 
    \frac{1}{\sqrt{x}}\left(\sin{a},\sin{b}, \sin{c}, \sin{d}\right),
    \end{align*}
    where $||$ denotes concatenation allows us to express
    \begin{align*}
        2  \braket{\chi|  H(C_4) | \chi} &= 4 + \mathbf{a} \cdot \mathbf{b} \le 4+\lVert\mathbf{a}\rVert_2\,\lVert\mathbf{b}\rVert_2= 4+\| \mathbf{a} \|_2^2 \\
    &=4+(1+x)\left(\cos^2a \!+\! \cos^2b \!+\! \cos^2c \!+\! \cos^2d\right) +\frac{1}{x}\left(\sin^2a \!+\! \sin^2b \!+\! \sin^2c \!+\! \sin^2d\right),
    \end{align*}
    where in the first line we apply the Cauchy--Schwarz inequality. Choosing $x$ such that $1+x=\tfrac{1}{x}$ yields $x=(\sqrt5-1)/2$. Using $\cos^2\theta + \sin^2\theta =1$ then yields
    \begin{align}
    \label{aisodfs}
      \braket{\chi|  H(C_4) | \chi} \le  3 + \sqrt{5}\,.
    \end{align}
    In fact, by picking $a=b=c=d=\tan^{-1}{x}$ one obtains that $\braket{\chi|  H(C_4) | \chi} =  3 + \sqrt{5}$. We know that $\lambdaMax{H(C_4)} = 6$ because $C_4$ is complete and bipartite~\cite{lieb1962,takahashi2023}. 
    So this algorithm is an $\alpha$-approximation of \textit{at most} $\alpha \le \frac{3+\sqrt{5}}{6}$ with a tight upper bound. 
\end{proof}
Several works for the EPR problem~\cite{king2023, apte2025, apte2025b, tao2025} use this ansatz in \cref{eq:epr_chi}. \Cref{lem:limit/ansatz} demonstrates that a new approach is required to boost $\alpha > 0.873$.

\section*{Acknowledgments}
E.L. is supported by a KIAS Individual Grant CG093802 at Korea Institute for
Advanced Study.
K.M. and J.S. acknowledge that this material is based upon work supported by the National Science Foundation Graduate Research Fellowship under Grant No.\ 2140001. 
K.M. acknowledges support from AFOSR (FA9550-21-1-0008). 
O.P. acknowledges that this material is based upon work supported by the U.S. Department of Energy, Office of Science, Accelerated Research in Quantum Computing, Fundamental Algorithmic Research toward Quantum Utility (FAR-Qu). 
J.S. acknowledges that this work is funded in part by the STAQ project under award NSF Phy-232580; in part by the US Department of Energy Office of Advanced Scientific Computing Research, Accelerated Research for Quantum Computing Program.

This article has been authored by an employee of National Technology \& Engineering Solutions of Sandia, LLC under Contract No.\ DE-NA0003525 with the U.S. Department of Energy (DOE). The employee owns all right, title and interest in and to the article and is solely responsible for its contents. The United States Government retains and the publisher, by accepting the article for publication, acknowledges that the United States Government retains a non-exclusive, paid-up, irrevocable, world-wide license to publish or reproduce the published form of this article or allow others to do so, for United States Government purposes. The DOE will provide public access to these results of federally sponsored research in accordance with the DOE Public Access Plan \url{https://www.energy.gov/downloads/doe-public-access-plan}.

This paper was prepared for informational purposes with contribution fro the Global Technology Applied Research center of JPMorgan Chase \& Co. This paper is not a product of the Research Department of JPMorgan Chase \& Co or its affiliates. Neither JPMorgan Chase \& Co nor any of its affiliates makes any explicit or implied representation or warranty and none of them accept any liability in connection with this paper, including, without limitation, with
respect to the completeness, accuracy, or reliability of the information contained herein and the potential legal, compliance, tax, or accounting effects thereof. This document is not intended as investment research or investment advice, or as a recommendation, offer, or solicitation for the purchase or sale of any security, financial instrument, financial product or service, or to be used in any way for evaluating the merits of participating in any
transaction.

\printbibliography

\appendix

\crefalias{section}{appendix}
\crefalias{subsection}{appendix}

\clearpage
\newpage

\section{Omitted proofs}
\label{apx:omitted_proofs}

\subsection{Proof of \texorpdfstring{\cref{lem:bipartite_equivalence}}{QMC and EPR bipartite equivalence}}
\label{apx:omitted_proofs/bipartite_equivalence}
We first prove \cref{lem:bipartite_equivalence}, since \cref{lem:bipartite_equivalence} is used in the proof of \cref{lem:degree_moe} in \cref{section_proofLemma2}.
\begin{proof}
\newcommand{\suppXZ}[1]{\mathrm{supp}_{x,z}\mleft(#1\mright)}
\newcommand{\bipartSet}{\mathcal{V}}
    We show that for any $\Gamma \in \mathcal{M}_k$, there exists a $\tilde{\Gamma} \in \mathcal{M}_k$ that satisfies 
    \begin{align}
    \label{qowiersaf}
     g(\Gamma)_{ij} = q(\tilde{\Gamma})_{ij} \text{ and } q(\Gamma)_{ij} = g(\tilde{\Gamma})_{ij} \, \, \forall (i,j) \in E, 
    \end{align}
    proving the result.

   We write, for any $A \in \mathcal{P}_{k}$, see \cref{eqn_mathcalPK}, $A = \sigma_1 \sigma_2 \cdots \sigma_n$, for $\sigma_i \in \{I, X_i,Y_i,Z_i\}$. We define $\suppXZ{A} \defeq \setFunct{ i \in [n]}{ \sigma_i \in \{ X_i, Z_i\} }$. Let $\bipartSet \subseteq [n]$ correspond to a bipartition of the the graph. That is, for all $(i,j) \in E$, precisely one of $i$ and $j$ is contained in $\bipartSet$. Define $\tilde{\Gamma}$ as the matrix satisfying
    \begin{align}
    \label{qwiper}
        \tilde{\Gamma}(A,B) = C(A,B) \cdot \Gamma(A,B), \text{ for } C(A,B) \defeq (-1)^{| \suppXZ{A} \cap \bipartSet|}  (-1)^{|\suppXZ{B} \cap \bipartSet|}.
    \end{align}
    It is straightforward to verify that $\Gamma$ and $\tilde{\Gamma}$ satisfy \cref{qowiersaf}, and so it remains to show that $\tilde{\Gamma} \in \mathcal{M}_k$. By definition, the matrix $C \defeq C(A,B)_{A,B \in \mathcal{P}_k}$ is a rank 1 PSD matrix. By \cref{qwiper}, $\tilde{\Gamma}$ equals the Hadamard product of the PSD matrices $C$ and $\Gamma$ so that $\tilde{\Gamma} \succeq 0$. It is clear that $\tilde{\Gamma}(A,B) = 0$ whenever $\Gamma(A,B) = 0$, and that $\tilde{\Gamma}(A,A) = 1$ for all $A \in \mathcal{P}_k$.

    \newcommand{\pauliProd}[2]{ \mathcal{K}\mleft(#1,#2\mright)}
    To verify the remaining two conditions of $\mathcal{M}_k$, we define, for any $A,B \in \mathcal{P}_k$ the matrix $\pauliProd{A}{B}$ as the matrix satisfying $\pauliProd{A}{B} \in \mathcal{P}_{2k}$ and $\pauliProd{A}{B} = c AB$ for some $c \in \{ \pm 1, \pm \imagUnit \}$. Define $S_A \defeq \suppXZ{A} \cap \bipartSet$ for any $A \in \mathcal{P}_{2k}$. Note that, for $\triangle$ the symmetric difference operator, we have that $S_A \triangle S_B = \suppXZ{\pauliProd{A}{B}} \cap \bipartSet$. Therefore,
    \begin{align*}
        C(A,B) = (-1)^{|S_A|+|S_B|} = (-1)^{|S_A \triangle S_B|+2|S_A \cap S_B|} = (-1)^{| \suppXZ{\pauliProd{A}{B}} \cap \bipartSet | }.
    \end{align*}
    Now, if $A,B,A',B' \in \mathcal{P}_k$ are such that $AB = A'B'$, or $AB = -A'B'$, then in both cases, $\pauliProd{A}{B} = \pauliProd{A'}{B'}$, so that $C(A,B) = C(A',B')$. Hence $\tilde{\Gamma}$ satisfies \cref{qwqwpero} and \cref{qwqwpero2} and $\tilde{\Gamma} \in \mathcal{M}_k$. 
\end{proof}
\subsection{Proof of \texorpdfstring{\cref{lem:degree_moe}}{New monogamy of entanglement}}\label{apx:omitted_proofs/degree_moe}
\label{section_proofLemma2}
We now prove \cref{lem:degree_moe}, starting with the following claims.

\begin{claim}\label{claim:lem_degree_moe_1}
For $d_i$ and $g_{ij}$ as defined in \cref{eq:degree_moe} of \cref{lem:degree_moe},
\begin{align*}
\frac{1}{2}\of{2-d_i-g_{ij} + \sqrt{(d_i^2-1)(1-g_{ij}^2)}} \leq 1.
\end{align*}
\end{claim}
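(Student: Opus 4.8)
The plan is to show that for $d \ge 2$ and $-1 \le g \le 1$, the quantity $\phi(d,g) \defeq \tfrac12\bigl(2 - d - g + \sqrt{(d^2-1)(1-g^2)}\bigr)$ never exceeds $1$. Rearranging, the claim $\phi(d,g) \le 1$ is equivalent to $\sqrt{(d^2-1)(1-g^2)} \le d + g$. First I would check that the right-hand side $d + g$ is nonnegative on the relevant range: since $d \ge 2$ and $g \ge -1$, we have $d + g \ge 1 > 0$, so squaring both sides is a valid equivalence.

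After squaring, the inequality becomes $(d^2-1)(1-g^2) \le (d+g)^2$. Expanding the left side gives $d^2 - 1 - d^2 g^2 + g^2$, and the right side is $d^2 + 2dg + g^2$. Subtracting, the inequality reduces to $-1 - d^2 g^2 \le 2dg$, i.e. $0 \le d^2 g^2 + 2dg + 1 = (dg + 1)^2$, which holds trivially. So the chain of equivalences closes and the claim follows.

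I expect essentially no obstacle here; this is a routine algebraic manipulation. The only point requiring a moment's care is the sign check that justifies squaring — one must confirm $d+g \ge 0$ before squaring, which is immediate from $d \ge 2$ and $g \ge -1$ — and noting that the final bound is an equality exactly when $dg = -1$, i.e. at $g = -1/d$, which is consistent with the piecewise boundary appearing in \cref{eq:degree_moe}.

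\begin{proof}
We must show $\tfrac12\bigl(2 - d_i - g_{ij} + \sqrt{(d_i^2-1)(1-g_{ij}^2)}\bigr) \le 1$, equivalently
\[
\sqrt{(d_i^2-1)(1-g_{ij}^2)} \le d_i + g_{ij}.
\]
Since $d_i \ge 2$ and $g_{ij} \ge -1$, the right-hand side satisfies $d_i + g_{ij} \ge 1 > 0$, so this inequality is equivalent to the one obtained by squaring both sides:
\[
(d_i^2-1)(1-g_{ij}^2) \le (d_i + g_{ij})^2.
\]
Expanding, the left-hand side is $d_i^2 - 1 - d_i^2 g_{ij}^2 + g_{ij}^2$ and the right-hand side is $d_i^2 + 2 d_i g_{ij} + g_{ij}^2$. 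Subtracting the left from the right, the inequality is equivalent to
\[
0 \le 1 + 2 d_i g_{ij} + d_i^2 g_{ij}^2 = (1 + d_i g_{ij})^2,
\]
which holds for all real values. This proves the claim.
\end{proof}
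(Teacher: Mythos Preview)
Your proof is correct and follows essentially the same approach as the paper: both rearrange to $\sqrt{(d_i^2-1)(1-g_{ij}^2)} \le d_i + g_{ij}$, note that the right-hand side is nonnegative so squaring is valid, and then reduce the squared inequality to $0 \le (d_i g_{ij} + 1)^2$. Your version is slightly more explicit in the algebraic expansion, but the argument is identical.
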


\begin{proof}
The claim is equivalent to 
\begin{align*}
\sqrt{(d_i^2-1)(1-g_{ij}^2)} \leq d_i + g_{ij},
\end{align*}
which is in turn equivalent to
\begin{align*}
(d_i^2-1)(1-g_{ij}^2) \leq (d_i + g_{ij})^2,
\end{align*}

since $d_i + g_{ij} \geq 0$. Observe that
\begin{align*}
0 \leq (d_i g_{ij} + 1)^2 = (d_i + g_{ij})^2 - (d_i^2-1)(1-g_{ij}^2),
\end{align*}
establishing the claim.
\end{proof}

\begin{claim}\label{claim:lem_degree_moe_2}
\cref{lem:degree_moe} with the assumption that $g_{ij} \in (-1,1)$ and each $g_{ik} > -1$ in \cref{eq:degree_moe} implies the general case.
\end{claim}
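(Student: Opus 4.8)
The plan is to reduce the general statement of \cref{lem:degree_moe} to the ``interior'' case where $g_{ij}$ lies strictly inside $(-1,1)$ and every $g_{ik}$ with $k \in N(i)\setminus\{j\}$ satisfies $g_{ik} > -1$. This is a standard boundary/limiting argument, and the main work is to check that nothing degenerates at the boundary values $\pm 1$.

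\textbf{Step 1: handle $g_{ij} = 1$.} Recall that $1 + g_{ij}$ is the (pseudo-)expectation of $h_{ij}$, so $g_{ij} = 1$ means the edge $(i,j)$ is ``saturated.'' In this case the right-hand side of \cref{eq:degree_moe} in the relevant branch ($-1/d_i \le g_{ij} \le 1$) evaluates to $\tfrac12(2 - d_i - 1 + 0) = \tfrac{1-d_i}{2} \le 0$ since $d_i \ge 2$. So I must show $\sum_{k \in N(i)\setminus\{j\}} g_{ik} \le \tfrac{1-d_i}{2}$, i.e. each such $g_{ik}$ is forced to be very negative. I expect this to follow from \cref{claim:moe_lp24} (the $R$ bound from \cite{lee2024}): $g_{ik} \le R(g_{ij}) = R(1) = \tfrac12(\sqrt{0} - 1) = -\tfrac12$, hence $\sum_{k} g_{ik} \le -\tfrac{d_i - 1}{2}$, which is exactly what is needed. (If \cref{claim:moe_lp24} is only invoked later in the paper, one can instead use the $k=2$ two-edge MoE bound directly, or note $g_{ij}=1$ forces the reduced state on qubits $i,k$ to be compatible with a maximally-entangled state on $i,j$, collapsing $g_{ik}$.)

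\textbf{Step 2: handle $g_{ij} = -1$.} Here the claimed bound is $\sum_k g_{ik} \le 1$ (the first branch, since $-1 \le g_{ij} < -1/d_i$ includes $g_{ij} = -1$ when $d_i \ge 2$... more precisely $g_{ij}=-1 < -1/d_i$). When $g_{ij} = -1$ the edge contributes pseudo-expectation $0$, which is the ground-space value of $h_{ij}$; this imposes no entanglement constraint coupling $i$ to its other neighbors beyond what a generic moment matrix already satisfies. So I would instead obtain $\sum_k g_{ik} \le 1$ by applying the already-known (linear) MoE bound $\sum_{k \in N(i)} g_{ik} \le 1$ valid for level $\ge 2$ (from \cite{parekh2021a}, used in the proof of \cref{cor:moe}); dropping the non-positive term $g_{ij} = -1$ only helps. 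This also covers, more generally, the passage $g_{ij} \to -1^+$.

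\textbf{Step 3: the limiting argument for $g_{ik} \to -1$.} For the remaining reduction, suppose some $g_{ik_0} = -1$. Since $g_{ik_0}$ is non-positive it only decreases the left-hand side, so one would like to simply delete $k_0$ from the star and apply the $(d_i - 1)$-vertex case. The subtlety is monotonicity of the RHS in $d_i$: I need that the bound for degree $d_i$ is at least the bound for degree $d_i - 1$ (so that discarding a vertex is legitimate), which is exactly the kind of derivative computation $\tfrac{\partial}{\partial d}P(x,d) \le 0$ that already appears in the proof of \cref{cor:moe} — there it is $\le 0$ for $x \ge 1/2$; for $x < 1/2$ one checks $\tfrac{\partial}{\partial d}P \ge 0$, i.e. the bound grows with $d$, so dropping a saturated-at-$-1$ neighbor is still fine because the LHS drops by $1$ while the RHS changes by a bounded amount — alternatively, and more cleanly, argue by continuity: replace the offending $g_{ik_0} = -1$ by $-1 + \varepsilon$, note the true optimum over moment matrices is attained/approached, apply the interior case, and let $\varepsilon \to 0^+$ using that both sides of \cref{eq:degree_moe} are continuous in $(g_{ij}, \{g_{ik}\})$. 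The analogous continuity argument sends $g_{ij} \to \pm 1$.

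\textbf{Main obstacle.} The genuinely delicate point is Step 1 ($g_{ij} = 1$): one must be sure that saturating one edge of the star forces all sibling edges to be sufficiently anticorrelated, and that this is captured by a level-$2$ certificate rather than requiring the true quantum state. I expect \cref{claim:moe_lp24} to do this cleanly, but if that claim is logically downstream, the fallback is a short direct computation: $L((h_{ij} - I)(h_{ik} - I)) $-type constraints at level $2$, or a purification argument for the exact problem, showing $g_{ij} = 1 \Rightarrow g_{ik} \le -1/2$. Everything else is routine continuity and the same derivative sign analysis already carried out for \cref{cor:moe}.
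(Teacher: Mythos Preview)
Your Steps 1 and 2 aim at the same tools the paper uses: for $g_{ij}=1$ the paper also applies \cref{claim:moe_lp24} edge-by-edge (so your worry about circularity is unfounded---that claim is independent), and for $g_{ij}=-1$ it applies the linear star bound of \cite{parekh2021a}. However, your Step~2 contains a genuine slip: ``dropping the non-positive term $g_{ij}=-1$ only helps'' is backwards---removing a $-1$ summand from the left side \emph{increases} it by $1$, so from $\sum_{k\in N(i)} g_{ik}\le 1$ you would only get $\sum_{k\ne j} g_{ik}\le 2$. The fix, and what the paper actually does, is to apply the star bound directly to the substar on $D=N(i)\setminus\{j\}$, not to the full star and then discard $j$.

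For Step 3 the paper takes your first route---delete the edges with $g_{ik}=-1$ and compare bounds---but carries it out carefully: with $\tilde d$ such edges removed, one must show that the degree-$(d_i-\tilde d)$ bound on the pruned star implies the degree-$d_i$ bound on the original, and this splits into three regimes for $g_{ij}$ (relative to $-1/(d_i-\tilde d)$ and $-1/d_i$), with the middle one handled via \cref{claim:lem_degree_moe_1}. Your sketch (``the LHS drops by $1$ while the RHS changes by a bounded amount'') does not cover this; the needed inequality $P(g_{ij},d_i-\tilde d)-\tilde d\le P(g_{ij},d_i)$ is not a one-line consequence of the sign of $\partial P/\partial d$. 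Your continuity alternative, on the other hand, is a genuinely different and cleaner argument---\emph{provided} you say how to perturb inside $\mathcal{M}_k$: take the convex combination $(1-\varepsilon)\Gamma+\varepsilon\Gamma_0$ with $\Gamma_0$ the maximally-mixed moment matrix (all $g=-1/2$), apply the interior case, and send $\varepsilon\to 0$. Done this way it even subsumes Steps 1 and 2 and avoids the piecewise comparison entirely; without specifying the perturbation mechanism, though, ``replace $g_{ik_0}$ by $-1+\varepsilon$'' is not yet a proof, since an arbitrary perturbation of a single coordinate need not stay feasible.
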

\begin{proof}
Fix a vertex $i$ of degree at least 2 and a vertex $j \in N(i)$, and let $D \defeq \ofc{(i,k) \mid k \in N(i)\setminus \ofc{j}}$. Let $\td{D} \defeq \ofc{(i,k) \in D \mid g_{ik} = -1}$ and $\td{d} \defeq |\td{D}|$. 

If $g_{ij} = -1$, then the standard moment-SoS-based star bound \cite{parekh2021a} applied to the edges in $D$ yields \cref{lem:degree_moe}. If $g_{ij}=1$, \cref{lem:degree_moe} follows by applying \cref{claim:moe_lp24} to the pairs $\ofc{(i,j),(i,k)}$ for $(i,k) \in D$. So we need only consider values $g$ with $g_{ij} \in (-1,1)$. 

It suffices to show
\begin{equation}
\label{eq:mod_degree_moe_1}
\begin{aligned}
\sum_{(i,k) \in D \setminus \td{D}} g_{ik} &\le \begin{cases}
        1 + \td{d}, &\text{if } -1 \le g_{ij} < -\frac{1}{d_i}, \\
        \frac{1}{2}\of{2-d_i+2\td{d}-g_{ij} + \sqrt{\of{d_i^2-1}\of{1-g_{ij}^2}}}, &\text{if } -\frac{1}{d_i} \leq g_{ij} \le 1 .
    \end{cases}
\end{aligned}
\end{equation}
Remove the edges in $\td{D}$ from $G$ to produce a graph $\td{G}$, for which \cref{eq:degree_moe} for $i$ and $j$ is equivalent to
\begin{equation}
\label{eq:mod_degree_moe_2}
\begin{aligned}
    \sum_{(i,k) \in D\setminus \td{D}} \hspace{-5px} g_{ik}  &\le \begin{cases}
        1, &\hspace{-7px}\text{if }\! -\!1 \le g_{ij} \!< -\frac{1}{d_i-\td{d}}, \\
        \frac{1}{2}\!\of{2-d_i+\td{d}-g_{ij} + \sqrt{\of{\of{d_i-\td{d}}^2-1}\!\!\of{1-g_{ij}^2}}}, &\hspace{-7px}\text{if }\! -\frac{1}{d_i-\td{d}} \leq g_{ij} \le 1.
    \end{cases}
\end{aligned}
\end{equation}
We will observe that \cref{eq:mod_degree_moe_2} implies \cref{eq:mod_degree_moe_1}. Since each $g_{ik} > -1$ in the sum of \cref{eq:mod_degree_moe_2}, and the hypothesis of the claim must hold for any graph, including $\td{G}$, this will establish the claim. 

Since $ -\frac{1}{d_i-\td{d}} \leq -\frac{1}{d_i}$, if $-\frac{1}{d_i} \leq g_{ij}$ then \cref{eq:mod_degree_moe_2} implies \cref{eq:mod_degree_moe_1} by
\begin{align}\label{eq:moe_bound_comp}
2-d_i+\td{d}-g_{ij} + \sqrt{\of{(d_i-\td{d})^2-1}\of{1-g_{ij}^2}} \leq
2-d_i+2\td{d}-g_{ij} + \sqrt{\of{d_i^2-1}\of{1-g_{ij}^2}}.
\end{align}
If $g_{ij} < -\frac{1}{d_i-\td{d}}$, then we are also fine, leaving the case $-\frac{1}{d_i-\td{d}} \leq g_{ij} < -\frac{1}{d_i}$. For this we seek to show
\begin{align}
\label{eqn_thingWeSeekToShow}
\frac{1}{2}\of{2-d_i+\td{d}-g_{ij} + \sqrt{\of{\of{d_i-\td{d}}^2-1}\of{1-g_{ij}^2}}} \leq 1 + \td{d}.
\end{align}

By using \cref{eq:moe_bound_comp}, we can upper bound the left-hand side of \cref{eqn_thingWeSeekToShow} as follows: 
\begin{align*}
    \frac{1}{2}\of{2-d_i+\td{d}-g_{ij} + \sqrt{((d_i-\td{d})^2-1) (1-g_{ij}^2)}} &\leq \frac{1}{2} \of{2-d_i-g_{ij} + \sqrt{(d_i^2-1)(1-g_{ij}^2)}} + \td{d} \\
    &\leq 1+\td{d},
\end{align*}
where for the last inequality, we have used \cref{claim:lem_degree_moe_1}. This completes the proof.
\end{proof}

\begin{proof}[Proof of \Cref{lem:degree_moe}] 
Fix a vertex $i$ of degree at least 2 and a vertex $j \in N(i)$, and let $D \defeq \ofc{(i,k) \mid k \in N(i)\setminus \ofc{j}}$. We assume $g_{ij} > -1$ and $g_{ik} > -1$ for all $(i,k) \in D$ by \cref{claim:lem_degree_moe_2}.

We will prove the result for the Quantum MaxCut (QMC) values $q$ as defined in \cref{iqjwper}. Since the edge set $D$ is bipartite, we may appeal to \cref{lem:bipartite_equivalence} to obtain the desired result for the EPR values $g$. 

We start by following the approach of the proof of the QMC star bound in \cite[Theorem~4.4]{takahashi2023}. For this we need a moment matrix with respect to the projectors onto a singlet on each edge $(k,l)$: 
\begin{align*}
\Pi_{kl} \defeq \frac{1}{2} h_{kl}^\mathrm{QMC} = \frac{1}{4}( I_k I_l - X_k X_l - Y_k Y_l - Z_k Z_l ).
\end{align*}
Choose a moment matrix $\Gamma \in \mathcal{M}_k$, with $k \geq 2$, that gives rise to the values $g$. 
Since $\mathcal{P}_k$ is a basis for $\mathcal{O}_k$, we extend $\Gamma$ to an operator acting on $\mathcal{O}_k$ by linearity so that $\Gamma(A,B) = L(AB)$ for the linear functional $L$ as in \cref{sec:sos/def} for all $A,B \in \mathcal{O}_k$. Taking
\begin{align}\label{eq:def_S}
\mathcal{S} \defeq \ofc{I} \cup \ofc{\Pi_{ij}} \cup \ofc{\Pi_{ik} \mid (i,k) \in D},
\end{align}
our interest is in the matrix $M \in \mathbb{R}^{\mathcal{S} \times \mathcal{S}}$ with $M(A,B) \defeq \Gamma(A,B)$ for $A,B \in \mathcal{S}$. We have $M \succeq 0$ since $M=R\Gamma R^T$ for some matrix $R$. As a moment matrix over $\mathcal{S}$, $M$ is included in the first level of the swap or singlet projector hierarchies from \cite{takahashi2023,watts2024}.

Since we were able to assume $g_{ij} \in (-1,1)$, $g_{ik} > -1$ for $(i,k) \in D$, and $q_{ik} = g_{ik}$ for $(i,k) \in D \cup (i,j)$, we have
\begin{align}
\label{eq:M_pos_val}
M(\Pi_{ik},I) &= \frac{1 + q_{ik}}{2} > 0\text{ for all }(i,k) \in D,\\
\notag
M(\Pi_{ij},I) &= \frac{1+ q_{ij}}{2} \in (0,1).
\end{align}
By the above and the definition of $M$, 
\begin{align}
\label{eqn_firstLappearance}
M(A,A) = L(A^2) = L(A) = M(A,I) > 0, \text{ for all }A \in \mathcal{S},
\end{align}
where $L(A^2)=L(A)$ since $\mathcal{S}$ consists of projectors in $\mathcal{O}_2$. Consider a rescaling of $M$:
\begin{align*}
\td{M}(A,B) \defeq \frac{M(A,B)}{\sqrt{M(A,I)M(B,I)}}, \text{ for all }A,B \in \mathcal{S}.
\end{align*}
We will need three properties of $\td{M}$:
\begin{itemize}
    \item[(i)] $\td{M}(A,A)=1$ for all $A \in \mathcal{S}$, since $M(A,A)=M(A,I)>0$,
    \item[(ii)] $\td{M} \succeq 0$: $\td{M} = DMD$, where $D$ is the diagonal matrix with $D(A,A) = 1/\sqrt{M(A,I)}$, and
    \item[(iii)] $|\td{M}(A,B)| \leq 1/2$ for all distinct $A,B \in \mathcal{S}\setminus \ofc{I}$: This is established in \cite[Lemma~4.3]{takahashi2023}.
\end{itemize}

Consider the blocks of $\td{M}$ as induced by the sets of rows and columns corresponding to each of the three sets in \cref{eq:def_S}:
\begin{equation*}
\td{M}=
\begin{bmatrix}
K & \begin{matrix} v^T \\ u^T \end{matrix} \\
\begin{matrix} v & u \end{matrix} & N
\end{bmatrix},
\end{equation*}
where $N$ is a $(d_i-1)$ by $(d_i-1)$ matrix, $u$ and $v$ are vectors, and 
\begin{gather*}
K \defeq
\begin{bmatrix}
1 & p\\
p & 1
\end{bmatrix},\text{ with }\\
p \defeq \frac{M(\Pi_{ij},I)}{\sqrt{M(\Pi_{ij},I)M(I,I)}} = \sqrt{M(\Pi_{ij},I)} \in (0,1).
\end{gather*}
Analogously to the definition of $p$, observe that, in conjunction with \cref{eq:M_pos_val},
\begin{equation}\label{eq:def_vv}
v^Tv = \sum_{(i,k) \in D} M(\Pi_{ik},I) = \sum_{(i,k) \in D} \frac{1+q_{ik}}{2} > 0.
\end{equation}
We will derive the desired inequality from the positivity of $\td{M}$ through the Schur complement with respect to $K$:
\begin{align}
\notag
S &\defeq N - \begin{bmatrix} v & u \end{bmatrix} K^{-1} \begin{bmatrix} v^T\\ u^T \end{bmatrix}\\
  \notag
  &= N - \frac{1}{1-p^2} \begin{bmatrix} v & u \end{bmatrix} \begin{bmatrix} 1 & -p\\-p & 1 \end{bmatrix} \begin{bmatrix} v^T\\ u^T \end{bmatrix}\\
  \label{eq:schur}
  &= N - \frac{1}{1-p^2}\of{vv^T - p\of{vu^T+uv^T} + uu^T}. 
\end{align}
The positivity of $\td{M}$ is equivalent to that of $S$ so that \cref{eq:schur} implies
\begin{gather}
0 \leq v^TSv = v^TNv - \frac{1}{1-p^2}\of{\alpha^4 -2p \alpha^3 \beta \cos \theta + \alpha^2 \beta^2 \cos^2 \theta},\,\text{with}\\
\notag
\alpha \defeq \lVert v \rVert_2,\, \beta \defeq \lVert u \rVert_2,\,\text{ and }u^Tv=\alpha \beta \cos\theta.
\end{gather}
Dividing by $\alpha^2 = v^Tv > 0$ (\cref{eq:def_vv}) yields
\begin{align}\label{eq:alpha_cons2}
    \frac{1}{1-p^2}\of{\alpha^2 -2p \alpha \beta \cos\theta + \beta^2 \cos^2\theta} &\leq \frac{v^TNv}{v^Tv} 
    \leq \lambdaMax{N}
    \leq \frac{d_i}{2},
\end{align}
where the last inequality follows by Gershgorin's circle theorem and properties (i) and (iii) above.

Property (iii) applied to the entries of $u$ gives
\begin{align} \label{eq:alpha_cons1}
    \beta^2 = u^Tu \leq \frac{d_i-1}{4}.
\end{align}

The LHS of \cref{lem:degree_moe} with respect to the values $q$ is $2\alpha^2 - (d_i-1)$ by \cref{eq:def_vv}, and we will obtain the desired result by bounding $\alpha$ subject to \cref{eq:alpha_cons1,eq:alpha_cons2}. Letting 
\begin{equation}\label{eq:gamma_def}
\gamma \defeq \beta \cos\theta \in \of{-\sqrt{\frac{d_i-1}{4}},\,\sqrt{\frac{d_i-1}{4}}},
\end{equation}
we see from \cref{eq:alpha_cons2}:
\begin{gather*}
\alpha^2 -(2p \gamma)\alpha + \of{\gamma^2 - (1-p^2)\frac{d_i}{2}} \leq 0,\,\text{so}\\
\begin{aligned}
p\gamma - \sqrt{(1-p^2)\of{\frac{d_i}{2}-\gamma^2}} \leq \alpha &\leq p\gamma + \sqrt{(1-p^2)\of{\frac{d_i}{2}-\gamma^2}}\\
  &= \begin{pmatrix}p & \sqrt{1-p^2}\end{pmatrix} \begin{pmatrix}\gamma & \sqrt{\frac{d_i}{2}-\gamma^2}\end{pmatrix}^T\\
  &= \sqrt{\frac{d_i}{2}} \begin{pmatrix}p & \sqrt{1-p^2}\end{pmatrix} \begin{pmatrix}\td{\gamma} & \sqrt{1-\td{\gamma}^2}\end{pmatrix}^T\\
  &= \sqrt{\frac{d_i}{2}} \cos \phi,
\end{aligned}
\end{gather*}
for some $\phi$, where $\td{\gamma} = \gamma \sqrt{2/d_i}$. From the above, we always have the trivial bound $\alpha \leq \sqrt{d_i/2}$. This gives
\begin{equation*}
2\alpha^2 - (d_1-1) \leq 1,
\end{equation*}
which establishes \cref{lem:degree_moe} when $q_{ij} < -1/d_i$. If $q_{ij} \geq -1/d_i$,
\begin{equation*}
p = \sqrt{\frac{1+q_{ij}}{2}} \geq \sqrt{\frac{d_i-1}{2d_i}} = \sqrt{\frac{d_i-1}{4}} \sqrt{\frac{2}{d_i}} \geq \gamma \sqrt{\frac{2}{d_i}} = \td{\gamma}. 
\end{equation*}
In this case $\cos\phi$ is maximized when $\gamma = \sqrt{(d_i-1)/4}$, giving the bound,
\begin{equation*}
\alpha \leq \sqrt{p^2\frac{d_i-1}{4}}+\sqrt{(1-p^2)\frac{d_i+1}{4}}. 
\end{equation*}
Finally, this establishes \cref{lem:degree_moe}:
\begin{align*}
2\alpha^2-(d_i-1) &\leq p^2\frac{d_i-1}{2} + \sqrt{p^2(1-p^2)(d_i^2-1)} + (1-p^2)\frac{d_i+1}{2} - (d_i-1)\\
  &= \frac{1}{2}\of{3-d_i-2p^2 + \sqrt{4p^2(1-p^2)(d_i^2-1)}}\\
  &= \frac{1}{2}\of{2-d_i-q_{ij} + \sqrt{(d_i^2-1)\of{1-q_{ij}^2}}}.\qedhere
\end{align*}
\end{proof}

\section{Proof of main result}
\label{apx:proof_main_result}
\subsection{Verifying properties of \texorpdfstring{$\Theta$ and $\Lambda$}{Theta and Lambda}}

To prove our approximation ratio, we first must verify that $\Theta \in \mathcal{C}$ and $\Lambda \in \mathcal{D}_{\Theta, \beta}$. We do this in the following two claims:

\begin{claim}\label{claim:theta_in_c}
    The function $\Theta$ defined by points \cref{eq:theta_points} is in the set $\mathcal{C}$ defined in \cref{def:theta_set}.
\end{claim}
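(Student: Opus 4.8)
The goal is to verify that the piecewise-linear function $\Theta$ with breakpoints
$\ofc{(0,0),\,(Q(\beta),\gamma),\,(\beta, t_\beta),\,(1, 2(1-\alpha'))}$ (where $t_\beta \approx 0.1913$ is the value given in \cref{eq:theta_points}) lies in $\mathcal{C}$, i.e.\ that it is monotonically increasing, vanishes at $0$, and satisfies the supermultiplicativity-type inequality \cref{eqn:theta_condition}. Monotonicity and $\Theta(0) = 0$ are immediate from inspecting the four data points (the $y$-coordinates $0 < \gamma < t_\beta < 2(1-\alpha')$ are increasing, and the $x$-coordinates $0 < Q(\beta) < \beta < 1$ are increasing for $\beta = 0.67$ since $Q(0.67) = 0.33 < 0.67$), so the substance is in \cref{eqn:theta_condition}.

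The plan is to reduce the $p$-fold inequality $\prod_{i=1}^p(1-\Theta(x_i)) \ge 1 - \Theta(\sum_i x_i)$ to its two-variable instance. The standard trick: if the inequality $(1-\Theta(x))(1-\Theta(y)) \ge 1 - \Theta(x+y)$ holds for all $x,y \ge 0$ with $x+y \le 1$, then by induction on $p$ — grouping $x_1,\dots,x_{p-1}$ into a single variable $s = \sum_{i<p}x_i$ and applying the inductive hypothesis $\prod_{i<p}(1-\Theta(x_i)) \ge 1 - \Theta(s)$, then the two-variable case to $s$ and $x_p$ — we get the general statement. (Monotonicity of $\Theta$ is used to ensure $1-\Theta(s) \ge 0$ so the inductive bound can be multiplied through by $1-\Theta(x_p) \ge 0$.) So it suffices to prove: for all $x,y \ge 0$ with $x + y \le 1$,
\begin{align*}
    \Theta(x) + \Theta(y) - \Theta(x)\Theta(y) \le \Theta(x+y).
\end{align*}

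For this two-variable inequality, the natural approach is to exploit concavity. I would first check that $\Theta$ is \emph{concave} on $[0,1]$ — this amounts to checking that the slopes of the three linear pieces are non-increasing, i.e.\ $\frac{\gamma}{Q(\beta)} \ge \frac{t_\beta - \gamma}{\beta - Q(\beta)} \ge \frac{2(1-\alpha') - t_\beta}{1-\beta}$, a finite numerical check with the given constants. Given concavity together with $\Theta(0)=0$, $\Theta$ is subadditive, and in fact one has the stronger pointwise bound $\Theta(x) \le \Theta(x+y)$ and $\Theta(y) \le \Theta(x+y)$ by monotonicity; combining, $\Theta(x)+\Theta(y) - \Theta(x)\Theta(y) \le \Theta(x+y)$ would follow if we can show $\Theta(x) + \Theta(y)(1-\Theta(x)) \le \Theta(x+y)$. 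Write $g = 1 - \Theta$; the claim is $g(x)g(y) \ge g(x+y)$, i.e.\ $\log g$ is superadditive on the relevant region. Since $g$ is convex, decreasing, and positive with $g(0) = 1$, one checks $h(t) \defeq \log g(t)$ is concave with $h(0) = 0$; a concave function through the origin is superadditive on $[0,1]$ provided it stays in the regime where $h$ is defined (i.e.\ $g > 0$ throughout, which holds since $\Theta \le 2(1-\alpha') < 1$). Concavity of $h = \log g$ follows from: $g$ convex and positive $\Rightarrow$ $\log g$ need not be concave in general, so here I would instead argue directly on each linear piece and at the breakpoints. Concretely, since $\Theta$ is piecewise linear, $g = 1-\Theta$ is piecewise linear and $\log g$ is piecewise-concave (log of an affine function is concave), and one only needs to verify concavity is preserved at the two interior breakpoints $Q(\beta)$ and $\beta$ by comparing left- and right-derivatives of $\log g$ there — again a finite numerical check. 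Superadditivity of a concave function vanishing at $0$ then gives $h(x) + h(y) \ge h(x+y)$ for $x+y\le 1$, which exponentiates to the two-variable inequality, completing the proof.

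\textbf{Main obstacle.} The conceptual reduction (induction $p \to 2$, then concavity $\Rightarrow$ superadditivity) is routine; the real work — and the only place something could fail — is the numerical verification that the three slopes of $\Theta$, and correspondingly the log-slopes of $1-\Theta$, are ordered the right way at the breakpoints $Q(\beta) = 0.33$ and $\beta = 0.67$ with the specific constants $\gamma = 0.049$, $t_\beta \approx 0.1913$, $\alpha' = 0.839511$. I expect these to hold with a comfortable margin (the parameters were chosen by numerical search precisely so that $\Theta \in \mathcal{C}$), but the inequality $\frac{\gamma}{Q(\beta)} \ge \frac{t_\beta-\gamma}{\beta - Q(\beta)}$ looks tight-ish and should be checked carefully; if concavity of $\Theta$ itself were to fail at a breakpoint, one would have to verify the weaker superadditivity of $\log(1-\Theta)$ directly rather than routing through concavity of $\Theta$.
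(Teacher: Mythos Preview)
Your overall strategy---show $\log(1-\Theta)$ is concave through the origin, then deduce the multiplicative inequality---is exactly the paper's approach. But you have the convexity of $\Theta$ backwards: $\Theta$ is \emph{convex}, not concave. With $\beta = 0.67$ one has $Q(\beta) \approx 0.308$, and the three slopes are approximately $0.16$, $0.39$, $0.39$, which are \emph{increasing}; your proposed check $\frac{\gamma}{Q(\beta)} \ge \frac{t_\beta-\gamma}{\beta-Q(\beta)}$ fails outright. This is not fatal---in fact it makes the argument cleaner: $\Theta$ convex gives $g = 1-\Theta$ concave, and $\log$ of a positive concave function is concave (by AM--GM, $g(tx+(1-t)y) \ge tg(x)+(1-t)g(y) \ge g(x)^t g(y)^{1-t}$), so $h = \log g$ is concave with $h(0)=0$ immediately. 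Your piecewise fallback is then unnecessary.

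Two smaller points. First, a terminology slip: concave through the origin yields \emph{sub}additivity $h(x)+h(y) \ge h(x+y)$, not superadditivity---but subadditivity is precisely the direction you need, so the argument is fine once the word is fixed. Second, the paper skips your reduction to $p=2$ and handles the full $p$-fold inequality directly: from concavity and $z(0)=0$ one gets $z(x_i) \ge \frac{x_i}{\sum_j x_j}\, z\bigl(\sum_j x_j\bigr)$ for each $i$, and summing over $i$ gives $\sum_i z(x_i) \ge z\bigl(\sum_i x_i\bigr)$ in one step. Your induction is correct too, just slightly longer.
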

\begin{proof}
It is easy to see that $\Theta(0) = 0$ and $\Theta$ is monotonically increasing. Note that $\Theta$ is convex by inspection: it is piecewise linear, with increasing slope
$$
\frac{\gamma - 0}{Q(\beta) - 0} 
\le 0.2 
\le \frac{\frac{(\gamma/2 + \alpha'(1+\beta)-1)^2 }{1-\gamma} - \gamma}{\beta - Q(\beta)}
\le 0.393 
\le 
\frac{2(1-\alpha') - \frac{(\gamma/2 + \alpha'(1+\beta)-1)^2 }{1-\gamma}}{1 - \beta}\,.
$$    
It remains to show that $\Theta$ satisfies \cref{eqn:theta_condition}; that is, for all $x_1,x_2,\ldots,x_p \in \ofb{0,1}$ satisfying $\sum_{i=1}^p x_i \le 1$, we have
\begin{align}
\label{eqnASDfoihjw}
    \prod_{i=1}^p \of{1-\Theta(x_i)} \ge 1- \Theta\of{\sum_{i=1}^p x_i}.
\end{align}
\begin{itemize}
    \item Suppose there exists a $j \in [p]$ such that $1 - \Theta(x_j) = 0$. Then the left-hand side of \cref{eqnASDfoihjw} is zero. The right-hand side of \cref{eqnASDfoihjw} is $1-\Theta(\sum_{i \in [p]} x_i) \leq 1-\Theta(x_j) = 0$, where we have used that $\Theta$ is an increasing function. So $\Theta$ satisfies \cref{eqn:theta_condition} in this case.
    \item Otherwise, $1-\Theta(x_i) > 0$ for all $i \in [p]$. Since $\Theta$ is convex and $\Theta(0) =0$, the function $z(x)  \defeq \log(1-\Theta(x))$ is concave and satisfies $z(0) = 0$. It therefore holds for all $i \in [p]$ that
    $$
    z(x_i) = z\mleft( \frac{x_i}{\sum_{j \in [p]} {x_j}} \cdot \sum_{j \in [p]} {x_j} +0 \mright) \geq \frac{x_i}{\sum_{j \in [p]} {x_j}} \cdot z(\sum_{j \in [p]} {x_j}) + z(0) = \frac{x_i}{\sum_{j \in [p]} {x_j}} z(\sum_{j \in [p]} {x_j})\,.
    $$
    Hence,
    \begin{align*}
    \log \prod_{i\in[p]} \left( 1\!-\! \Theta(x_i) \right)
    =
    \sum_{i \in [p]} z(x_i) 
    \geq \sum_{i \in [p]} \frac{x_i}{\sum_{j \in [p]} {x_j}} \cdot z\big(\sum_{j \in [p]}\! {x_j} \big) 
    = z\big(\!\sum_{j \in [p]}\! {x_j} \big)
    =
    \log( 1\!-\! \Theta\big(\!\sum_{i \in [p]}\! x_i \big) )\,.
    \end{align*}
    Taking the exponential of both sides, we conclude $\Theta$ satisfies \cref{eqn:theta_condition} in this case. \qedhere
\end{itemize}

\end{proof}
\begin{claim}\label{claim:lambda_in_d}
    The function
    \begin{align*}
        \Lambda(y) = \frac{\of{\frac{1}{2}\Theta\of{Q(y)}+\alpha'(1+y)-1}^2}{1-\Theta\of{Q(y)}},
        \end{align*}
    where $\Theta$ is as in \cref{eq:theta_points},
        is in the set $\mathcal{D}_{\Theta,\beta}$ defined in \cref{def:lambda_set}. Moreover,
    \begin{align*}
    f(x,y) = \sqrt{\big(1-\Lambda(y^+)\big) \big(1-\Theta\left( Q(y^+)-x^+
    \right)\big)}\,,
    \end{align*}
    is decreasing as a function of $y$ for all $y\in(\beta, R(x)]$ and $x \in[-1,R(\beta)]$, where $R$ is as in \cref{eq:r_def}.
\end{claim}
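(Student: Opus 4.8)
The goal is to verify the two properties in Claim~\ref{claim:lambda_in_d}: first, that $\Lambda \in \mathcal{D}_{\Theta,\beta}$, and second, that $f(x,y)$ is decreasing in $y$ on the relevant range. The plan is to handle the monotonicity claim first, since $\mathcal{D}_{\Theta,\beta}$ requires $f(x,y) \ge f^*(x) = f(x, R(x))$ for all $\beta \le y \le R(x)$, and if $f$ is decreasing in $y$ then its minimum over $y \in (\beta, R(x)]$ is attained at the right endpoint $y = R(x)$, which is exactly $f^*(x)$. So the decreasing property \emph{implies} the membership in $\mathcal{D}_{\Theta,\beta}$, modulo checking the endpoint cases and the edge case where the interval $(\beta, R(x)]$ might be empty (which happens when $R(x) \le \beta$, i.e. essentially when $x$ is close to $R(\beta)$; there the condition is vacuous). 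I would also note that since $x \in [-1, R(\beta)]$ and $R(\beta) < \beta$, we have $x^+ \le R(\beta) < \beta$, and for $y > \beta > 1/2$ we have $R(y) < R(\beta) \le \ldots$, so the various ``$+$'' truncations and the piecewise branches of $Q$ and $\Theta$ can be pinned down explicitly on this range.

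To show $f(x,y)$ is decreasing in $y$, I would take logarithms: $2\log f(x,y) = \log(1 - \Lambda(y^+)) + \log(1 - \Theta(Q(y^+) - x^+))$, and argue each summand is non-increasing in $y$. For the second term: $y \mapsto Q(y^+)$ is decreasing (on $y > \beta > 1/2$, $Q(y) = \tfrac12(\sqrt{3(1-y^2)} - y)$ on the middle branch, or $0$ on the top branch, both non-increasing), so $Q(y^+) - x^+$ is decreasing in $y$, and since $\Theta$ is increasing, $\Theta(Q(y^+)-x^+)$ is decreasing, hence $1 - \Theta(\cdots)$ is increasing — wait, that makes the second term \emph{increasing}. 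So the decrease must come from $1 - \Lambda(y^+)$ decreasing fast enough, i.e. $\Lambda(y^+)$ increasing fast enough, to dominate. This is the crux: I would substitute the explicit formula for $\Lambda$ from \cref{eq:lambda_choice}, writing $1 - \Lambda(y) = \tfrac{1 - \Theta(Q(y)) - (\tfrac12\Theta(Q(y)) + \alpha'(1+y) - 1)^2}{1 - \Theta(Q(y))}$, and combine with $1 - \Theta(Q(y) - x)$. Here I expect to use the defining relation of $\Lambda$ more cleverly: recall from the analysis that $\Lambda$ was chosen so that $r_2(g) \equiv \alpha'$, which after unwinding means $f(y, R(y))$-type quantities have a clean closed form. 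Concretely, $1 - \Lambda(y) = \big(\sqrt{1-\Theta(Q(y))} - \tfrac{\tfrac12\Theta(Q(y)) + \alpha'(1+y)-1}{\sqrt{1-\Theta(Q(y))}}\big)\big(\sqrt{1-\Theta(Q(y))} + \cdots\big)$; factoring this and tracking how $\alpha'(1+y)$ grows should reveal that $\sqrt{1-\Lambda(y)}$ decreases like $-\alpha' / \sqrt{1-\Theta(Q(y))}$ to leading order, which beats the mild increase of the other factor.

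The cleanest route, which I would try to push through, is to reduce to showing a single-variable inequality. Fix $x$ and define $\Phi(y) \defeq f(x,y)^2 = (1-\Lambda(y))(1 - \Theta(Q(y) - x))$ for $y \in (\beta, R(x)]$ (dropping the $+$'s since everything is non-negative on this range). Differentiate: $\Phi'(y) = -\Lambda'(y)(1 - \Theta(Q(y)-x)) - (1 - \Lambda(y))\Theta'(Q(y)-x) Q'(y)$. Since $Q'(y) < 0$ on the relevant middle branch of $Q$ (and $Q' = 0$ on the top branch), the second term is non-negative, so I need the first term to be negative and dominant, i.e. $\Lambda'(y) > 0$ and large enough. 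Because $\Theta$ is piecewise linear, $Q(y)$ lies in a predictable range as $y$ ranges over $(\beta, R(\beta)] \ni y$... actually $y > \beta$ forces $Q(y) \le Q(\beta)$, so $\Theta(Q(y))$ uses only the \emph{first} linear piece of $\Theta$ (slope $\gamma/Q(\beta)$), where $\Theta(t) = \tfrac{\gamma}{Q(\beta)} t$. This linearizes $\Lambda$ into an explicit rational function of $y$, making $\Lambda'(y)$ and the whole inequality $\Phi'(y) \le 0$ a concrete calculus exercise — polynomial inequality after clearing denominators, verifiable on $y \in (\beta, \sqrt3/2]$ (note $Q(y) = 0$ for $y \ge \sqrt3/2$, simplifying further).

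\textbf{Main obstacle.}
The hard part will be the explicit sign analysis of $\Phi'(y) \le 0$ after substituting the piecewise-linear $\Theta$ and the rational $\Lambda$: this produces a messy rational inequality in $y$ with the constant $\alpha' = 0.839511$ appearing, and one must be careful about the two sub-branches $Q(y) > 0$ (when $\beta < y < \sqrt3/2$) versus $Q(y) = 0$ (when $y \ge \sqrt3/2$), as well as confirming $Q(y) < Q(\beta)$ so only the first linear piece of $\Theta$ is active. I would also need to double-check the boundary/degenerate cases — when $R(x) \le \beta$ so the $y$-interval is empty or a point — and verify that the truncations $y^+ = y$, $x^+ = \max\{x,0\}$, and $Q(y^+) - x^+ \ge 0$ all hold on the stated domain so that no branch of $Q$ or $\Theta$ is mishandled. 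Once monotonicity is established, membership $\Lambda \in \mathcal{D}_{\Theta,\beta}$ follows immediately by evaluating at $y = R(x)$, together with the easy check that $\Lambda$ maps $[0,1]$ into $[0,1]$ (non-negativity is clear from the squared numerator; the upper bound $\Lambda \le 1$ is equivalent to $1 - \Lambda \ge 0$, i.e. $(\tfrac12\Theta(Q(y)) + \alpha'(1+y)-1)^2 \le 1 - \Theta(Q(y))$, which should hold with room to spare on the relevant range and can be checked at the endpoints using convexity).
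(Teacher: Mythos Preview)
Your overall strategy is sound and matches the paper's: establish monotonicity of $f(x,\cdot)$ on $(\beta, R(x)]$, then read off membership in $\mathcal{D}_{\Theta,\beta}$ by evaluating at the right endpoint. You also correctly pin down the relevant branches (for $y>\beta$ one has $Q(y)\le Q(\beta)$, so only the first linear piece $\Theta_1$ of $\Theta$ is active; the cases $y\in(\beta,\sqrt3/2]$ and $y\ge\sqrt3/2$ split naturally).

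Where your route diverges from the paper is in how to handle the two-variable sign analysis of $\Phi'(y)\le 0$. You plan to differentiate $\Phi(y)=(1-\Lambda(y))(1-\Theta(Q(y)-x))$ directly and clear denominators, leaving an inequality that still depends on the parameter $x$; you flag this as the main obstacle, and it is. The paper sidesteps it entirely with a factorization you did not spot: writing $N(y)\defeq(\tfrac12\Theta(Q(y))+\alpha'(1+y)-1)^2$ so that $\Lambda(y)=N(y)/(1-\Theta(Q(y)))$, one has
\[
f(x,y)^2=\bigl(1-\Theta(Q(y))-N(y)\bigr)\cdot\frac{1-\Theta(Q(y)-x)}{1-\Theta(Q(y))}.
\]
The first factor is independent of $x$ and is shown to be decreasing (and positive) on $(\beta,1]$ by a short derivative computation on each of the two $Q$-branches. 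The second factor is handled by the structural observation that $t\mapsto\frac{1-\Theta(t-a)}{1-\Theta(t)}$ is increasing for any fixed $a\ge0$ (equivalently, $\log(1-\Theta)$ is concave, which follows from the convexity of $\Theta$ established in Claim~\ref{claim:theta_in_c}); composing with the decreasing $Q$ makes the second factor decreasing in $y$. The product of two positive decreasing functions is decreasing, and the $x\le 0$ case is just the first factor alone. This decomposition completely dissolves the two-parameter difficulty you anticipated, and is worth incorporating: it replaces your brute-force rational inequality with two clean one-variable arguments, one numeric and one purely structural.
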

\begin{proof}
    Since $f$ is continuous, it decreasing in $y$ implies $\Lambda \in \mathcal{D}_{\Theta, \beta}$. Let us define 
    \begin{align*}
        N(y) \defeq \of{\frac{1}{2}\Theta\of{Q(y)}+\alpha'(1+y)-1}^2,
    \end{align*}
    so that $\Lambda(y) = \frac{N(y)}{  1- \Theta(Q(y)) }$.   We show that $ 1 - \Theta(Q(y))- N(y) $ is decreasing in the interval $y\in(\beta,1]$:
    \begin{itemize}
        \item First consider $y \in [\sqrt{3}/2, 1]$. Then $Q(y) = 0$, and so $N(y) = (\alpha'(1+y)-1)^2$, and so $1 - \Theta(Q(y)) - N(y) = 1 - (\alpha'(1+y)-1)^2$. Then the derivative of $1 - \Theta(Q(y)) - N(y)$ is $-N'(y)=-2\alpha'(\alpha'(1+y)-1)$, which is negative for $y > 0.19 >  1/\alpha' - 1$.
    \item Now consider $y \in (\beta, \sqrt{3}/2]$. Here, $Q(y) = \frac{1}{2}\left(\sqrt{3\left(1-y^{2}\right)}-y\right)$, and $\Theta(Q(y)) = Q(y) \cdot \frac{\gamma}{Q(\beta)}$. Then the derivative of $1 - \Theta(Q(y)) - N(y)$ is
    \begin{align*}
    -\Theta'(Q(y)) Q'(y) - 2 \sqrt{N(y)} \left( \frac{1}{2} \Theta'(Q(y)) Q'(y) + \alpha' \right)
    \\
    = - \frac{\gamma}{Q(\beta)} \left(1 + \sqrt{N(y)} \right) Q'(y) - 2 \alpha \sqrt{N(y)}.
    \end{align*}
    The derivative of $Q$ is decreasing and negative in this region, since
    $$
    Q'(y) = \frac{1}{2} \left(\frac{-2y\sqrt{3}}{2\sqrt{1-y^2}} -1\right) = -\frac{1}{2} \left(1 +  \frac{y\sqrt{3}}{\sqrt{1-y^2}} \right).
    $$
    So the derivative of $1 - \Theta(Q(y)) - N(y)$ is at most 
    $$
    -\frac{\gamma Q'(\sqrt{3}/2)}{Q(\beta)} \left(1 + \sqrt{N(y)} \right) - 2 \alpha \sqrt{N(y)} = \frac{2 \gamma}{Q(\beta)} + \sqrt{N(y)} \left( \frac{2 \gamma}{Q(\beta)} - 2 \alpha\right)\,.
    $$
    Since $N$ is increasing in this region, the expression above is monotonic in $y$. At $y = \beta$ it is $< -0.25$ and at $y = \sqrt{3}/2$ it is $<-0.45$. So the derivative  of $1 - \Theta(Q(y)) - N(y)$ is negative in this region.
    \end{itemize}
    We prove this claim in two steps. First, assume $x \le 0$.
    In this case, we have 
    \begin{align*}
        f(x,y)  = \sqrt{\left(1-\frac{N(y^+)}{1-\Theta(Q(y^+))} \right) \left( 1 -\Theta(Q(y^+))\right)}
        =\sqrt{1 - \Theta(Q(y^+)) - N(y^+)
        }\,,
    \end{align*}
    which is decreasing in $y$ for all $y \in (\beta, 1]$.

    From here, we assume $x > 0.$ We have that
    \begin{align*}
        f(x,y)^2 
        &= (1- \Lambda (y^+))(1 - \Theta(Q(y^+) - x^+) \\
        &= ({1- \Theta(Q(y^+ )) -N(y^+) })\cdot 
        \frac{1 - \Theta(Q(y^+) - x^+)}{1 - \Theta(Q(y^+))}.
    \end{align*}
    The first factor is decreasing for $y\in(\beta,1]$ and positive, since $1 - \Theta(Q(1)) - N(1) = 1 - (2\alpha' - 1)^2$ is $>0.5$.
    To see that the second factor is decreasing,
    consider that $Q$ is decreasing for $y \in [0,1]$ and
    \begin{align}
    \label{eq:fn_in_D_temp}
        \frac{1- \Theta(y -a)}{1- \Theta(y)},
    \end{align}
    is increasing for $y \in [a,1]$ (at any constant $a \in[0,1]$) for the following reasons:

    The function in \cref{eq:fn_in_D_temp} is constant if $a = 0$. Thus, assume that $a \in (0,1]$. Define the concave function $z(t) \defeq \log{\left(1-\Theta(t) \right)}$. Then
    \begin{align}
    \label{eq:fn_in_D_temp_2}
        \log{\frac{1- \Theta(t -a)}{1- \Theta(t)} } = z(t-a) - z(t) = -a \frac{z(t) - z(t-a)}{a}\,.
    \end{align}
    The fraction $\frac{z(t) - z(t-a)}{a}$ in \cref{eq:fn_in_D_temp_2} equals the slope of the line segment connecting the points $(t-a,z(t-a))$ and $(t,z(t))$. Note that $z$ is concave (because $\Theta$ is convex; see also the proof of \Cref{claim:theta_in_c}). So, this fraction $\frac{z(t) - z(t-a)}{a}$ is decreasing in $t$. Thus, $\log{\frac{1- \Theta(t -a)}{1- \Theta(t)} }$ is increasing in $t$, and therefore $\frac{1- \Theta(t -a)}{1- \Theta(t)}$ is also increasing in $t$.
\end{proof}

\subsection{Completing the proof}
We now finish the proof of the main result.
\begin{proof}[Proof of \Cref{thm:apx_ratio_formal}]
Consider the choice of $\Theta$, $\Lambda$, $\beta$, and $\gamma$ in the statement of \Cref{thm:apx_ratio_formal}. By \Cref{claim:theta_in_c}, $\Theta \in \mathcal{C}$, and by \Cref{claim:lambda_in_d}, we have $\Lambda \in \mathcal{D}_{\Theta, \beta}$ with our choice of $\Theta$ and $\beta > \frac{1}{2}$. As a result, we can invoke \Cref{claim:all_cases} and study the three expressions in \cref{eq:minimization_problem}. Below we show that these functions are always at least $\alpha' \ge 0.839511$. Code for all stated computational evaluations is publicly available\footnote{\url{https://github.com/jamessud/EPR_0.839511_approximation_ratio/blob/main/epr_8395_approx.ipynb}.}.

Our general strategy is as follows. Recall that $\Theta$ is piecewise linear; we denote the three pieces as $\Theta_1, \Theta_2, \Theta_3$ respectively. Also, $Q$ is a piecewise function. The functions $r_1$, $r_2$ and $r_3$ are all compositions of $Q$ and $\Theta$. We analyze these functions by breaking them down into pieces, whose domains are determined by the breakpoints of functions of $Q$ and $\Theta$. For instance, we can compute
\begin{align}
    \Theta\of{Q(g^+)} &=
    \begin{cases}
        \Theta_3\of{1}, & g \in \ofb{-1,0} \\
        \Theta_3\of{1-g}, & g\in \ofb{0,1-\beta} \\
        \Theta_2\of{1-g}, & g\in \ofb{1-\beta, \frac{1}{2}} \\
        \Theta_2\of{R\of{g}}, & g\in\ofb{\frac{1}{2},\beta}\\
        \Theta_1\of{R\of{g}}, & g\in\ofb{\beta, \frac{\sqrt{3}}{2}}\\
        0, & g\in\ofb{\frac{\sqrt{3}}{2}, 1}.\\
    \end{cases}\label{eq:t_q_pieces}
\end{align}
\paragraph{Case $r_1$:}
We use \cref{eq:t_q_pieces} and the pieces of $\Theta$ to write $r_1$ as
\begin{align*}
    r_1(g) &\defeq \begin{cases}
        r_{1a}(g), & g\in\ofb{0,Q(\beta)},\\
        r_{1b}(g), & g\in\ofb{Q(\beta),1-\beta},\\
        r_{1c}(g), & g\in\ofb{1-\beta,\frac{1}{2}},\\
        r_{1d}(g), & g\in\ofb{\frac{1}{2},\beta},\\
    \end{cases}\\
    r_{1a}(g) &\defeq \frac{2-\Theta_3(1-g)+2\sqrt{\Theta_1(g)\,(1-\Theta_3(1-g))}}{2(1+g)},\\
    r_{1b}(g) &\defeq \frac{2-\Theta_3(1-g)+2\sqrt{\Theta_2(g)\,(1-\Theta_3(1-g))}}{2(1+g)}  \\
    r_{1c}(g) &\defeq \frac{2-\Theta_2(1-g)+2\sqrt{\Theta_2(g)\,(1-\Theta_2(1-g))}}{2(1+g)}  \\
    r_{1d}(g) &\defeq \frac{2-\Theta_2(R(g))+2\sqrt{\Theta_2(g)\,(1-\Theta_2(R(g)))}}{2(1+g)}.
\end{align*}
We now show that each piece is at least $\alpha'$. We begin with $r_{1a}$ through $r_{1c}$. We will show that each of these has one critical point in $g \ge 0$. We analyze $r_{1a}$; the analysis for $r_{1b}$ and $r_{1c}$ are nearly identical. To find the critical points consider the function
\begin{align*}
    r(g) = \frac{N(g)}{2(1+g)},
\end{align*}
then the derivative is
\begin{align*}
    r'(g) = \frac{(1+g)N'(g)-N(g)}{2(1+g)^2}.
\end{align*}
So critical points occur when 
\begin{align}
    (1+g)N'(g)=N(g). \label{eq:critical_points}
\end{align}
Let $h_a(g) \defeq \Theta_1(g)(1 - \Theta_3(1-g))$. Then when $h_a(g) > 0$, the critical points are found by plugging the numerator of $r_{1a}$ into \cref{eq:critical_points}
$$
(1+g) \left( \Theta_3'(1-g) + \frac{h_a'(g)}{\sqrt{h_a(g)}} \right) = 2 - \Theta_3(1-g) + 2 \sqrt{h_a(g)}\,.
$$
We can multiply both sides by $\sqrt{h_a(g)}$ and rearrange terms:
$$
 (1+g) h_a'(g) -2 h_a(g) = \left(2 - \Theta_3(1-g) - (1+g) \Theta_3'(1-g)\right) \sqrt{h_a(g)}.
$$
Since $h_a$ is a quadratic, the expression $(1+g) h_a'(g)- 2 h_a(g)$ is linear in $g$. Similarly, since $\Theta_3$ is linear, the expression $\left(2 - \Theta_3(1-g) - (1+g) \Theta_3'(1-g)\right)$ is constant. By squaring both sides, we see this equation is zero when two different quadratics are equal, which can occur at most twice.

It is easy to verify that $h_a$ is a positive quadratic with a root in $g < 0$ and a root at $g = 0$. So $h_a(g) > 0$ for $g > 0$. By inspection, we observe that $r_{1a}'(g)= 0$ at $g \approx -1.76$ and $g \approx 0.06$. By the above, there are no other critical points. By inspection, $r_{1a}'(0.05) > 0$. So the minimum value of $r_{1a}$ in $0 \le g \le Q(\beta)$ occurs at an endpoint, which take values $\alpha'$ and $>0.839529$, respectively.  

We may repeat this analysis for $h_b \defeq \Theta_2(g) (1 - \Theta_3(1-g))$ and $h_c  \defeq \Theta_2(g) (1 - \Theta_2(1-g))$, which implies that $r_{1b}$ and $r_{1c}$ each have at most two critical points.
\begin{itemize}
\item It is easy to verify that $h_b$ is a positive quadratic with a root in $g < 0$ and a root in $0 < g < 0.19 < Q(\beta)$.  So $h_b(g) > 0$ for $g \ge Q(\beta)$. By inspection, we observe that $r'_{1b}(g) = 0$ at $g \approx -1.7$ and $g \approx 0.427$. By the above, there are no other critical points. So the minimum value of $r_{1b}$ in $Q(\beta) \le g \le 1 - \beta$ occurs at an endpoint, which take values $> 0.839529$ and $> 0.842$, respectively.
\item It is easy to verify that $h_c$ is a positive quadratic with a root in $g < 0$ and a root in $0 < g < 0.19 < 1-\beta$.  So $h_c(g) > 0$ for $g \ge 1-\beta$. By inspection, we observe that $r'_{1c}(g) = 0$ at $g \approx -1.7$ and $g \approx 0.427$, and $r_{1c}'(0.4) > 0$. By the above, there are no other critical points. So the minimum value of $r_{1c}$ in $1-\beta \le g \le \frac{1}{2}$ occurs at an endpoint, which take values $> 0.842$ and $> 0.845$, respectively.
\end{itemize}

We now consider the last piece $r_{1d}$. Let $h(g) \defeq \Theta_2(g)(1 - \Theta_2(R(g)))$.  The derivative of $r_{1d}$ is
$$
r_{1d}'(g) = \frac{(1+g) \left( -\Theta_2'(R(g))R'(g) + h'(g)/\sqrt{h(g)} \right) - 2 + \Theta_2(R(g)) - 2 \sqrt{h(g)}}{2(1+g)^2}\,.
$$
We will show $r_{1d}'(g) < 0$ in the region $\frac{1}{2} \le g \le \beta$. The numerator has the form $(1+g)f'(g) - f(g)$. The derivative of this has the form $(1+g)f''(g)$, i.e.
\begin{align}
\label{eqn:bound_deriv}
(1+g) \left( -\Theta_2'(R(g))R''(g) + \frac{h''(g)}{\sqrt{h(g)}} - \frac{h'(g)^2}{2h(g)^{3/2}} \right)\,.
\end{align}
We use crude bounds to show this value is not too large in the region $\frac{1}{2} \le g \le \beta$, and then apply Taylor's remainder theorem. Starting with $R$ and its derivatives:
\begin{align*}
R'(g) &= - \frac{\sqrt{3}}{2} \frac{x}{\sqrt{1 -  x^{2}}} - \frac{1}{2}\\
R''(g) & = -\frac{\sqrt{3}}{2} (1-g^2)^{-3/2}
\end{align*}
So $R(g) \in [0.3, 0.5]$, $|R'(g)| \in [1, 2]$, $|R''(g)| \in [1, 2.5]$.
Next we bound $h$:
\begin{align*}
h'(g) &= \Theta_2' \cdot (1 - \Theta_2(R(g)) - \Theta_2(g) R'(g)) \\
h''(g) &= -\Theta_2' \cdot (2 \Theta_2' R'(g)  + \Theta_2(g) R''(g))
\end{align*}
(Recall that $\Theta_2'$ is a constant.) So $\Theta_2(g) \in [0.1, 0.2]$, $\Theta_2(R(g)) \in [0, 0.2]$, $|\Theta_2'| \le 0.4$, and $h(g) \in [0.08, 0.2]$. Then $|h'(g)| \le 0.6$, and $|h''(g)| \le 1.5$.
Then the value in \cref{eqn:bound_deriv} is at most 
$$
2 \cdot \left(1 + \frac{1.5}{\sqrt{0.08}} + \frac{0.6^2}{2 \cdot 0.08^{3/2}}\right) \le 30\,.
$$
We invoke Taylor's remainder theorem on $n(g) \defeq r'_{1d}(g) \cdot 2(1+g)^2$. For any points $\frac{1}{2} \le a, g\le \beta$, we have $|n(a) - n(g)| \le n'(a) \cdot |a-g| \le 30 \cdot |a-g|$. We computationally evaluate $n$ on 300 equally spaced points in the interval $[\frac{1}{2}, \beta]$; by Taylor's remainder theorem, $n$ is at most $30 \cdot \frac{\beta - 0.5}{300} < 0.1$ plus the largest evaluation. All evaluations are $< -0.11$, so $n$ is negative in the region. So $r'_{1d}(g) < 0$ in the region.
The smallest value of $r_{1d}$ thus occurs at $g = \beta$, where 
$$
r_{1d}(\beta) = \frac{2 - \Theta(R(\beta)) + 2 \sqrt{\Theta(\beta)(1 - \Theta(R(\beta))}}{2(1+\beta)} = \frac{2 - \gamma + 2 (\gamma/2 + \alpha'(1+\beta)-1)}{2(1+\beta)} = \alpha'\,.
$$

\paragraph{Case $r_2$:} This expression is always equal to $\alpha'$ when $g\ge \beta$ by definition of $\Lambda$: 
    \begin{align*}
        \frac{2 - \Theta(Q(g)) + 2\sqrt{\Lambda(g)(1 - \Theta(Q(g)))}}{2(1+g)} = \frac{2 - \Theta(Q(g)) + 2 \cdot \left( \frac{1}{2} \Theta(Q(g)) + \alpha'(1+g) - 1 \right)}{2(1+g)} = \alpha'\,.
    \end{align*}

\paragraph{Case $r_3$:}
For $r_3$, we must consider the pieces of expressions containing $f^*$, which depends on $\Lambda\of{R(\cdot)^+}$, which in turn depends on the pieces of $Q$ and $\Theta$. We build up the pieces of these expressions, starting with
\begin{align*}
    Q\of{R(g)^+} = 
    \begin{cases}
        -R(-g), &g\in\ofb{-1,-\frac{\sqrt{3}}{2}},\\
        0, &x\in\ofb{-\frac{\sqrt{3}}{2},0},\\
        g, &g\in\ofb{0,\frac{1}{2}},\\
        1-R(g), &g\in\ofb{\frac{1}{2},\frac{\sqrt{3}}{2}},\\
        1, &g\in\ofb{\frac{\sqrt{3}}{2},1}.\\
    \end{cases}
\end{align*}
Here, we used the simplifications $R(R(g))=g$ when $g\ge 0$ and $R(R(g)) = -R(-g)$ when $g < -1/2$. We then compute the pieces
\begin{align}
    \Theta\of{Q\of{R(g)^+}} &= 
    \begin{cases}
        \Theta_2\of{-R(-g)}, &g\in\ofb{-1,-\delta_1},\\
        \Theta_1\of{-R(-g)}, &g\in\ofb{-\delta_1,-\frac{\sqrt{3}}{2}},\\
        0, & g\in\ofb{-\frac{\sqrt{3}}{2}, 0}, \\
        \Theta_1\of{g}, &g\in\ofb{\,0,Q(\beta)},\\
        \Theta_2\of{g}, &g\in\ofb{Q(\beta), \frac{1}{2}},\\
        \Theta_2\of{1-R(g)}, &g\in\ofb{\frac{1}{2},\delta_2},\\
        \Theta_3\of{1-R(g)}, &x\in\ofb{\delta_2,\frac{\sqrt{3}}{2}},\\
        \Theta_3\of{1}, &g\in\ofb{\frac{\sqrt{3}}{2},1},\\
    \end{cases} \label{eq:t_q_r_pieces} \\
    \delta_1 &\defeq \beta + R(\beta) \approx 0.9779, \nonumber\\
    \delta_2 &\defeq R(1-\beta) \approx 0.6525 \nonumber.
\end{align}
We likewise expand the expressions 
\begin{align*}
        Q\of{R(g)^+}-g^+ &= 
    \begin{cases}
        -R(-g), &g\in\ofb{-1,-\frac{\sqrt{3}}{2}},\\
        0, &g\in\ofb{-\frac{\sqrt{3}}{2},\frac{1}{2}},\\
        1-R(g)-g, &g\in\ofb{\frac{1}{2},\frac{\sqrt{3}}{2}},\\
        1-g, &g\in\ofb{\frac{\sqrt{3}}{2},1}.\\
    \end{cases} \\
    \Theta\of{Q\!\of{R(g)^+}-g^+} &= 
    \begin{cases}
        \Theta_2\of{-R(-g)}, &g\in\ofb{-1,-\delta_1},\\
        \Theta_1\of{-R(-g))}, &g\in\ofb{-\delta_1,-\frac{\sqrt{3}}{2}},\\
        0, & g\in\ofb{-\frac{\sqrt{3}}{2}, \frac{1}{2}}, \\
        \Theta_1\of{1-R(g)-g}, &g\in\ofb{\frac{1}{2},\frac{\sqrt{3}}{2}},\\
        \Theta_1\of{1-g}, &g\in\ofb{\frac{\sqrt{3}}{2},1}.
    \end{cases}
\end{align*}
Now consider $\Lambda$. The pieces of $\Lambda$ are fully determined by the pieces of $\Theta\of{Q\of{g}}$. In Case $r_3$, we only ever use $\Lambda\of{R\of{\cdot}^+}$, so it suffices to consider the pieces of $\Theta\of{Q\of{R\of{\cdot}^+}}$. These pieces are exactly given by \cref{eq:t_q_r_pieces}. The final expression that appears is $f^*$. This function depends on both $\Lambda\of{R\of{\cdot}^+}$ and $\Theta\of{Q\!\of{R(g)^+}-g^+}$. We just argued that the pieces of $\Lambda\of{R\of{g}^+}$ are given by \cref{eq:t_q_r_pieces} and we note that the breakpoints defining $\Theta\of{Q\!\of{R(g)^+}-g^+}$ are all contained in the breakpoints of  \cref{eq:t_q_r_pieces}. Thus, the pieces of \cref{eq:t_q_r_pieces} specify all the pieces of $f^*$ as well. 

The remaining function that arises in case $r_3$ is $\Theta\of{g^+}$. The breakpoints of this are by definition $0$, $Q(\beta)=R(\beta)$, $\beta$, and $1$. This introduces the new breakpoint $\beta$. However, note that we only need to consider case $r_3$ in the domain $-1 < g \le R(\beta)$. We can see by inspection that $\beta$ falls outside this domain. In fact, this domain corresponds to the first four pieces of \cref{eq:t_q_r_pieces}, We refer to these regions as $a\defeq (-1,-\delta_1]$, $b\defeq[-\delta_1, -\sqrt{3}/2]$, $c\defeq[-\sqrt{3}/2, 0]$, $d\defeq [0, Q(\beta)]$.

It will be helpful to name the pieces of $f^*$ in the regions $a, b, c, d$. We label the functions by the subscripts corresponding to regions:
\begin{align*}
        f^*(g) &= \begin{cases}
        f^*_a(g), &g\in\ofb{-1,-\delta_1}\,,\\
        f^*_b(g), &g\in\ofb{-\delta_1,-\frac{\sqrt{3}}{2}}\,,\\
        f^*_c(g), & g\in\ofb{-\frac{\sqrt{3}}{2}, 0}\,, \\
        f^*_d(g), &g\in\ofb{\,0,R(\beta)}\,,
    \end{cases} \\
    f^*_a(g) &\defeq \sqrt{\of{1 - \Theta_2\of{-R(-g)}} - \left(\frac{1}{2} \cdot \Theta_2(-R(-g)) + \alpha'(1+R(g))-1\right)^2}\,,\\
    f^*_b(g) &\defeq \sqrt{\of{1 - \Theta_1\of{-R(-g)}} - \left(\frac{1}{2} \cdot \Theta_1(-R(-g)) + \alpha'(1+R(g))-1\right)^2}\,,\\
    f^*_c(g) &\defeq \sqrt{1 - \left( \alpha'(1+R(g))-1\right)^2}\,,\\
    f^*_d(g) &\defeq \sqrt{1 - \frac{\left(\frac{1}{2} \cdot \Theta_1(g) + \alpha'(1+R(g))-1\right)^2}{1 - \Theta_1(g)}}\,.
\end{align*}
With this notation we now perform the analysis for each case over these four pieces for $r_3$ 
\begin{align*}
    r_{3a}(g) &\defeq \frac{1 + f_a^*(g)^2}{2(1+g)}, \\
    r_{3b}(g) &\defeq \frac{1 + f_b^*(g)^2}{2(1+g)}, \\
    r_{3c}(g) &\defeq \frac{1 + f_c^*(g)^2}{2(1+g)}, \\
    r_{3d}(g) &\defeq \frac{1 + f_d^*(g)^2 + 2 \sqrt{\Theta_1(g)} f_d^*(g)}{2(1+g)}.
\end{align*}
We start with $r_{3a}$ and $r_{3b}$. Both $f_a^*$ and $f_b^*$ have the form $\sqrt{1 - x - (\frac{x}{2} + m)^2}$. This is the square root of a negative quadratic in $x$ with midpoint $-2(m+1)$. In the region $g \in (-1, -\frac{\sqrt{3}}{2}]$, $R(g) \ge \frac{1}{2}$, and so $m+1 > 0$; so both $f_a^*$ and $f_b^*$ are decreasing in $\Theta(-R(-g))$. So we may lower bound $f_a^*$ and $f_b^*$ with an upper bound of $\Theta(-R(-g)) < 0.2$:
\begin{align}
    f_a^*(g) , f_b^*(g) \ge \sqrt{0.8 - (1.5\alpha'-0.6)^2} > 0.6\,\label{eq:f_star_lower_bound}.
\end{align}
So $r_{3a}(g), r_{3b}(g) \ge \frac{1 + 0.6^2}{2(1-\sqrt{3}/2)} > 1 > \alpha'$.

We now handle $r_{3c}$. This simplifies to
$$
r_{3c}(g) = \frac{2 - (\alpha'(1+R(g)) - 1)^2}{2(1+g)}\,.
$$
By inspection, $r_{3c}(0) > \alpha'$. We now show the expression is decreasing in the region $g \in [-\frac{\sqrt{3}}{2}, 0]$, and so greater than $\alpha'$ in the region. The derivative is
$$
r_{3c}'(g) = \frac{1}{2(1+g)^2} \cdot \left(-2\alpha' R'(g)(1+g)(\alpha'(1+R(g)) - 1) - 2 + (\alpha'(1+R(g)) - 1)^2 \right),
$$
Since $0 \le R(g) \le 1$ in this region, $(\alpha'(1 + R(g))-1)^2 \le (2\alpha'-1)^2 < 0.5$. So the derivative is negative if 
$$
|R'(g)(1+g)| < 1.5/\alpha'\,.
$$
Recall that 
\begin{align*}
    R'(g) = -\frac{1}{2} - \frac{\sqrt{3}g}{2\sqrt{1-g^2}},
\end{align*}
which is decreasing. Then  $|R'(g)(1+g)| \le \max\{|R'(0)|, |R'(-\frac{\sqrt{3}}{2})|\}$, which equals $1  < 1.5/\alpha'$. So $r_{3c}$ is decreasing, and so takes value at least $r_{3c}(0) > \alpha'$.

We finally consider $r_{3d}$. Our strategy is to show that the function is increasing in the domain $\ofb{0,\delta}$, for some small $\delta$. Thus, it is minimized in this domain at $g=0$. For the rest of the domain $d$ we bound the derivative and apply Taylor's remainder's theorem. We first offer crude bounds on $f_d^*$ and its derivative. In the region $g \in [0, Q(\beta)]$, $R(g) \in [\beta,  \frac{\sqrt{3}}{2}]$ and $\Theta_1(g) \in [0, 0.05]$. So $\left(\frac{1}{2}\cdot\Theta_{1}(g)+\alpha'(1+R(g))-1\right)^2  \in [(\alpha'(1+\beta)-1)^2, (0.025+\alpha'(1+\frac{\sqrt{3}}{2})-1)^2] \subseteq[0.16, 0.35]$. So $f_d^* \in [\sqrt{1 - 0.16}, \sqrt{1 - \frac{0.35}{0.95}}] \subseteq [0.79, 0.92]$.

The derivative $f_d^{*\,\prime}$ takes value
\begin{align*}
    f_d^{*\,\prime} &= \frac{1}{2 f_d^*} \cdot \frac{ -1}{(1-\Theta_1(g))^2} \cdot &\Big[ 2(1-\Theta_1(g))\left(\frac{1}{2}\cdot\Theta_{1}(g)+\alpha'(1+R(g))-1\right)\of{\frac{1}{2}\cdot\Theta_{1}' + \alpha' R'(g)} \\
    & &+ \Theta_1' \cdot \left(\frac{1}{2}\cdot\Theta_{1}(g) +\alpha'(1+R(g))-1\right)^2 \Big].
\end{align*}
The expression in the bracket is negative, since $2 \cdot 0.95 \cdot \sqrt{0.16} \cdot (0.025 + \alpha' \cdot (-0.5)) < -0.3$, and $\Theta_1' \in [0.15, 0.16]$, so $0.16 \cdot 0.35 = 0.056$. So $f_d^{*\,\prime} >  \frac{1}{2 \cdot 0.92} \cdot \frac{1}{0.95^2} \cdot (0.3 - 0.056) > 0.14$. A similar calculation shows $f_d^{*\,\prime} < \frac{-1}{2 \cdot 0.79} \cdot ( 2 \cdot \sqrt{0.35} \cdot (-\alpha') + 0.15 \cdot 0.16) < 0.62$. So $f_d^{*\,\prime} \in [0.14, 0.62]$.

The derivative of $r_{3d}$ is $0.5(1+g)^{-2}$ times the quantity
\begin{align*}
  (1+g)\cdot\of{2 f_d^{*\,\prime}(g)\Big(f_d^*(g) + \sqrt{\Theta_1(g)}\Big) + \frac{\Theta_1'(g)}{\sqrt{\Theta_1(g)}}\, f_d^*(g)} 
    - \left( 1 + f_d^*(g)^2 + 2\sqrt{\Theta_1(g)} f_d^*(g) \right)\,.
\end{align*}
Using the bounds on the ranges of $f^*$ and $\Theta_1$, as well as their derivatives, this quantity is at least
$$
2 \cdot 0.14 \cdot 0.79 + 0.15 \cdot \frac{0.79}{ \sqrt{\Theta_1(g)}} - (1 + 0.92^2 + 2 \cdot 0.92 \cdot \sqrt{0.05}) > \frac{0.1185}{\sqrt{\Theta_1(g)}} - 2.04\,.
$$
Recall that $\Theta_1(g) < 0.16g$. So when $g < 0.02$, the quantity is positive, and $r_{3d}$ is increasing. So $r_{3d}(g) > r_{3d}(0) > \alpha'$ for all $g \in [0, 0.02)$.

When $g \in [0.02, Q(\beta)]$, we invoke Taylor's remainder theorem. First, note that the derivative $r_{3d}'(g)$ takes value at most
$$
2 \cdot 0.62 \cdot 0.92 + 0.16 \cdot \frac{0.92}{ \sqrt{\Theta_1(g)}} + (1 + 0.92^2 + 2 \cdot 0.92 \cdot \sqrt{0.05}) < \frac{0.1472}{\sqrt{\Theta_1(g)}} + 3.4\,.
$$
When $g \in [0.02, Q(\beta)]$, $\Theta_1(g) > 0.003$, and so 
$|r_{3d}'(g)|< 6.1$.  So, for any points $0.02 \le a, g\le Q(\beta)$, we have $|r_{4d}(a) - r_{4d}(g)| \le r_{3d}'(a) \cdot |a-g| \le 6.1 \cdot |a-g|$. We computationally evaluate $r_{3d}$ on 1000 equally spaced points in the interval $[0.02, Q(\beta)]$; by Taylor's remainder theorem, $r_{3d}$ is at least the smallest evaluation minus $6.1 \cdot \frac{Q(\beta)-0.02}{1000} < 0.002$. All evaluations are $> 0.842$, so $r_{3d}(g) > 0.840 > \alpha'=0.839511$ in this region.
\end{proof}

\clearpage
\newpage

\section{Notation}
\label{apx:notation}
For ease of reference, we summarize our notation in the following \cref{tab:notation}.
\begin{table}[ht]
\centering
\begin{tabular}{|c|c|c|}
\hline
\textbf{Value}    & \textbf{Expression} & \textbf{Meaning} \\
\hline
$\alpha'$ &    $0.839511$        &   Target approximation ratio      \\ 
\hline
   $\beta$      &    $0.67$        &    Hyperparameter     \\
   \hline
    $\gamma$     &     $0.049$       &      Hyperparameter \\ 
    \hline 
    $R(x)$     &    $\vphantom{\bigg(}\frac{1}{2}\left( \sqrt{3\left(1-x^2\right)} - x \right)$       &  MoE bound (for $g$) from \Cref{claim:moe_lp24} \\ 
    \hline
    $Q(x)$     &   $\begin{cases} 1-x, \quad x \in [0,\frac{1}{2}]\\ R(x), \quad\, x \in (\frac{1}{2}, \frac{\sqrt{3}}{2}] \\ 0, \quad\quad\;\;\, x \in (\frac{\sqrt{3}}{2}, 1]\end{cases}$       &      \makecell{MoE bound (for $g^+$) \\ from \Cref{cor:moe}} \\  
    \hline
    $\Theta(x)$     &    \makecell{Piecewise linear function through \\ $\{(0,0), (Q(\beta), \gamma), (\beta, \frac{(\gamma/2 + \alpha'(1+\beta)-1)^2}{1-\gamma}), (1, 2-2\alpha') \}$ \\
    (we denote the lines as $\Theta_1(x)$, $\Theta_2(x)$, and $\Theta_3(x)$)
    }      
    &     \makecell{Computes $\sin^2 \theta$ from SDP value \\ (when $g\le \beta$)} \\ 
    \hline
    $\Lambda(x)$     &    $\frac{\left(\frac{1}{2} \cdot \Theta(Q(x)) + \alpha'(1+x)-1\right)^2}{1 - \Theta(Q(x))}$      &     \makecell{Computes $\sin^2 \theta$ from SDP value \\ (when $g >\beta$)} \\ 
    \hline
    $\nu(g)$ & $ \begin{cases}
        \arcsin \sqrt{\Theta(g^+)}, & \text{if}\; x \le \beta \\
       \arcsin \sqrt{\Lambda(g^+)}, & \text{if} \;x > \beta
    \end{cases}$ & Computes angle $\theta$ from SDP value $g$ \\  
    \hline
    $f^*(x)$ & $\vphantom{\Big\{}\sqrt{(1 - \Lambda(R(x)^+))(1 - \Theta(Q(R(x)^+) - x^+))}$ & Function used in analysis \\
    \hline
    $r_1(g)$ & 
    $\frac{2-\Theta(Q(g^+))+2\sqrt{\Theta(g^+)\,(1-\Theta(Q(g^+)))}}{2(1+g)}$
    & \makecell{Expression in \Cref{claim:all_cases}  \\ (minimized over $g \in [0, \beta]$)} \\
    \hline
    $r_2(g)$  & 
    $\frac{2-\Theta(Q(g^+))+2\sqrt{\Lambda(g^+)\,(1-\Theta(Q(g^+)))}}{2(1+g)}$
    &  \makecell{Expression in \Cref{claim:all_cases}  \\ (minimized over $g \in [\beta, 1]$)}\\
    \hline
    $r_3(g)$ &
    $\frac{\,1 + {f^*(g)}^2 \;+\; 2\sqrt{\Theta(g^+)}f^*(g)}{2(1+g)}$
    &  \makecell{Expression in \Cref{claim:all_cases} \\ (minimized over $g \in (-1, R(\beta)]$)} \\
    \hline
\end{tabular}
\caption{Table of important variables and functions, and their use in this paper.}
\label{tab:notation}
\end{table}

\end{document}